\DeclareMathAlphabet{\mathcal}{OMS}{ntxm}{m}{n}
\let\mathbb\relax
\let\mathbb\mathds
\renewcommand{\paragraph}{%
  \@startsection{paragraph}{4}%
  {\z@}{2.25ex \@plus 1ex \@minus .2ex}{-1em}%
  {\normalfont\normalsize\bfseries}%
}
\definecolor{linkblue}{HTML}{001487}
\newtheorem{theorem}{Theorem}[section]
\newtheorem*{theorem*}{Theorem}
\newtheorem{proposition}[theorem]{Proposition}
\newtheorem{conjecture}[theorem]{Conjecture}
\newtheorem{lemma}[theorem]{Lemma}
\newtheorem{claim}[theorem]{Claim}
\newtheorem{fact}[theorem]{Fact}
\newtheorem{corollary}[theorem]{Corollary}
\theoremstyle{remark}
\newtheorem{remark}[theorem]{Remark}
\theoremstyle{definition}
\newtheorem{definition}[theorem]{Definition}
\numberwithin{equation}{section}
\newcommand\numberthis{\addtocounter{equation}{1}\tag{\theequation}}
\newtheorem{assumption}[theorem]{Assumption}
\newcommand{\setft}[1]{\textnormal{#1}}
\newcommand{\eps}{\epsilon}
\newcommand{\1}{\mathbb{1}}
\newcommand{\id}{\setft{id}}
\newcommand{\N}{\ensuremath{\mathds{N}}}
\newcommand{\R}{\ensuremath{\mathds{R}}}
\newcommand{\bits}{\ensuremath{\{0, 1\}}}
\newcommand{\ot}{\ensuremath{\otimes}}
\newcommand{\tr}[1]{\mathrm{Tr}\!\left[ #1 \right]}
\newcommand{\Tr}{\mathrm{Tr}}
\newcommand{\pr}[1]{\Pr\!\left[ #1 \right]}
\newcommand{\prs}[2]{\Pr_{#1}\!\left[ #2 \right]}
\DeclareMathOperator{\poly}{poly}
\DeclareMathOperator*{\E}{\mathds{E}}
\newcommand{\ket}[1]{|#1\rangle}
\newcommand{\bra}[1]{\langle#1|}
\DeclarePairedDelimiterX\braket[2]{\langle}{\rangle}{#1 \delimsize\vert #2}
\newcommand{\cF}{\ensuremath{\mathcal{F}}}
\DeclareSymbolFont{greekletters}{OML}{ntxmi}{m}{it}
\DeclareMathSymbol{\alpha}{\mathord}{greekletters}{"0B}
\DeclareMathSymbol{\beta}{\mathord}{greekletters}{"0C}
\DeclareMathSymbol{\gamma}{\mathord}{greekletters}{"0D}
\DeclareMathSymbol{\delta}{\mathord}{greekletters}{"0E}
\DeclareMathSymbol{\epsilon}{\mathord}{greekletters}{"0F}
\DeclareMathSymbol{\zeta}{\mathord}{greekletters}{"10}
\DeclareMathSymbol{\eta}{\mathord}{greekletters}{"11}
\DeclareMathSymbol{\theta}{\mathord}{greekletters}{"12}
\DeclareMathSymbol{\iota}{\mathord}{greekletters}{"13}
\DeclareMathSymbol{\kappa}{\mathord}{greekletters}{"14}
\DeclareMathSymbol{\lambda}{\mathord}{greekletters}{"15}
\DeclareMathSymbol{\mu}{\mathord}{greekletters}{"16}
\DeclareMathSymbol{\nu}{\mathord}{greekletters}{"17}
\DeclareMathSymbol{\xi}{\mathord}{greekletters}{"18}
\DeclareMathSymbol{\pi}{\mathord}{greekletters}{"19}
\DeclareMathSymbol{\rho}{\mathord}{greekletters}{"1A}
\DeclareMathSymbol{\sigma}{\mathord}{greekletters}{"1B}
\DeclareMathSymbol{\tau}{\mathord}{greekletters}{"1C}
\DeclareMathSymbol{\upsilon}{\mathord}{greekletters}{"1D}
\DeclareMathSymbol{\phi}{\mathord}{greekletters}{"1E}
\DeclareMathSymbol{\chi}{\mathord}{greekletters}{"1F}
\DeclareMathSymbol{\psi}{\mathord}{greekletters}{"20}
\DeclareMathSymbol{\omega}{\mathord}{greekletters}{"21}
\DeclareMathSymbol{\varepsilon}{\mathord}{greekletters}{"22}
\DeclareMathSymbol{\vartheta}{\mathord}{greekletters}{"23}
\DeclareMathSymbol{\varpi}{\mathord}{greekletters}{"24}
\DeclareMathSymbol{\varrho}{\mathord}{greekletters}{"25}
\DeclareMathSymbol{\varsigma}{\mathord}{greekletters}{"26}
\DeclareMathSymbol{\varphi}{\mathord}{greekletters}{"27}
\newcommand{\reg}[1]{\mathrm{reg}[#1]}
\begin{document}

\title{
Derandomised tensor product gap amplification \\ for quantum Hamiltonians}

\author[1]{Thiago Bergamaschi}
\author[2]{Tony Metger}
\author[3]{Thomas Vidick}
\author[4]{Tina Zhang}

\affil[1]{UC Berkeley}
\affil[2]{ETH Zurich}
\affil[3]{Weizmann Institute and EPFL}
\affil[4]{MIT}

\date{\vspace{-1cm}}

\maketitle

\begin{abstract}
The quantum PCP conjecture asks whether it is $\mathsf{QMA}$-hard to distinguish between high- and low-energy Hamiltonians even when the gap between ``high'' and ``low'' energy is large (constant).
A natural proof strategy is gap amplification: start from the fact that high- and low-energy Hamiltonians are hard to distinguish if the gap is small (inverse polynomial)~\cite{kitaev02} and \emph{amplify} the Hamiltonians to increase the energy gap while preserving hardness.
Such a gap amplification procedure is at the heart of Dinur's proof of the \emph{classical} PCP theorem~\cite{dinur2007pcp}.
In this work, following Dinur's model, we introduce a new quantum gap amplification procedure for Hamiltonians which uses random walks on expander graphs to derandomise (subsample the terms of) the \emph{tensor product amplification} of a Hamiltonian. 
Curiously, our analysis relies on a new technique inspired by quantum de Finetti theorems, which have previously been used to rule out certain approaches to the quantum PCP conjecture \cite{Brando2013ProductstateAT}.
\end{abstract}

\newpage

{
  \hypersetup{linkcolor=linkblue}
  \setcounter{tocdepth}{2}
  \tableofcontents
}

\newpage

\section{Introduction}

The classical PCP theorem \cite{arora92} is one of the landmark achievements of classical complexity theory, and the quantum PCP conjecture \cite{aharonov2013quantumpcpconjecture} is one of the major open problems in quantum complexity theory. The classical PCP theorem was proven for the first time using algebraic techniques, and later reproven using elementary combinatorial techniques in the celebrated 2007 paper of Dinur \cite{dinur2007pcp}. Both approaches have, so far, resisted quantisation. 

The algebraic proof of the PCP theorem seems difficult to quantise because of its reliance on the properties of polynomial codes, which have no clear quantum analogue~\cite{aharonov2018quantuminspiredproofpp}. The combinatorial proof, which pares the approach down to essentials, has seemed more promising as a guide for quantum PCP. Very broadly speaking, Dinur's proof of the PCP theorem consists of the iterated application of two steps:
\begin{definition}[Dinur's template for proving the PCP theorem]
\label{def:dinur's-template}
\quad
\begin{enumerate}
    \item Gap amplification, in which the \textit{promise gap} of a constraint satisfaction problem family is amplified at the cost of blowing up its alphabet size,
    \item Alphabet reduction, in which the locality of the constraint satisfaction problem family is reduced back to a constant, and the gap is shown not to shrink too much in the process.
\end{enumerate}
\end{definition}
Applying these two steps iteratively to an NP-complete constraint satisfaction problem (CSP) with inverse polynomial promise gap yields another NP-complete CSP with constant promise gap.
Thus, given a purported witness for the amplified CSP, a verifier can check that the witness satisfies the amplified CSP (with constant probability of error) by randomly selecting and checking constantly many clauses, which only requires a logarithmic number of random bits. (The number of bits that the verifier uses in this process is referred to as the \emph{randomness complexity} of the verifier.)

Dinur's classical gap amplification procedure (step (i)) starts from a certain `na\"ive' gap amplification procedure, which amplifies the promise gap effectively but blows up the randomness complexity of the verifier. 
To control the randomness complexity of the verifier, Dinur uses expander random walks to derandomise this `na\"ive' procedure and achieve amplification without paying a steep cost in randomness complexity.

In this work, following Dinur's model, we introduce a new quantum gap amplification procedure for Hamiltonians which uses random walks on expander graphs to derandomise (subsample the terms of) the \emph{tensor product amplification} of a Hamiltonian. The tensor product amplification of a Hamiltonian $H$ (normalized such that $0\leq H\leq I$) is simply $I - (I - H)^{\otimes t}$.  It can be easily shown that tensoring $H$ with itself in this way amplifies the promise gap, but this process blows up the number of terms in the amplified Hamiltonian exponentially in $t$ (and consequently also blows up the amount of randomness required to sample a term at random). By contrast, our derandomised version of tensor product amplification achieves gap amplification without increasing the number of terms in the original Hamiltonian by too much. The following is a formal statement of our gap amplification theorem:

\begin{theorem}
[Gap amplification] \label{thm:tp_amplification_intro} Let $H$ be a $k$--local Hamiltonian with $m$ terms, satisfying the conditions in \cref{definition:layered}. Then, its \emph{derandomised $2t$--fold tensor product amplification} $H^{(2t)}$ (formally defined in \cref{def:amplified_ham}) satisfies:

\begin{enumerate}
    \item[1.] $H^{(2t)}$ is $(2t\cdot k)$--local and has $d^{2t} \cdot m$ terms, where $d$ is the degree of some fixed expander graph family.
    \item[2.] \textbf{Completeness.} The lowest eigenvalue of $H^{(2t)}$ is bounded above by 
    \begin{equation}
        \lambda_{\min}(H^{(2t)}) \leq 2 t \times \lambda_{\min}(H)
    \end{equation}
    \item[3.] \textbf{Soundness.} The eigenvalue of $H^{(2t)}$ is bounded below by
    \begin{equation}
        \lambda_{\min}(H^{(2t)})\geq  \min\bigg[\Theta\bigg(\frac{\log t}{t}\bigg), \Theta\bigg(\sqrt{\frac{t}{\log t}} \times \lambda_{\min}(H)\bigg)\bigg],
    \end{equation}
\end{enumerate}
\noindent where the $\Theta$ notation obscures constants dependent on the choice of expander graph family and the original Hamiltonian $H$.
\end{theorem}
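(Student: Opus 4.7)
The plan is to handle the completeness and soundness claims separately, with soundness being the main technical step.

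For completeness, I take the product-state ansatz $\ket{\psi}^{\otimes 2t}$, where $\ket{\psi}$ is the ground state of $H$. Each term of $H^{(2t)}$ indexed by a walk $\vec{i} = (i_1, \ldots, i_{2t})$ on the expander has the form $I - \prod_{j=1}^{2t}(I - H_{i_j}^{(j)})$, with $H_{i_j}^{(j)}$ acting on the $j$-th copy. Because the factors $(I - H_{i_j}^{(j)})$ act on distinct copies they commute, and on the product state the expectation factorises as $1 - \prod_j (1 - \bra{\psi} H_{i_j} \ket{\psi})$. This is bounded above by $\sum_j \bra{\psi} H_{i_j} \ket{\psi}$; averaging over walks and using that the stationary distribution of the regular expander is uniform on $[m]$ yields the claimed $2t \cdot \lambda_{\min}(H)$ upper bound.

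For soundness, the starting point is to rewrite $\lambda_{\min}(H^{(2t)}) = 1 - \E_{\vec{i} \sim \text{walk}} \tr{P_{i_1}^{(1)} \cdots P_{i_{2t}}^{(2t)} \rho}$, where $P_i = I - H_i$ and $\rho$ is the optimal state. If $\rho$ were a genuine tensor product $\sigma^{\otimes 2t}$, this expectation would factorise along the walk, reducing the problem to a classical Dinur-style analysis: an expander Chernoff bound applied to the sequence $e_{i_j} = \tr{P_{i_j} \sigma}$, combined with the ``OR''-structure of $1 - \prod_j(1 - e_{i_j})$, would amplify the single-copy energy $\tr{H \sigma} \geq \lambda_{\min}(H)$ by a factor $\sqrt{t/\log t}$, capped at $\Theta(\log t / t)$ by the saturation of parallel repetition. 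The min-of-two form in the statement is precisely this tradeoff between linear amplification and the saturation ceiling.

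The core difficulty, and where I expect the bulk of the work to lie, is removing the assumption that $\rho$ is a tensor product: a priori $\rho$ may be highly entangled across the $2t$ copies, and the classical proof strategy has no direct analogue. This is where the quantum de Finetti--inspired technique advertised in the abstract enters. The strategy is to approximate $\rho$ by a convex combination $\int p(\sigma)\, \sigma^{\otimes 2t}\, d\sigma$ of product states, then apply the product-state analysis inside the integrand and average. The main obstacle is that standard de Finetti theorems require permutation invariance across all $2t$ copies, which the walk-structured $H^{(2t)}$ does not fully respect (only limited symmetries such as walk reversal are built in). A partial symmetrisation over the symmetries that \emph{are} preserved, combined with a de Finetti--type bound suited to the expander structure and to the layered Hamiltonian assumption of \cref{definition:layered} (which I expect is used precisely to guarantee clean factorisation across layers after symmetrisation), should give a quantitative tradeoff between the de Finetti error, the expander mixing rate, and the tensor-product amplification that produces the stated bounds.
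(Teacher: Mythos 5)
Your completeness argument matches the paper's (\cref{lem:upperbound}): take $\rho^{\otimes 2t}$ for $\rho$ a ground state of $H$, bound the ``OR'' of violations by the union bound, and use that a uniformly random step of a regular expander walk is uniform. That part is fine.

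Your soundness plan, however, is built on an approach that the paper explicitly rules out. You propose to approximate the entangled ground state $\rho$ by a convex combination $\int p(\sigma)\,\sigma^{\otimes 2t}\,d\sigma$ of product states via a de Finetti theorem and then run the classical Dinur argument inside the integral. The paper's technical overview (``Technique 2: miser's de Finetti'') goes through precisely this plan and shows it fails quantitatively: any de Finetti theorem that gives product-state approximation requires tracing out $\Omega(\log d)$ registers where $d = 2^n$ is the per-register dimension, forcing $t \geq \poly(n)$ and blowing up the number of terms, which is exactly what the derandomisation is meant to avoid. The obstacle you flag (that the walk structure breaks full permutation symmetry) is real but secondary; even for a fully symmetrised construction, the quantitative parameters of de Finetti do not yield the stated bounds for constant $t$.

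What the paper actually does is structurally different. Rather than approximating $\rho$ by products, it never argues about the state's structure at all. It applies the Second Moment Method to a ``violation number'' random variable $X$ (the number of violated terms along a random walk, on a random size-$t$ subset $S$ of registers), and the entire burden falls on bounding $\E[X^2]$. To do this, it introduces an \emph{auxiliary energy measurement}: binary projectors $\{\Pi^{<\alpha},\1-\Pi^{<\alpha}\}$ onto low/high energy eigenspaces of $H$, applied to the complementary registers $\bar S$. Since these commute with $H$ on each register, one can ``pinch'' $\rho$ into coarse energy sectors without disturbing the quantities of interest, and a Chernoff-type sampling-without-replacement bound (\cref{fact:tomamichel}, \cref{lemma:de-finetti}) then says that the energy statistics seen on $\bar S$ are representative of those on $S$. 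The analysis splits into a high-energy branch (where a direct bound works) and a low-energy branch (where the projected register can be replaced by the scalar $\alpha$, effectively decoupling the $H_{\reg i}\otimes H_{\reg j}$ term). The layered Hamiltonian condition is used earlier, in \cref{lemma:second_moment_Xs}, to reduce the expander-walk second moment to the pairwise-independent case by diagonalising within each commuting layer — not to ``guarantee clean factorisation across layers after symmetrisation'' as you suggest. In short: your soundness proof as sketched would not go through; the actual argument replaces the de Finetti black box with a much coarser, measurement-statistics style argument tailored to the two moments needed, plus a careful case analysis driven by an auxiliary diagnostic measurement.
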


To understand \cref{thm:tp_amplification_intro}, 
one should consider the case that $t$ is a constant: in this context, \cref{thm:tp_amplification_intro} says that we can amplify the promise gap of a Hamiltonian family by a constant multiplicative factor, at the cost of increasing both the locality and the number of terms by a constant multiplicative factor. These parameters are comparable, except for one important difference, to those which Dinur achieves with her gap amplification procedure (see \cite[Section 6]{dinur2007pcp}). We elaborate on this difference more thoroughly in \cref{section:related}. 

In order to prove \cref{thm:tp_amplification_intro}, we devise a new technique inspired by \textit{quantum de Finetti theorems} \cite{renner2007symmetry, renner2008security}, which we believe could be of independent interest. Curiously, while quantum de Finetti theorems have previously been used to rule out approaches to the quantum PCP conjecture \cite{Brando2013ProductstateAT}, here we use methods inspired by these statements to argue the soundness of our amplification scheme. We give more context and details about our application of de Finetti theorems in the related work \cref{section:related} below, and in our technical overview (\cref{section:overview}). We dedicate \cref{section:discussion} to a discussion on potential applications of \cref{thm:tp_amplification_intro}.

\subsection{Related work}
\label{section:related}

\paragraph{Quantum gap amplification.} In 2008, Aharonov, Arad, Landau and Vazirani proposed a quantum gap amplification procedure that was also based on Dinur's gap amplification \cite{aalv09}. The key ingredient in their work was the \textit{detectability lemma}, which in some sense states that ``if the ground state energy of some (non-commuting) Hamiltonian $H$ is at least $E$, then sequentially measuring all the terms of $H$ on the ground state should detect at least $\textit{const} \cdot E$ violations (with constant probability)". Although their original motivation lay in the pursuit of the quantum PCP (QPCP) conjecture~\cite{aharonov2013quantumpcpconjecture}, their work subsequently found many applications to quantum many-body physics and quantum information theory, including area laws \cite{ALV12}, $k$-designs \cite{BHH12}, and Gibbs samplers \cite{KB16}.

There are two key differences between the parameters of Dinur's classical amplification procedure and those of the quantum gap amplification procedure proposed by \cite{aalv09}. The first is that (unlike \cref{thm:tp_amplification_intro}) the gap amplification procedure of \cite{aalv09} requires the Hamiltonian family it amplifies to have constant locality to start with.\footnote{Newer versions of the detectability lemma exist which do not require constant locality of the Hamiltonian to which they are applied (see e.g. \cite{AAV16}); however, we do not know of any detectability lemma that both has this property and is sufficiently strong to complete the proof from \cite{aalv09}.} Since \cite{aalv09}'s gap amplification procedure also increases the locality of the Hamiltonian which it amplifies (where the increase in locality is proportional to the increase in promise gap which it achieves, like in \cref{thm:tp_amplification_intro}), this means that \cite{aalv09}'s gap amplification cannot be iterated by itself (without composing it with a locality reduction procedure).

We resolve this issue in this work. Because our gap amplification procedure (\Cref{thm:tp_amplification_intro}) can be iterated by itself, we can achieve applications that, as far as we know, cannot be achieved with the tools developed by \cite{aalv09}. For example, we are able to get a \emph{locality-gap tradeoff theorem} for $\mathsf{QMA}$-complete Hamiltonians:
\begin{theorem}[Locality-gap tradeoff; informal]
\label{thm:locality-gap-tradeoff}
Let $\mathcal{H}$ be a $k$-local Hamiltonian family that is $\mathsf{QMA}$-complete with promise gap $\eps$.\footnote{The members of $\mathcal{H}$ also need to satisfy certain technical conditions, such as being a sum of polynomially many projective terms, which we are suppressing in this informal statement. For a formal version, see \Cref{theorem:iterated_amplification}.} Then, given a deterministic construction of $\mathcal{H}$, there is a deterministic construction of a family of Hamiltonians $\mathcal{H}'$ such that $\mathcal{H}'$ has locality $k \cdot 1/\eps$ and is $\mathsf{QMA}$-complete with some universal constant promise gap $c$.
\end{theorem}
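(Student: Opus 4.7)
The plan is to obtain $\mathcal{H}'$ by applying the derandomised tensor product amplification of \cref{thm:tp_amplification_intro} to each Hamiltonian in $\mathcal{H}$ iteratively for $N$ rounds with a small fixed parameter $t$. Writing $(k_i, m_i, \eps_i)$ for the locality, number of terms, and promise gap after $i$ rounds, the theorem yields the recurrences
\begin{align*}
    k_{i+1} = 2t \cdot k_i,\qquad m_{i+1} = d^{2t} \cdot m_i,\qquad \eps_{i+1} \;\geq\; \min\!\bigl[\,\Theta(\log t/t),\; \Theta\bigl(\sqrt{t/\log t}\bigr) \cdot \eps_i\,\bigr].
\end{align*}
Fixing $t$ to be a constant for which $\alpha \coloneqq \Theta(\sqrt{t/\log t}) > 1$ and iterating until the second branch of the minimum saturates against the first, we need $N = \lceil \log_\alpha(c/\eps) \rceil = O(\log(1/\eps))$ rounds, after which the gap reaches the universal constant $c \coloneqq \Theta(\log t/t)$ while both the locality and the number of terms have grown by a factor of $\poly(1/\eps)$. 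This realises the claimed locality--gap tradeoff; I note that because $\log(2t)/\log\sqrt{t/\log t} \to 2$ as $t \to \infty$, this approach cannot achieve a blow-up better than quadratic in $1/\eps$, so the informal ``$k\cdot 1/\eps$'' should be read in a polynomial sense.

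The next step is to argue $\mathsf{QMA}$-completeness of $\mathcal{H}'$. Given a deterministic construction of $\mathcal{H}$ and an explicit expander family, each application of \cref{def:amplified_ham} is a deterministic polynomial-time transformation, and the total $\poly(1/\eps)$ blow-up is polynomial when $\eps$ is inverse-polynomial in the input size, so the iterated construction remains uniform polynomial time. The completeness bound of \cref{thm:tp_amplification_intro} guarantees that a low-energy witness for a YES-instance of $\mathcal{H}$ remains low-energy for the corresponding instance of $\mathcal{H}'$ (the YES energy scales by at most $(2t)^N$, which after renormalisation remains well below $c$), while the soundness bound provides the desired separation from NO-instance energies. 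This places $\mathcal{H}'$ in $\mathsf{QMA}$, and $\mathsf{QMA}$-hardness is inherited from $\mathcal{H}$ via this polynomial-time reduction.

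The main obstacle I anticipate is verifying that the technical hypotheses of \cref{thm:tp_amplification_intro}---the layered structure of \cref{definition:layered} together with the projectivity and polynomial-size conditions flagged in the footnote---are preserved by the amplification procedure itself, so that the theorem can legitimately be reapplied to $H^{(2t)}$ in each subsequent round. A related quantitative point is to track the soundness bound carefully so that the per-round gain $\alpha$ is in fact attained at each intermediate iteration, with the minimum saturating against the cap $\Theta(\log t/t)$ only at the final round.
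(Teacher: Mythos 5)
Your proposal is correct and takes essentially the same route as the paper's proof of the formal version (\cref{theorem:iterated_amplification}): both fix a constant $t$ large enough that $\alpha \coloneqq \Theta(\sqrt{t/\log t}) > 1$, iterate the derandomised amplification $O(\log(1/\eps))$ times until the soundness bound saturates at the constant $\Theta(\log t/t)$, and track the $\poly(1/\eps)$ multiplicative blow-up in locality, number of terms, and completeness error. Your caveat that the blow-up is really polynomial (at best roughly quadratic) in $1/\eps$ rather than linear is accurate and matches the paper's own formal statement of locality $k\cdot p(n)^{O(1)}$, and the obstacle you flag about preserving the hypotheses of \cref{thm:tp_amplification_intro} across rounds is resolved in the paper by noting that \cref{def:amplified_ham} preserves the number of commuting layers and the projectivity of the terms, so iterability follows once the base case (\cref{corollary:equitable}) supplies an equitably layered $\mathsf{QMA}$-complete family.
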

In particular, \Cref{thm:locality-gap-tradeoff} could be applied (for example) to boost the promise gap of a `weak QPCP'. Recent progress \cite{Anshu2023CircuittoHamiltonianFT} suggests potential avenues for achieving a so-called \emph{weak quantum PCP}, namely, a QPCP with $\mathrm{polylog}(n)$ locality and $1/\mathrm{polylog}(n)$ promise gap instead of constant locality and promise gap. Applying \Cref{thm:locality-gap-tradeoff} to such an object would immediately yield a PCP with $\mathrm{polylog}(n)$ locality and \emph{constant} promise gap.

While our gap amplification procedure can be iterated by itself, like Dinur's, there remains a second important difference between the two. Dinur's gap amplification procedure is \emph{non-locality-increasing}, in the sense that one round of amplification blows up the \emph{alphabet size} of the constraint satisfaction problem but not the number of variables that each constraint involves. All attempted quantisations of Dinur's gap amplification, including ours and \cite{aalv09}, are locality-increasing, which is a significant obstacle to applying the kinds of PCP composition theorems that make alphabet reduction possible. 
\cite{aalv09} gap amplification (and ours) is in some sense a quantum version of classical derandomised \emph{sequential repetition}, and in the classical setting derandomised \emph{parallel repetition} seems necessary to make Dinur's template from \cref{def:dinur's-template} work: see \cite{DR06,DM10} for a discussion of this issue. As such, producing a quantisation of Dinur's gap amplification procedure which reproduces all of its important characteristics remains an open problem.

Achieving derandomised sequential (locality-increasing) amplification is already non-trivial in the quantum setting, however, and we view one of our primary contributions as that of analysing a natural but so far unstudied locality-increasing derandomised quantum amplification procedure using techniques that have not previously been applied in this context. As far as we know, our techniques are the first to yield iterability. Moreover, the techniques which \cite{aalv09} used to analyse their own amplification procedure have since found widespread application outside of their original context, and we hope the introduction of a new approach for analysing low-randomness locality-increasing quantum gap amplification might be found to have similar utility. We may have particular grounds for hope because the work of \cite{aalv09} can in some ways be viewed as a \emph{black-box reduction} from the quantum problem to the classical one, while our approach more directly interprets certain classical gap amplification strategies in a quantum setting, and as such may more clearly illuminate the differences between the classical and the quantum problems.

\paragraph{Quantum de Finetti theorems.} A defining property of quantum entanglement is that it is \textit{monogamous}. That is, a quantum system cannot be very entangled with a large number of other systems. Quantum \textit{de Finetti theorems} attempt to quantify this intuition, and more broadly offer a versatile tool to understand quantum correlations, with applications to quantum information theory \cite{renner2007symmetry}, cryptography \cite{Christandl2008PostselectionTF,renner2008security}, algorithms \cite{Brando2013ProductstateAT,ber23} and complexity theory \cite{LW2017,brandao2013quantum}. Roughly speaking, quantum de Finetti theorems capture the fact that a random $k$ qubit marginal of any $n\gg k$ qubit state should be close to a convex combination of product states, and therefore unentangled. 

Most relevant to us are the results of \cite{Brando2013ProductstateAT}, who showed that local Hamiltonians defined on dense (degree $d$) constraint graphs admit product state approximations, in that there exists a product state whose energy is inverse $\poly(d)$ close to the ground state energy.\footnote{Note that the existence of a product state approximation implies that determining the ground state energy up to said approximation error is in $\mathsf{NP}$, since the description of the product state is a succinct classical witness.} In some sense, their results offer a concrete route to \textit{disproving} the QPCP conjecture: if there existed a procedure which increased the degree of any local Hamiltonian, without decreasing
its ground-state energy (or increasing locality), then deciding the ground state energy of the resulting Hamiltonian up to a constant would be in $\mathsf{NP}$ \cite[Corollary 11]{Brando2013ProductstateAT}.

Here, we issue two remarks. First, the conclusions of \cite{Brando2013ProductstateAT} become exponentially weaker if the degree and the locality of the Hamiltonian are allowed to increase simultaneously during amplification \cite{AN22}, which allows us to evade their no-go. Second, despite these ``negative" results, de Finetti techniques will nevertheless play a central role in our approach in understanding when quantum correlations can adversarially decrease the ground state energy of our amplified Hamiltonians. A comprehensive discussion is presented in \cref{section:techniques} and \cref{sec:low-energy-analysis}.

\subsection{Discussion and applications}
\label{section:discussion}

In this section we discuss a number of potential applications of our gap amplification procedure.

\paragraph{Streaming quantum PCP}
\Cref{thm:tp_amplification_intro} can be applied iteratively $1/\log(n)$ times to achieve what we call a `streaming quantum PCP':

\begin{theorem}
    [Streaming quantum PCP]\label{thm:qma_hard_iterative_intro} There exists a deterministic construction of a family $\{H_n\}$ of Hamiltonians on $n$ qubits and an explicit constant $c$, such that each Hamiltonian $H_n$ is a sum of $\poly(n)$ many terms, each summand is an $O(n)$-fold tensor product of $O(1)$-local projections, and it is $\mathsf{QMA}$-complete to decide whether the ground state energy of $\{H_n\}$ is $\leq \mathsf{negl}(n)$ or $\geq c$.
\end{theorem}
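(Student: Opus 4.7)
The plan is to iteratively apply \cref{thm:tp_amplification_intro} to the local Hamiltonian $H_0$ produced by Kitaev's circuit-to-Hamiltonian construction on $n_0$ input qubits. Kitaev's construction gives an $O(1)$-local Hamiltonian with $m = \poly(n_0)$ projective terms, negligible completeness error ($\lambda_{\min}(H_0) \leq \negl(n_0)$ in the yes case) and inverse-polynomial soundness ($\lambda_{\min}(H_0) \geq 1/n_0^a$ in the no case for some fixed constant $a$). We pick $t$ to be a sufficiently large constant so that the per-iteration multiplicative amplification factor $\Theta(\sqrt{t/\log t})$ strictly exceeds $1$; each application of \cref{thm:tp_amplification_intro} then amplifies the promise gap by this constant factor (as long as the current gap is well below the ceiling $\Theta(\log t / t)$), while multiplying the locality, the number of qubits and the number of terms by the fixed constants $2t$, $2t$ and $d^{2t}$ respectively.

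Only $L = O(\log n_0)$ iterations of this procedure are needed before the amplified gap reaches the universal constant $c = \Theta(\log t / t)$. After $L$ iterations the final Hamiltonian lives on $n = n_0 \cdot (2t)^L = \poly(n_0)$ qubits, contains $m \cdot d^{2tL} = \poly(n_0) = \poly(n)$ terms, and each of its summands is a $(2t)^L$-fold tensor product of the original $O(1)$-local projections. Since $(2t)^L = n/n_0 \leq n$, each summand is in particular an $O(n)$-fold tensor product of $O(1)$-local projections, as the statement requires. Completeness is preserved because $\lambda_{\min}$ is multiplied by at most $2t$ per iteration, giving $\lambda_{\min}(H^{(L)}) \leq (2t)^L \cdot \negl(n_0) = \negl(n)$ in the yes case. $\mathsf{QMA}$-hardness of the family $\{H_n\}$ follows from the $\mathsf{QMA}$-hardness of $H_0$ via the explicit, deterministic, polynomial-time construction of the iterated amplification.

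The main obstacle will be ensuring that the technical preconditions of \cref{thm:tp_amplification_intro}---in particular the ``layered'' condition of \cref{definition:layered}---are preserved when the derandomised tensor product amplification is iterated. Concretely, we must verify both that Kitaev's base Hamiltonian can be brought into the required layered form by a routine preprocessing step, and that after one application of the amplification the output Hamiltonian again satisfies this condition, possibly after some regrouping or normalisation of its terms. A secondary subtlety is to track the hidden constants in the $\Theta(\cdot)$ factors appearing in the amplification factor and in the ceiling $\log t / t$, so as to choose $t$ such that the recursion genuinely expands the gap by a factor strictly greater than $1$ per round and terminates at a universal constant value after $O(\log n_0)$ rounds.
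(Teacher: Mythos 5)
Your high-level plan---iterate \cref{thm:tp_amplification_intro} with a fixed constant $t$ for $O(\log n)$ rounds starting from an inverse-polynomial-gap $\mathsf{QMA}$-hard base Hamiltonian---matches the paper's strategy (\cref{theorem:iterated_amplification}). However, there are two genuine gaps in your proposal.

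First, you propose starting from Kitaev's circuit-to-Hamiltonian construction and assert that it can be brought into the layered form of \cref{definition:layered} ``by a routine preprocessing step.'' This is not routine. The layered condition requires that the terms partition into $g = O(1)$ commuting layers with balanced weights $\omega_{\min} = O(1)$. For that, the interaction graph on clauses (where two clauses are adjacent if they overlap on a qubit) must have $O(1)$ degree, so that an equitable $O(1)$-coloring exists (\cref{theorem:equitable_coloring}). But in Kitaev's unmodified construction, a single data qubit can appear in $\Theta(T)$ propagation terms ($T$ the circuit size), so the clause-graph degree is polynomial, not constant, and no $O(1)$-coloring is available. The paper has to invoke the degree-reduction construction of \cite{Anshu2023CircuittoHamiltonianFT} (\cref{claim:degree-reduction}), which produces a $5$-local FK Hamiltonian with each qubit appearing in at most $7$ terms and all terms unweighted projections; only then does the clause graph have degree $\leq 35$ and admit an equitable coloring into $36$ layers (\cref{corollary:equitable}). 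Without this, \cref{thm:tp_amplification_intro} does not apply at the base of the recursion. (Your separately-flagged concern about preservation of the layered structure under amplification is, on the other hand, easy: \cref{def:amplified_ham} amplifies each layer separately, tensor products of commuting projections commute, and the weights $w_\chi$ are unchanged, so $g$ and $\omega_{\min}$ are preserved.)

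Second, your proposal establishes only $\mathsf{QMA}$-\emph{hardness}, but the theorem claims $\mathsf{QMA}$-\emph{completeness}. Containment in $\mathsf{QMA}$ is nontrivial here because the amplified Hamiltonian's terms are $O(n)$-fold tensor products of local projections, so the overall Hamiltonian is not local and one cannot simply invoke the standard $\mathsf{QMA}$ verifier for local Hamiltonians. The paper handles this with \cref{lemma:ham_sim}: a tensor product $\mathbb{I} - \bigotimes_i (\mathbb{I} - \Pi_i)$ of $m$ commuting $k$-local projections admits Hamiltonian simulation by a size-$O(m)$, depth-$O(\log m)$ circuit of $(k+1)$-local gates (coherently measure each $\Pi_i$ into an ancilla, compute the AND by a depth-$\log m$ tree, apply a single-qubit phase, uncompute). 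Combined with Trotterization and phase estimation, this gives an efficient verifier that estimates the energy of a candidate witness, establishing the $\mathsf{QMA}$ upper bound. Your proof would need to supply this (or an equivalent) argument.
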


\cref{thm:qma_hard_iterative_intro} can be interpreted as a quantum version of what one would get in the classical setting by iterating Dinur's gap amplification procedure from \cref{def:dinur's-template} \emph{without} locality or alphabet reduction reduction. \cref{thm:qma_hard_iterative_intro} can also be viewed as a very preliminary version of quantum PCP. Indeed, the family of Hamiltonians guaranteed by \cref{thm:qma_hard_iterative_intro} satisfies the conditions of the quantum PCP conjecture---including the condition that there are polynomially many terms in each member Hamiltonian---except for the fact that the member Hamiltonians do not have constant locality, which is of course a substantial omission. 

Nonetheless, the terms in each $H_n$ are individually easy to verify: since each term is the $O(n)$-fold tensor product of $O(1)$-local projections, each term can be measured using a constant-depth quantum circuit (and classical post-processing), or by a quantum algorithm that reads a constant number of qubits at a time. There are also only polynomially many such terms in each $H_n$, which implies that the energy of $H_n$ can be measured on a witness state by a verifier who samples a random term and measures it, and in the process uses only $O(\log n)$ many classical random bits and very limited quantum resources. (We call \cref{thm:qma_hard_iterative_intro} a `streaming quantum PCP' for this reason.) It also implies that time evolution under $H_n$ can be efficiently simulated (see \cref{sec:iterated_to_constant}). Ease of verification is one of the chief motivations for quantum PCP, and ease of simulation is one of the chief motivations for the \emph{sparse} quantum PCP conjecture (see below). As such, \cref{thm:qma_hard_iterative_intro} can be viewed as progress towards both objectives.

\begin{remark}
There is an alternative way to achieve \Cref{thm:qma_hard_iterative_intro} which does not use our \Cref{thm:tp_amplification_intro} and instead uses improvements to the techniques of \cite{aalv09} due to \cite{AAV16} to prove \Cref{thm:qma_hard_iterative_intro} using ``one-shot'' (non-iterative) amplification. This result, due to Anurag Anshu, is presented in \cref{section:dl}. The main drawback of this alternative approach is that it does not allow the kind of more ``fine-grained'' amplification that our \Cref{thm:tp_amplification_intro} provides, where the tunable parameter $t$ simultaneously governs the gap amplification and the locality increase of the input Hamiltonian. 
\end{remark}

\paragraph{Locality-gap tradeoffs for local Hamiltonians.} As mentioned above in \cref{thm:locality-gap-tradeoff}, \cref{thm:tp_amplification_intro} can be iteratively applied in order to produce a `locality-gap tradeoff' for local Hamiltonians (see \cref{theorem:iterated_amplification} for the general statement). More explicitly, \cref{thm:tp_amplification_intro} implies that, if there is a deterministic construction of a $\mathsf{QMA}$-complete family of Hamiltonians with locality $\ell$ and promise gap $\eps$, then (under mild conditions) there is a deterministic construction of a $\mathsf{QMA}$-complete family of Hamiltonians with locality $\ell /\eps$ and gap which is some universal constant $c$. 

In particular, we could use \cref{thm:tp_amplification_intro} to boost a `weak quantum PCP' statement (cf.~\cite[Open Problems]{Anshu2023CircuittoHamiltonianFT})---that is, a family of Hamiltonians with $\poly\log n$ locality and inverse $\poly\log n$ promise gap---into a family of Hamiltonians that has $\poly\log n$ locality and \emph{constant} promise gap.

\paragraph{Quantum games PCP.}

The quantum games PCP conjecture states that there exists an $\mathsf{MIP}^*$ protocol (i.e., a protocol between a classical verifier and multiple non-communicating but potentially entangled provers) with $\poly \log n$ sized questions and answers, constant completeness-soundness gap, and efficient provers that can decide all of $\mathsf{QMA}$ \cite{natarajan2024statusquantumpcpconjecture}. The motivation for this question lies in the connection between PCPs and succinct $\mathsf{MIP}$ (classical multi-prover) protocols with constant gap in the classical world  \cite{aharonov2013quantumpcpconjecture}. The connection between quantum PCP and succinct $\mathsf{MIP}^*$ protocols is less immediate, although, like in the classical case, quantum games PCP is a necessary consequence of Hamiltonian quantum PCP (but this is considerably harder to show than it is classically). A proof of the quantum games PCP conjecture was claimed by \cite{NV18}, but it has since been established that the proof had errors, and as of now the conjecture remains open.

We believe \cref{thm:qma_hard_iterative_intro} implies a deterministic construction of an $\mathsf{MIP}^*$ protocol with efficient provers, constant completeness-soundness gap and $\log n$ sized \emph{questions} ($\poly n$ sized answers) that decides all of $\mathsf{QMA}$. Such a protocol was not known prior to our work, and it would follow by designing a succinct Pauli braiding test \cite{dlS, MNZ24} that supports measurements of the terms in our amplified Hamiltonian, which could be made into tensor products of 5-local Clifford projections \cite{broadbent2016zero}. Unfortunately, our result does not appear to help with achieving succinct answers, and we leave as an open question whether it is possible to build on these results to prove the quantum games PCP conjecture.

\paragraph{Sparse quantum PCP.}
The sparse quantum PCP conjecture is similar to the original quantum PCP conjecture except that, instead of requiring that the Hamiltonian is a sum of local terms, we require that the Hamiltonian is expressible as a sparse matrix with efficiently computable entries. 

This conjecture is of interest primarily for two reasons: 
\begin{enumerate}
    \item[1.] It is, like the NLTS theorem \cite{ABN2023}, a necessary consequence of quantum PCP, and
    \item[2.] Sparse Hamiltonians are simulable given only oracle access to their entries, which allows their energy to be measured even in the absence of any term decomposition.
\end{enumerate}

\noindent This entails that their ground state energies are easy to verify, and as such a sparse quantum PCP can be viewed as a meaningful stepping stone towards the kind of easy verification that full quantum PCP requires. 

One might hope that \cref{thm:qma_hard_iterative_intro} could be useful for sparse QPCP, because our Hamiltonian is a sum of polynomially many relatively structured terms (see below, \cref{def:amplified_ham}). Unfortunately, existing black-box approaches to matrix sparsification appear to be either too randomness-expensive or too computationally inefficient to yield the sparse QPCP conjecture when applied to the terms of our Hamiltonians. We leave it as an interesting open question whether progress can be made in this direction by exploiting the particular tensor-product structure of our Hamiltonians.

\subsection{Technical overview}
\label{section:overview}

\subsubsection{Construction and notation}

In this subsection we more formally present our construction of a quantum gap amplification procedure. Our starting point is a Hamiltonian $H$ on $n$ qubits, which can be partitioned into a convex combination of $g$ commuting layers (or colors).

\begin{definition}
    [Layered Hamiltonians]\label{definition:layered} $H$ is said to be a \emph{layered Hamiltonian} if it admits a decomposition
    
    \begin{equation} \label{equation:H_definition}
    H = \sum_{\chi\in [g]} w_\chi \cdot H_\chi\;, \quad \text{ where each layer }\quad H_\chi = \E_{i\in [m_\chi]} \Pi^\chi_i \end{equation}
    \noindent  is the expectation over $m_\chi$ commuting projections, and the weights $\{w_\chi\}_{\chi\in [g]}$ are positive and $\sum_\chi w_\chi=1$. To quantify imbalance between the layers, we introduce the parameter 
\begin{equation}
    \omega_{\min} \equiv (\min_\chi w_\chi)^{-1}.
    \label{eq:min-weight-def}
\end{equation}
\end{definition}

 As we discuss later on (\cref{section:qma_complete_equitable}), there are $\mathsf{QMA}$-complete local Hamiltonians which admit such a decomposition with a constant number $g$ of layers (and constant $\omega_{\min}$), so starting from such a decomposition is essentially without loss of generality. Our construction will amplify the individual commuting layers separately. To do so, for each color we impose an expander graph structure on the clauses $m_\chi$.\footnote{So long as $m_\chi$ is above some constant $m_0$, there exists a determinstic construction of such a graph, see \cref{lemma:expander_construction} \cite{Reingold2000EntropyWT}. The case $m_\chi\leq m_0$ is addressed by trivially repeating the clauses in $H_\chi$. }

\begin{definition}
    [Paths of length $t$ in $G_\chi$]\label{definition:paths} Let $G_\chi$ be a $d$-regular $\lambda$-spectral expander graph on $m_\chi$ vertices (\cref{definition:spectral}). Define a function family $\mathcal{F}_\chi:\{f:[t]\rightarrow [m_\chi]\}$ such that there is a one-to-one\footnote{$\forall f \in \mathcal{F}_\chi$, the tuple $(f(1), \dots, f(t)) \in [m_\chi]^{t}$ is a path of length $t$ in $G_\chi$, and every such path is represented by some $f \in \mathcal{F}_\chi$.} correspondence between member functions $f \in \mathcal{F}_\chi$ and paths of length $t$ in $G_\chi$. 
\end{definition}

\noindent The amplified Hamiltonian will act on $t$ (tensor) copies of the original $n$ qubit system. For each color $\chi$ and path $f\in \mathcal{F}_\chi$, we associate to $f$ a clause $\Pi_f^\chi$ in the amplified Hamiltonian, in the form of a projection onto the intersection of the $t$ clauses on the path. In other words, the clauses in the amplified Hamiltonian are satisfied if all (the ``AND") of the clauses on the path $f$ are satisfied:
\begin{equation}
    \Pi_f^\chi \equiv \mathbb{I} - \bigotimes_{j\in [t]} \big(\mathbb{I}-\Pi^\chi_{f(j)}\big)
\end{equation}

\begin{remark}
    It is easy to check this operator is a projection: it is positive and squares to itself. 
\end{remark}

The $t$-fold amplified version of $H$, which we denote by $H^{(t)}$, can then be expressed as follows:

\begin{definition}
[Derandomised Tensor Product Amplification]\label{def:amplified_ham} Let $H$ be layered as in \cref{definition:layered}. We define the $t$-fold amplification $H^{(t)}$ of $H$ with respect to $\{\mathcal{F}_\chi\}_{\chi\in [g]}$ as
\begin{equation}
        H^{(t)} = \sum_{\chi \in [g]} w_\chi\cdot H^{(t)}_\chi,\quad  \text{where} \quad H^{(t)}_\chi\equiv \E_{f \in \cF_\chi}\bigg(\mathbb{I} - \bigotimes_{j\in [t]} \big(\mathbb{I}-\Pi^\chi_{f(j)}\big)\bigg).
    \end{equation}
\end{definition}

\noindent If the original $H$ was a $k$-local Hamiltonian on $n$ qubits and $m$ clauses, then $H^{(t)}$ is $(k\cdot t)$-local Hamiltonian on $n\cdot t$ qubits and $d^t\cdot m$ clauses. 

\begin{remark}
    Our construction of the amplified Hamiltonian is similar to that of \cite{aalv09}, but we use tensor products instead of matrix products. 
\end{remark}

\subsubsection{Main techniques}
\label{section:techniques}

Here we highlight the main techniques behind the proof of \cref{thm:tp_amplification_intro}.

\paragraph{Background: tensor product amplification.}
The starting point for our gap amplification procedure is tensor product amplification. The $t$-fold tensor product amplification of a Hamiltonian $H$ is simply
\begin{equation}
\mathbb{I} - (\mathbb{I}  - H)^{\otimes t}.
\end{equation}
If $H$ has a decomposition as a sum of projections,
$H = \E_{i \in [m]} \Pi_i$,
then the tensor product amplification of $H$ can be expanded as

\begin{equation}
\E_{(i_1, \dots, i_t) \in [m]^t} \Big(\mathbb{I}  - \bigotimes_{j \in [t]} (\mathbb{I}  - \Pi_{i_j}) \Big). \label{eq:tensor-amp-decomposition}
\end{equation}
If $H$ had $m$ terms, therefore, the $t$-fold tensor product amplification of $H$ has $m^t$ terms.

Analysing tensor product amplification is straightforward: one simply has to observe that, since the operator $\mathbb{I}  - (\mathbb{I} -H)^{\otimes t}$ is diagonal in the basis constituted by tensor products of eigenvectors of $H$, the ground energy of $\mathbb{I}  - (\mathbb{I} -H)^{\otimes t}$ is attained by a tensor product across $t$ registers of ground states of $H$. If we start with a Hamiltonian family $\{H_n\}_n$ with promise gap of size $\gamma$ (and completeness sufficiently close to 1), then the gap increases at least to $t\gamma$ after $t$-fold tensor product amplification.

We would, however, like to achieve a similar amount of gap amplification without increasing the number of terms so steeply. In particular, instead of $m^t$ terms, we would like the $t$-fold amplification of $H$ to have $c(t) \cdot m$ terms, where $c(t)$ is a function that depends arbitrarily on $t$ but not on $m$. Then we can set $t$ to be a constant and iterate gap amplification $O(\log n)$ times to get a new Hamiltonian family $\{H'_n\}_n$ such that $H'_n$ only has a multiplicative factor $\poly(n)$ more terms than $H_n$, but the promise gap of $\{H'_n\}_n$ is also a $\poly(n)$ factor larger than that of $\{H_n\}_n$.

\begin{remark}
If we are content to show QMA-completeness of streaming quantum PCPs under \emph{randomised reductions}, one way that we could reduce the number of terms in \cref{eq:tensor-amp-decomposition} is to pick some to keep at random and delete the others. More specifically, we could prove an analogue of our \cref{thm:qma_hard_iterative_intro} under randomised reductions by randomly subsampling terms of \cref{eq:tensor-amp-decomposition} using the matrix Chernoff bound: see e.g. \cite{natarajan2024statusquantumpcpconjecture}. The gap amplification statement that we prove, \cref{thm:tp_amplification_intro}, can be viewed as the `correct' derandomisation of matrix-Chernoff-based subsampling of tensor product amplification. \cref{thm:tp_amplification_intro} allows us to show \cref{thm:qma_hard_iterative_intro} under deterministic reductions. 

One advantage of derandomisation in view of applications is that \cref{thm:tp_amplification_intro} is \emph{iterable}, whereas the matrix-Chernoff-based approach is not. That is, \cref{thm:tp_amplification_intro} holds even when $t$ is a constant, but matrix-Chernoff can only be used to do `one-shot' amplification (to amplify in one go up to constant gap), since it requires the Hamiltonian family being subsampled to have a constant promise gap.
\end{remark}

\paragraph{Background: classical sequential repetition and its derandomisation}
The classical analogue of tensor product amplification is \emph{sequential repetition}, in which, given a constraint satsifaction problem (CSP) $C$ with $m$ clauses $c_1, \dots, c_m$, one constructs a new CSP $C'$ whose clauses are the ANDs of clauses from $C$: that is, every clause in $C'$ is of the form $c_{i_1} \land \dots \land c_{i_t}$ for $(i_1, \dots, i_t) \in [m]^t$. If $C$ is satisfiable, than $C'$ is also satisfiable; and, if $\gamma$ fraction of clauses in $C$ are unsatisfied by the most satisfying possible assignment to $C$ (the `unsatisfiability' of $C$ is at least $\gamma$), then this fraction will be at least $t \gamma$ for $C'$.

In the classical case, we can also consider \emph{derandomised} sequential repetition, which achieves a similar amplification guarantee but using far fewer clauses. We can achieve derandomised sequential repetition by first constructing $C'$, the full sequential repetition of $C$, and then \emph{subsampling} the clauses of $C'$ using random walks on expander graphs. More specifically,

\begin{definition}[Derandomised sequential repetition]
\label{def:derandomised-seq-rep}
    Given a (classical) CSP $C$ on $n$ bits and $m$ clauses, its $t$-fold derandomised sequential repetition is a CSP $C^{(t)}$ on $t\cdot n$ bits where the new clauses are the AND of all the clauses that lie along any length $t$ path in some expander graph defined over $[m]$. 
\end{definition}
So that this exposition is self-contained, we will sketch how derandomised sequential repetition can be analysed using a particular property of random walks on expander graphs \cite[Chapter 4]{vadhan-pseudorandomness} \cite[Section 6]{dinur2007pcp}. The only property of random walks which we need is a `set-avoiding' property: For any fixed sets $A, B \subseteq [m]$ of no more than $\delta \cdot m$ in size, the probability that a random walk that starts in $A$ ends up in $B$ after $s$ steps is bounded above by $\delta + \mu^{s}$, where $\mu$ is the second largest eigenvalue of the expander graph. This property follows in a straightforward way from \cref{lemma:expander-quadratic}.

We sketch how this property would be useful for a classical analysis of derandomised sequential repetition. Recall that $C^{(t)}$ is the derandomised sequential repetition of $C$, as defined in \cref{def:derandomised-seq-rep}. We define a quantity called the \emph{violation number} which will be useful in the analysis:
\begin{definition}[Violation number]
     The violation number $X_{x, C^{(t)}}$ is a random variable that can be sampled by picking a uniformly random `path clause' $p = c_1 \land \cdots \land c_t$ from $C^{(t)}$ and evaluating how many out of the $t$ clauses in the AND is violated by $x$.
\end{definition}
The violation number is a non-negative integer, and $\mathrm{Pr}_p[X_{x, C^{(t)}} > 0]$ precisely captures the fraction of clauses of $C^{(t)}$ which are violated by $x$. Recall the Second Moment Method, which for any random variable $X$ relates $\mathrm{Pr}[X>0]$ to the mean and variance of $X$:
\begin{fact}
    [Second Moment Method]
    For any non-negative random variable $X\geq 0$, 
    \begin{equation}
        \mathbb{P}[X>0]\geq \frac{\mathbb{E}[X]^2}{\mathbb{E}[X^2]}
    \end{equation}
\end{fact}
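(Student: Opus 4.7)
The plan is to deduce the bound directly from the Cauchy--Schwarz inequality; this derivation is often attributed to Paley and Zygmund. The key observation is that because $X\geq 0$, the pointwise identity $X = X\cdot \1_{\{X>0\}}$ holds (both sides vanish on $\{X=0\}$ and agree on $\{X>0\}$), so in particular $\mathbb{E}[X] = \mathbb{E}[X\cdot \1_{\{X>0\}}]$.

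Applying Cauchy--Schwarz to $X$ and $\1_{\{X>0\}}$ regarded as elements of $L^2$ then yields
\begin{equation*}
\mathbb{E}[X]^2 \;=\; \mathbb{E}\big[X\cdot \1_{\{X>0\}}\big]^2 \;\leq\; \mathbb{E}[X^2]\cdot \mathbb{E}\big[\1_{\{X>0\}}^2\big] \;=\; \mathbb{E}[X^2]\cdot \mathbb{P}[X>0],
\end{equation*}
where the last equality uses that the indicator is idempotent, so $\1_{\{X>0\}}^2 = \1_{\{X>0\}}$. Rearranging gives the claim whenever $\mathbb{E}[X^2]>0$. In the degenerate case $\mathbb{E}[X^2]=0$, the random variable $X$ vanishes almost surely, so both $\mathbb{P}[X>0]$ and $\mathbb{E}[X]$ are zero and the statement reduces to the trivial $0\geq 0$ under the convention $0/0=0$.

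There is no real obstacle here, as this is a textbook inequality; the interesting work is not the proof of the fact itself but its deployment downstream. To extract a useful lower bound on the fraction of clauses of $C^{(t)}$ violated by an assignment $x$, one still needs to bound $\mathbb{E}[X_{x,C^{(t)}}]$ from below (an easy computation via linearity of expectation, giving roughly $t$ times the unsatisfiability of $C$) and, more importantly, bound $\mathbb{E}[X_{x,C^{(t)}}^2]$ from above. The second moment bound is where the genuine combinatorial content lives, and it is precisely there that the set-avoiding property of random walks on expander graphs alluded to in the preceding paragraph of the excerpt will enter.
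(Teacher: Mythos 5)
Your proof is correct and is the standard Cauchy--Schwarz (Paley--Zygmund) argument; the paper states this as a fact without proof, since it is a textbook inequality, so there is nothing to compare against.
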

As such, if we can get a \emph{lower bound} on the expectation of $X_{x, C^{(t)}}$ and an \emph{upper bound} on the expectation of its square, we will obtain a lower bound on $\mathrm{Pr}[X_{x, C^{(t)}} > 0]$ and by extension a lower bound on the unsatisfiability of $C^{(t)}$.

The expectation of $X_{x, C^{(t)}}$ is easy to control using linearity of expectation. We now sketch how to control the expectation of the square using the `set-avoiding' property of random walks.
For any classical assignment $x\in \{0, 1\}^{nt}$ to $C^{(t)}$, $x$ can be grouped into blocks $x_1, \cdots, x_t\in \{0,1\}^n$---which in some sense are assignments to the original $C$---and one can define a set of clauses $\mathsf{Viol}_{x, i}\subset [m]$ which contain the clauses of $C$ violated by $x_i$. Then we see that the expectation of the square of the violation number is precisely
\begin{align}
&\sum_{i,j \in [t]} \mathrm{Pr}_{p \leftarrow C^{(t)}}[c_i \in \mathsf{Viol}_{x,i} \text{ and } c_j \in \mathsf{Viol}_{x,j}] \\
&= \sum_{i,j \in [t]} \mathrm{Pr}_{p \leftarrow C^{(t)}}[c_i \in \mathsf{Viol}_{x,i}] \cdot \mathrm{Pr}_{p \leftarrow C^{(t)}}[c_j \in \mathsf{Viol}_{x,j} \:|\: c_i \in \mathsf{Viol}_{x,i}]
\label{eq:classical-violation-square}
\end{align}
where the probability is over the sampling of the random `path clause' $p \coloneqq c_1 \land \cdots \land c_t$. The probability $\mathrm{Pr}_{p \leftarrow C^{(t)}}[c_j \in \mathsf{Viol}_{x,j} \:|\: c_i \in \mathsf{Viol}_{x,i}]$ is exactly the probability that a random walk which starts in $\mathsf{Viol}_{x,i}$ ends up in $\mathsf{Viol}_{x,j}$ after $|j-i|$ steps, which is the type of quantity controlled by the `set avoiding' property.

\begin{remark}
Since our proof strategy revolves around the second moment of $X_{x,C^{(t)}}$, derandomised sequential repetition would be even easier to analyse if the `path clauses' $p \leftarrow C^{(t)}$ were sampled \emph{pairwise independently} instead of using a random walk. Pairwise independent sampling is too randomness-expensive to yield the parameters we want, so we cannot use it in \cref{def:derandomised-seq-rep}; however, this observation is the motivation for the proof strategy that we adopt (see `Technique 1: commuting layers' below).
\end{remark}

\paragraph{Difficulties in quantisation}
Although this outline is sufficient for classical CSPs, unfortunately, we encounter an obstacle when we try to quantise the argument above. We can, analogously with \cref{def:derandomised-seq-rep}, define the \emph{derandomised tensor product amplification} of a Hamiltonian $H$ and call it $H^{(t)}$: this is the motivation for \cref{def:amplified_ham}, in which we define our construction of the amplification of a Hamiltonian. However, $H^{(t)}$ may be non-commuting, so there need not be a valid definition of the violation sets $\mathsf{Viol}_{x, i}$ which we relied on in the classical analysis. Moreover, it may have exclusively entangled ground states, so measuring a clause could collapse the state and affect the probabilities that other clauses are violated.

In this overview we will not attempt to give a complete sketch of our analysis: the reader who wants to understand our entire proof can go directly to the technical sections, which contain further exposition. Here we will focus instead on describing at a high level the two main techniques that we use to overcome the difficulty that the ground states of $H^{(t)}$ may be entangled. Both these techniques rely in a sense on reducing the analysis of an entangled state to that of a convex combination of product states, albeit in different ways.

\paragraph{Technique 1: commuting layers}
If all the terms in the original Hamiltonian $H$ \emph{commute}, then we can reduce our analysis to the classical case: we simply imagine measuring the potentially entangled ground state of $H^{(t)}$ in a complete basis that diagonalises all the terms in $H^{\otimes t}$ simultaneously. This collapses the state to a mixture over product states, which we can deal with using the classical argument and convexity. Of course, it is not true that all the terms in $H$ commute, but we can conduct some parts of our analysis (e.g.~\cref{lemma:second_moment_Xs}) by partitioning the terms in $H$ into at most constantly many commuting \emph{layers}, analysing each layer separately, and recombining the layers afterwards (up to some amount of loss).\footnote{Note that, since our amplification procedure (see \cref{def:amplified_ham}) preserves the number of commuting layers, we can iterate our procedure as many times as we like as long as we start with an $H$ that has constantly many layers. We show that such Hamiltonians are QMA-complete in \cref{section:qma_complete_equitable}.}

We use this commuting layers technique only in order to prove that random walk subsampling of terms can be approximately by \emph{pairwise independent} subsampling in our setting. We then develop a new technique---our `miser's de Finetti' technique, which we introduce shortly---in order to prove that pairwise independent sampling is similar enough to fully independent sampling for us even on entangled states. Fully independent sampling is equivalent to (non-derandomised) tensor product amplification, so this statement allows us to get good bounds on the amplification guarantees of our procedure. This last step -- i.e. relating the pairwise independent case to the fully independent case -- is arguably the hardest part of our analysis.

\begin{remark}
The authors of \cite{aalv09} encounter similar issues, and also resolve them by partitioning the original Hamiltonian into commuting layers. However, they carry through their \emph{entire} analysis using the commuting-layer approach to deal with the entanglement problem, while we use both the commuting-layer approach and also our `miser's de Finetti' technique. Their approach gives rise to constants in the amplification bounds that depend on the locality of terms in $H$, and in particular become unmanageable when the locality of $H$ is larger than constant; ours, perhaps surprisingly, does not. This is the chief reason we are able to iterate our procedure.
\end{remark}

\paragraph{Technique 2: miser's de Finetti}
Quantum information theory gives us a class of tools for reducing entangled states to convex combinations of product states, in the form of the \emph{quantum de Finetti theorems} \cite{renner2007symmetry,renner2008security}. There are many types of quantum de Finetti theorem, but they all say something like the following: if we take a \emph{symmetric} (permutation-invariant) pure quantum state $\rho$ on $t$ registers and trace out all but $k\ll t$ registers, the mixed state left over on those registers is `close' to a convex combination of product states.

Na\"ively, we might try to use a de Finetti theorem to solve our problem as follows. Recall that each term $\Pi_{\mathrm{amp}}$ in the amplified Hamiltonian $H^{(t)}$ (\cref{def:amplified_ham}) is a tensor product $\Pi_{\mathrm{amp}} = (\1-\Pi_1) \otimes \cdots \otimes (\1-\Pi_{t})$, where $\Pi_1, \dots, \Pi_t$ are terms in $H$, and the constituent terms of $\Pi_{\mathrm{amp}}$ as well as the order in which they appear are determined by a walk. Instead of placing the $t$ terms in $\Pi_{\mathrm{amp}}$ in the order that the walk dictates, we could pick a random permutation $\pi$ and set $\Pi_{\mathrm{amp}} = (\1-\Pi_{\pi(1)}) \otimes \cdots \otimes (\1-\Pi_{\pi(t)})$ instead---or, if we want our reduction to remain deterministic, we could put all $t!$ possible orderings as terms. This is equivalent to randomly permuting the registers of the state $\tau$ to which we apply the terms, so this manoeuvre allows us to assume that $\tau$ is permutation symmetric. We trace out all but the last $k\ll t$ registers of $\tau$, leaving us with a state which is (mostly) a convex combination of (approximately) product states. We can then apply the classical argument to these $k$ remaining registers.

Unfortunately, this na\"ive plan does not work for quantitative reasons. The best possible de Finetti theorem requires $k = \Omega(\log d)$, where $d$ is the dimension of the Hilbert space associated with a single register. This would mean that we get no amplification unless we set $t \geq \poly(n)$, which makes random walk sampling untenably expensive. This would also mean that, if we want a deterministic reduction, we need to have order $n!$ terms in the amplified Hamiltonian. Either of these facts is a sufficient obstacle to achieving an amplified Hamiltonian with a polynomial number of terms.

As such, we cannot hope to rely on a quantum de Finetti theorem as a black box in our setting.
However, perhaps surprisingly, we are able to make progress with a technique that is inspired by a particular style of proof for a de Finetti theorem. A strategy for proving a de Finetti theorem goes as follows \cite{vidick2016simple}. The keystone of the strategy is the following fact, which can be made quantitative and then proven using elementary techniques:
\begin{claim}
    If $\rho$ is any permutation symmetric state on $t$ registers, and we \emph{project} the last $t-k$ registers onto the state $\ket{\psi}^{\otimes (t-k)}$ for some pure single-register state $\ket{\psi}$, then the first $k$ registers of the residual state must be `close' to the pure state $\ket{\psi}^{\otimes k}$, because the state was permutation symmetric.
    \label{claim:dF-keystone}
\end{claim}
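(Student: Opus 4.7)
The plan is to make \cref{claim:dF-keystone} quantitative by working in an explicit ``type basis'' for the symmetric subspace and computing the action of the projection directly. By convexity and purification, it suffices to take $\rho = \proj{\Phi}$ for a pure permutation-symmetric $\ket{\Phi}$ lying in the symmetric subspace of $(\C^d)^{\ot t}$, where $d$ is the single-register dimension. Extend $\ket{\psi}$ to an orthonormal basis $\ket{0} \deq \ket{\psi}, \ket{1}, \ldots, \ket{d-1}$, and let $\{\ket{T_s}\}_s$ be the standard orthonormal type basis of the symmetric subspace, indexed by compositions $s = (s_0, \ldots, s_{d-1})$ of $t$, with $\ket{T_s}$ the normalised symmetrisation of $\ket{0}^{\ot s_0} \ot \ket{1}^{\ot s_1} \ot \cdots$. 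Writing $\ket{\Phi} = \sum_s c_s \ket{T_s}$, the permutation symmetry of $\ket{\Phi}$ is already built into this expansion.

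First, I would compute the action of the projection $\Pi = \1^{\ot k} \ot \proj{0}^{\ot (t-k)}$ on each basis element. A direct combinatorial count of the multiset permutations that place $\ket{0}$'s in all of the last $t-k$ slots gives
\begin{equation}
    \Pi \ket{T_s} \;=\; \sqrt{\binom{s_0}{t-k}\Big/\binom{t}{t-k}}\;\ket{T_{s'}} \ot \ket{0}^{\ot(t-k)}
    \quad\text{if } s_0 \geq t-k,
\end{equation}
and $\Pi\ket{T_s}=0$ otherwise, where $s' \deq (s_0 - (t-k), s_1, \ldots, s_{d-1})$ is a type of total size $k$. Since distinct $s'$ give orthogonal $\ket{T_{s'}}$, the normalised post-projection marginal on the first $k$ registers is the \emph{classical} mixture
\begin{equation}
    \sigma_{[k]} \;=\; \frac{1}{Z}\sum_{s:\, s_0 \geq t-k} |c_s|^2 \,\frac{\binom{s_0}{t-k}}{\binom{t}{t-k}}\, \proj{T_{s'(s)}}, \qquad Z \deq \Tr[\Pi \rho].
\end{equation}

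The heart of the argument is then the observation that the weight $\binom{s_0}{t-k}/\binom{t}{t-k}$ is uniquely maximised at $s_0 = t$, where it equals $1$ and the corresponding $\ket{T_{s'}}$ equals $\ket{\psi}^{\ot k}$. Writing $s_0 = t-j$, this factor equals $\prod_{i=0}^{j-1}(k-i)/(t-i) \leq (k/t)^j$, so the contributions of all types of ``defect'' $j \geq 1$ decay geometrically. Summing and using $\sum_s |c_s|^2 \leq 1$ gives
\begin{equation}
    \sum_{s:\, s_0 < t} |c_s|^2 \,\frac{\binom{s_0}{t-k}}{\binom{t}{t-k}} \;\leq\; \sum_{j \geq 1} (k/t)^j \;\leq\; 2k/t \qquad (k \leq t/2),
\end{equation}
which paired with the fact that $|c_{(t,0,\ldots,0)}|^2$ lower-bounds the weight on $\ket{\psi}^{\ot k}$ yields a trace-distance estimate $\|\sigma_{[k]} - \proj{\psi}^{\ot k}\|_1 = O(k/(tZ))$.

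The main obstacle is precisely the factor $1/Z$: when the projection probability is small the bound is vacuous, and indeed the claim as literally stated fails (take $\ket{\Phi} = \ket{T_{(t-k,k,0,\ldots)}}$, for which $Z = 1/\binom{t}{k}$ and the residual is $\proj{1}^{\ot k}$, orthogonal to $\proj{\psi}^{\ot k}$). I therefore expect the form of \cref{claim:dF-keystone} that plays the keystone role in the de Finetti argument of \cref{sec:low-energy-analysis} to be a conditional one --- either ``conditioned on the projection succeeding with at least probability $Z_0$, the residual is within $O(k/(tZ_0))$ of $\proj{\psi}^{\ot k}$'', or an averaged version over a suitable ensemble of test states $\ket{\psi}$ in which the typical $Z$ is controlled. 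Ensuring the constants behave well simultaneously as $k$ and $t$ grow, and survive the subsequent averaging over expander-walk paths and commuting layers, is where I expect the real work to lie.
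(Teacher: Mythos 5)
Your type-basis computation is correct, and your counterexample showing that the claim \emph{as literally stated} fails (a type state with small $s_0$ makes $Z$ exponentially small and the residual orthogonal to $\ket{\psi}^{\otimes k}$) is a sharp observation: any quantitative version must be conditional on the projection probability $Z$ not being too small. Be aware, though, that the paper never proves \cref{claim:dF-keystone}. It is an informal restatement of the keystone of \cite{vidick2016simple}'s argument, given in the technical overview as motivation, and the text immediately following it concedes exactly the weakness you diagnosed: making the claim quantitative would require $t = \Omega(\log d)$ with $d$ the single-register dimension, which is untenable for the application.

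Your closing speculation about what plays the ``keystone role'' in \cref{sec:low-energy-analysis} is somewhat off target: the paper does not pursue a conditional or ensemble-averaged version of \cref{claim:dF-keystone}. It abandons that route entirely and instead relies on \cref{lemma:de-finetti}, which is a purely classical sampling-without-replacement concentration bound (\cref{fact:tomamichel}) applied to the outcomes of the commuting binary measurements $\{\mathbb{I}-\Pi^{\geq\alpha}_{\reg j},\,\Pi^{\geq\alpha}_{\reg j}\}$ across registers. The whole point of the ``miser's de Finetti'' idea is that one never attempts to show the residual state is close to $\ket{\psi}^{\otimes k}$; one only controls a single coarse-grained binary observable per register (energy above/below $\alpha$), which \emph{sidesteps} both the $1/Z$ normalisation issue you found and the $\log d$ scaling, rather than solving either. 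So your analysis is a correct quantitative treatment of the informal claim and a correct diagnosis of its defect, but the paper's actual technical tool takes a different and more elementary route that deliberately avoids needing anything like \cref{claim:dF-keystone}.
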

It then turns out that, for any permutation symmetric state $\rho$ on $t$ registers, we can write $\rho$ with the last $t-k$ registers traced out as a convex combination of states of the form $(\id_{t-k} \otimes \bra{\psi}^{\otimes k}) \rho (\id_{t-k} \otimes \ket{\psi}^{\otimes k})$. Combining these two deceptively simple observations yields the de Finetti theorem: if we take a symmetric state $\rho$ on $t$ registers and trace out the final $t-k$ registers, the mixed state left over on the first $k$ registers is `close' to a convex combination of product states $\ket{\psi}^{\otimes k}$.

This argument is at the heart of \cite{vidick2016simple}, which gives an astonishingly simple proof of the so-called `exponential de Finetti theorem'. However, in order to make \cref{claim:dF-keystone} true quantitatively, $t$ has to be quite large. Intuitively, we can understand this as follows: arguing that the first $k$ registers of a permutation-symmetric state $\rho$ must look similar to $\ket{\psi}^{\otimes k}$ if the last $t-k$ registers are in the state $\ket{\psi}^{\otimes (t-k)}$ is rather like collecting \emph{measurement statistics} about $\rho$ from the last $t-k$ registers and then arguing that, since the state is permutation symmetric, the measurement statistics are representative of the first $k$ registers as well. If $t$ is small, the statistics which were collected from the measured registers are unlikely to be informative about the remaining $k$ registers. In particular, requiring the remaining $k$ registers to be in a state close to $\ket{\psi}^{\otimes k}$ is a stringent condition---we are characterising the leftover state very finely if we find that this condition is true---and one would therefore expect to look at a large number of registers $t$ before one could conclude such a thing with any reasonable probability.

The insight which allows us to prove \cref{lem:2ndmoment} is that, in our situation, we don't actually \emph{need} such a fine characterisation of the state. Indeed, we don't care if the state $\tau$ is actually product or not: much weaker guarantees suffice for us. As such, we can potentially get away with collecting far fewer `measurement statistics' about $\tau$ than we would have to collect if we wanted to know that $\tau$ was product. For example, it is sufficient for us that $\tau$ `looks product' with respect to measurements of $H$, the original Hamiltonian  (because the quantity we are trying to bound involves only the trace of $H$ on $\tau$ in various registers). As such, we could measure $\tau$ in an eigenbasis that diagonalises $H^{\otimes t}$, and then use a \emph{classical} de Finetti theorem on the measurement outcomes. It turns out that this approach also requires an unviably large $t$ (since classical de Finetti theorems also scale unfavourably with the alphabet size of the random variables)---but we can do even better by realising that we don't need to know the precise eigenvalues of the eigenstates we get in each register after the eigenbasis measurement. Indeed, a very coarse approximation (`is the eigenvalue big or small?') will suffice. This coarse-graining, which allows us to reduce the alphabet size of the measurement outcomes to which we apply de Finetti--inspired techniques, is the chief reason that we can reduce $t$ to something manageable.

This is the idea behind the auxiliary measurement we introduce in \cref{section:aux_energy_meas}. The binary projective measurements $\{\Pi^{< \alpha},\mathbb{I}-\Pi^{< \alpha}\}$ (\cref{definition:aux_meas}) act on the same Hilbert space as $H$, and $\Pi^{< \alpha}$ simply projects onto the energy eigenspaces of $H$ with lower energy than $\alpha$: therefore, $\Pi^{<\alpha}$ commutes with $H$. Measuring $\Pi^{< \alpha}$ on a random set of $t/2$ registers allows us to estimate the ``energy" $E$ of $\tau$ (the estimate we use is the number of $\geq \alpha$ outcomes). We then  measure the complementary $t/2$ registers, and use Chernoff-bound-like tools to argue that they are likely to land in eigenspaces with similar energy $E$. After that, our proof proceeds via a careful case analysis, which hinges on whether $E$ is `high' or `low'. In effect, our strategy is to partition the Hilbert space in which $\tau$ lives into (constantly many) subspaces, each of which has predictable behaviour with respect to $H$; and we are able to proceed with the analysis after we project (‘pinch’) $\tau$ into one of these subspaces because the ‘pinching’ measurement commutes with the measurements of $H$, which are the only measurements that we care about in \cref{lem:2ndmoment}.

\subsection{Organization}

We begin in section \cref{sec:preliminaries} by presenting basic definitions of local Hamiltonians and expander graphs, as well as some basic probability facts. 

We proceed in \cref{section:viol_number_measurement} by formally introducing the violation number measurement, an observable which quantifies the number of violated clauses on a random length $t$ path. In \cref{section:completeness} we argue the completeness of the amplification scheme (\cref{thm:tp_amplification_intro}, Completeness), by applying the first moment method to the amplified Hamiltonian.

In \cref{section:aux_energy_meas} we present the auxiliary energy measurement, a central technical tool which captures the de Finetti ingredient of our proof. In \cref{sec:soundness}, we present the proof of soundness of our amplification scheme (\cref{thm:tp_amplification_intro}, Soundness), wherein we apply the second moment method to the violation number measurement. 

In \cref{sec:iterated_to_constant}, we present the proof of \cref{thm:qma_hard_iterative_intro}, via the iterated application of \cref{thm:tp_amplification_intro}. In \cref{section:qma_complete_equitable}, we describe the base case of our iteration, on local projection Hamiltonians with a constant number of layers.

\paragraph{Acknowledgements}
We thank Anurag Anshu for useful discussions and suggesting the DL-based construction of a streaming quantum PCP presented in \cref{section:dl}.
We thank Dorit Aharonov, Niko Breuckmann, Sevag Gharibian, Louis Golowich, Anand Natarajan, Quynh The Nguyen, Nikhil Srivastava, and John Wright for helpful discussions.
Part of this work was completed while the authors were visiting the Simons Institute for the Theory of Computing.
TM acknowledges support from SNSF Grant No.~20QU-1\_225171 and NCCR SwissMAP. TV is supported by AFOSR Grant No. FA9550-22-1-0391 and ERC Consolidator Grant VerNisQDevS (101086733).
TZ was partially supported by NSF CAREER grant number 2339948.

\section{Preliminaries}
\label{sec:preliminaries}

We dedicate this section to basic facts on the local Hamiltonian problem, probability theory, and expander graphs.

\subsection{The Local Hamiltonian problem}
\label{sec:local_hamiltonians}

The central problem of study in Hamiltonian complexity theory is computing the ground state energy of a local Hamiltonian.

\begin{definition}
    [$k$-$\mathsf{{LH}[a, b]}$] Let $H = \sum_i^m h_i$ be a $k$-local Hamiltonian on $n$ qubits and $m = \poly(n)$ terms, where each $h_i$ is a hermitian operator acting on $k$ qubits and is described using $\poly(n)$ bits. Let $0\leq H\leq \mathbb{I}$.
    
    Further, let $a < b$ be two real parameters described using $\poly(n)$ bits; $b-a$ is referred to as the ``promise gap" of $H$. The $k$-local Hamiltonian problem $k$-$\mathsf{{LH}[a, b]}$ then consists of the task of deciding whether the ground state energy $\lambda_{min}(H)\leq a$ or $\lambda_{min}(H)\geq b$, promised that $H$ satisfies one of the two cases. 
\end{definition}

\cite{kitaev02} proved that the local Hamiltonian problem is $\mathsf{QMA}$-Complete. 

\begin{theorem}
    [\cite{kitaev02}] The $5$-$\mathsf{{LH}[2^{-\poly(n)}, 1/\poly(n)]}$ problem is $\mathsf{QMA}$-complete. 
\end{theorem}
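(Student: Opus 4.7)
\textbf{Containment in $\mathsf{QMA}$.} The plan is standard: the prover sends polynomially many copies of a purported ground state $|\psi\rangle$, and on each copy the verifier samples a uniformly random term $h_i$ from the decomposition $H=\sum_i h_i$ and measures it. Since each $h_i$ is $5$-local and efficiently described, this is implementable in polynomial time, and the outcome is an unbiased estimator of $\langle\psi|H|\psi\rangle/m$. Averaging over $\poly(n)/(b-a)^2$ copies and applying a Chernoff bound distinguishes the yes and no cases since $b-a \geq 1/\poly(n)$.

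\textbf{$\mathsf{QMA}$-hardness via circuit-to-Hamiltonian.} For hardness I would use Kitaev's circuit-to-Hamiltonian reduction. Given any $L\in\mathsf{QMA}$ verified by a uniform family of circuits $V_x = U_T\cdots U_1$ acting on a witness register plus ancillas initialised to $|0\rangle$, with $T=\poly(n)$ two-qubit gates (amplified so that completeness is $1-2^{-\poly(n)}$), I would encode an accepting transcript as a \emph{history state}
\begin{equation}
|\eta_\psi\rangle \;=\; \frac{1}{\sqrt{T+1}} \sum_{t=0}^{T} \big(U_t\cdots U_1 |\psi\rangle|0^a\rangle\big)\otimes |t\rangle_C,
\end{equation}
where $|t\rangle_C$ is the unary encoding of $t$ on $T$ clock qubits. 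The Hamiltonian would be $H = H_{\text{in}} + H_{\text{prop}} + H_{\text{clock}} + H_{\text{out}}$, where $H_{\text{in}}$ penalises non-zero ancillas when the clock reads $0$, $H_{\text{prop}}=\sum_t H_t$ enforces that the state at clock time $t$ is $U_t$ applied to the state at clock time $t-1$, $H_{\text{clock}}$ penalises non-unary clock configurations, and $H_{\text{out}}$ penalises rejection at clock time $T$. With the unary clock and two-qubit gates, each summand touches at most $5$ qubits, so the output is indeed a $5$-local Hamiltonian.

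\textbf{Gap analysis and main obstacle.} Completeness is immediate: on yes instances the history state of any accepting witness has energy $\leq 2^{-\poly(n)}$, since $H_{\text{in}},H_{\text{prop}},H_{\text{clock}}$ all vanish on $|\eta_\psi\rangle$ and $H_{\text{out}}$ contributes only $(T+1)^{-1}$ times the (negligible) rejection probability. The hard part, and the main technical obstacle, is soundness: I need to show that on no instances every state has energy $\geq 1/\poly(n)$. For this I would invoke Kitaev's geometric (projection) lemma. Restricted to the valid-clock subspace, a sequence of basis changes that ``unrolls'' the computation shows that the null space of $H_{\text{in}}+H_{\text{prop}}$ is exactly spanned by the history states $\{|\eta_\psi\rangle\}_\psi$, and its smallest non-zero eigenvalue is $\Omega(1/T^2)$, a bound that ultimately comes from the spectral gap of a simple random walk on $\{0,1,\ldots,T\}$. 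Combining this gap with the large gap of $H_{\text{clock}}$ (paying a constant) and with $H_{\text{out}}$, whose value on any history state equals $(T+1)^{-1}\langle\psi|V^\dagger\Pi_{\text{rej}}V|\psi\rangle \geq \Omega(1/T)$ on no instances, via the geometric lemma yields a ground energy $\Omega(1/T^3) = 1/\poly(n)$, matching the required promise gap.
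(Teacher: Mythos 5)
Your sketch is the standard Kitaev circuit-to-Hamiltonian argument — history state with unary clock, the four Hamiltonian terms $H_{\text{in}}+H_{\text{prop}}+H_{\text{clock}}+H_{\text{out}}$, and a soundness bound via the geometric (projection) lemma — which is precisely the content of the reference the paper cites; the paper itself gives no proof, only the citation to \cite{kitaev02}. Your outline is correct and matches that reference, modulo the usual care one would need to take about the exact polynomial in the soundness bound (whether $\Omega(1/T^2)$ or $\Omega(1/T^3)$), which does not affect the conclusion that the gap is $1/\poly(n)$.
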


The Quantum PCP Conjecture stipulates that the local Hamiltonian problem remains just as hard under a constant promise gap. 

\begin{conjecture}
    [The Quantum PCP Conjecture \cite{aharonov2013quantumpcpconjecture}] There exists a constant locality $k$ as well as $a, b\in [0, 1]$ satisfying $b-a=\Omega(1)$ such that the $k$-$\mathsf{{LH}[a, b]}$ problem is $\mathsf{QMA}$-complete.
\end{conjecture}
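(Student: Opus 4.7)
Since the Quantum PCP Conjecture remains an open problem, any ``proof proposal'' can at best outline a plausible attack rather than a complete plan. The natural strategy, which this paper is explicitly designed to support, is to follow Dinur's two-step template from \cref{def:dinur's-template}: iteratively compose a gap amplification procedure with a locality (``alphabet'') reduction procedure, starting from the $5$-$\mathsf{LH}[2^{-\poly(n)},1/\poly(n)]$ problem, which is $\mathsf{QMA}$-complete by Kitaev's theorem \cite{kitaev02}.

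For the gap amplification half of each round, I would invoke \cref{thm:tp_amplification_intro}. Taken with a suitable constant $t$, a single application boosts the promise gap by a constant multiplicative factor while increasing locality by only a constant multiplicative factor; iterated $\Theta(\log n)$ times with no locality reduction in between, it yields the streaming quantum PCP of \cref{thm:qma_hard_iterative_intro}, whose terms are $O(n)$-fold tensor products of $O(1)$-local projections. To push this into the setting of the conjecture rather than merely the streaming setting, one would need to insert a locality reduction step between consecutive amplifications so that the locality stays bounded by an absolute constant throughout the iteration, while the gap continues to be amplified until it reaches $\Omega(1)$.

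The central obstacle is the locality reduction step, for which no genuine quantum analogue is currently known. Classically, Dinur composes with a PCP of proximity to slice long constraints into constant-sized ones while losing only a controlled factor in the gap; quantumly, as emphasised in \cref{section:related}, all known attempted quantisations of Dinur-style amplification (including \cref{thm:tp_amplification_intro} and that of \cite{aalv09}) are locality-\emph{increasing} rather than locality-preserving, and no quantum PCP composition theorem is available that is robust against entangled witnesses. A successful attack along this route would therefore have to either (a) construct a quantum PCP of proximity compatible with the tensor-product structure of the amplified clauses in \cref{def:amplified_ham}, allowing each amplified term to be verified by a constant-locality gadget without catastrophic gap loss on entangled states, or (b) circumvent locality reduction entirely, for example by first establishing a quantum games PCP (as discussed in \cref{section:discussion}) and then relating it back to Hamiltonians via a converse reduction. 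The hard part, and the reason this remains a conjecture rather than a theorem, is that devising such a quantum alphabet reduction appears to require fundamentally new ideas beyond both the tensor-product amplification developed here and the detectability-lemma approach of \cite{aalv09}.
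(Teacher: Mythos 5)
This statement is the Quantum PCP Conjecture itself: the paper states it as an open conjecture and offers no proof, so there is nothing to compare your proposal against. Your write-up correctly recognises this and accurately reflects the paper's own framing --- gap amplification (\cref{thm:tp_amplification_intro}) supplies one half of Dinur's template, while the missing ingredient is a quantum alphabet/locality reduction compatible with entangled witnesses, exactly as the paper emphasises in \cref{section:related}. Just be aware that what you have written is a survey of the obstruction, not a proof, and should not be presented as progress toward establishing the statement.
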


\subsection{Probability}
\label{sec:probability}

Here we recollect a series of simple inequalities. 

\begin{fact}
    [Markov's Inequality]\label{fact:markovs} For any non-negative random variable $X\geq 0$ and real parameter $a>1$,
    \begin{equation}
        \mathbb{P}[X>a\cdot \mathbb{E}[X]]\leq \frac{1}{a}
    \end{equation}
\end{fact}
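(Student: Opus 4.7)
The plan is to bound $\mathbb{E}[X]$ from below by its restriction to the tail event $\{X > a\cdot\mathbb{E}[X]\}$, using non-negativity of $X$ to discard the rest. Concretely, I would write
\begin{equation}
\mathbb{E}[X] \;=\; \mathbb{E}\!\left[X\cdot \mathbf{1}_{X > a\mathbb{E}[X]}\right] + \mathbb{E}\!\left[X\cdot \mathbf{1}_{X \leq a\mathbb{E}[X]}\right],
\end{equation}
and then drop the second summand, which is non-negative because $X \geq 0$.

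Next, on the event $\{X > a\mathbb{E}[X]\}$ the integrand satisfies $X > a\mathbb{E}[X]$ pointwise, so
\begin{equation}
\mathbb{E}\!\left[X\cdot \mathbf{1}_{X > a\mathbb{E}[X]}\right] \;\geq\; a\mathbb{E}[X]\cdot \mathbb{P}\!\left[X > a\mathbb{E}[X]\right].
\end{equation}
Chaining the two displays yields $\mathbb{E}[X] \geq a\mathbb{E}[X]\cdot \mathbb{P}[X > a\mathbb{E}[X]]$, and dividing by $a\mathbb{E}[X]$ gives the claimed bound $\mathbb{P}[X > a\mathbb{E}[X]] \leq 1/a$.

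The only subtlety is the degenerate case $\mathbb{E}[X] = 0$, where dividing is illegal; but non-negativity of $X$ forces $X = 0$ almost surely, so the event $\{X > a\mathbb{E}[X]\} = \{X > 0\}$ has probability zero and the inequality holds trivially. There is no real obstacle here — this is the textbook one-line proof of Markov's inequality, and the only thing to be careful about is writing strict versus non-strict inequalities consistently so that the indicator $\mathbf{1}_{X > a\mathbb{E}[X]}$ matches the event in the conclusion.
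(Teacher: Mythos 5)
Your proof is correct and is the standard textbook argument for Markov's inequality; the paper states this as a Fact without proof, so there is nothing to compare against. The decomposition by indicator, the pointwise lower bound on the tail, and the handling of the degenerate case $\mathbb{E}[X]=0$ are all fine.
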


\begin{fact}
    [First Moment Method]\label{fact:fmm} For any non-negative random variable $X\geq 0$,
    \begin{equation}
        \mathbb{P}[X>0]\leq \mathbb{E}[X]
    \end{equation}
\end{fact}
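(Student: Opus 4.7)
The plan is to derive this from Markov's inequality, noting that the statement is intended for random variables taking values in $\{0,1,2,\ldots\}$ (which is the case for all applications in this paper, most notably the violation number introduced informally in the technical overview and made precise in \cref{section:viol_number_measurement}). Without integrality, the claim fails---for instance, the constant random variable $X \equiv 1/2$ has $\mathbb{P}[X>0]=1 > 1/2 = \mathbb{E}[X]$---so I would first insert a clarification that $X$ is non-negative and integer-valued.

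Under this assumption, the inequality $X \geq \mathbb{1}[X > 0]$ holds pointwise, since whenever $X > 0$ we in fact have $X \geq 1$. Taking expectations on both sides gives $\mathbb{E}[X] \geq \mathbb{E}[\mathbb{1}[X>0]] = \mathbb{P}[X>0]$, which is precisely the claim. Equivalently, one can invoke \cref{fact:markovs} directly: writing $\mathbb{P}[X>0] = \mathbb{P}[X \geq 1]$ (using integrality), Markov's inequality applied at threshold $1 = (1/\mathbb{E}[X]) \cdot \mathbb{E}[X]$ yields $\mathbb{P}[X\geq 1] \leq \mathbb{E}[X]$, provided $\mathbb{E}[X] > 0$; the degenerate case $\mathbb{E}[X]=0$ forces $X \equiv 0$ almost surely and the bound is trivial.

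There is no real obstacle here, as the fact is a textbook corollary of Markov's inequality. The only point of care is the implicit integrality assumption, which should be stated alongside the fact so that its later uses (in bounding probabilities of non-zero violations from the expected number of violations) are logically sound.
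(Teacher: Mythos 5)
Your proof is correct and is the standard argument; the paper states \cref{fact:fmm} without proof as a textbook fact, so there is no in-paper argument to compare against. Your observation that the statement as literally written is false without an integrality hypothesis (witness $X \equiv 1/2$) is accurate and worth flagging: the fact as the paper uses it really requires $X > 0 \Rightarrow X \geq 1$, i.e.\ $X$ integer-valued (or, minimally, taking values in $\{0\}\cup[1,\infty)$). In every application in the paper the variable to which \cref{fact:fmm} is applied is a violation-count observable such as $N_f^{\chi}$, whose spectrum is $\{0,1,\dots,2t\}$, so the implicit hypothesis is indeed satisfied; but making it explicit, as you suggest, would tighten the exposition. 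Both of your routes---the pointwise bound $X \geq \mathbb{1}[X>0]$ followed by taking expectations, and the reduction to Markov via $\mathbb{P}[X>0]=\mathbb{P}[X\geq 1]$---are fine, with the first being marginally more elementary since it also handles $\mathbb{E}[X]=0$ without a separate case.
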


\begin{fact}
    [Second Moment Method]\label{fact:smm}
    For any non-negative random variable $X\geq 0$, 
    \begin{equation}
        \mathbb{P}[X>0]\geq \frac{\mathbb{E}[X]^2}{\mathbb{E}[X^2]}
    \end{equation}
\end{fact}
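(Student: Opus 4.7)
The plan is to prove this via a single application of the Cauchy--Schwarz inequality to a cleverly chosen factorisation of $X$. The key observation is that, since $X \geq 0$, we have the identity $X = X \cdot \1[X > 0]$ pointwise (the indicator is $1$ precisely on the support of $X$, and $X$ itself vanishes on the complementary event). This rewriting is the workhorse of the proof, because it exposes a product structure on which Cauchy--Schwarz can bite.

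Concretely, I would first dispense with the degenerate case $\mathbb{E}[X^2] = 0$: in that case $X = 0$ almost surely, so both sides of the claimed inequality equal $0$ and the statement holds trivially. Assuming $\mathbb{E}[X^2] > 0$, I would then write
\begin{equation}
    \mathbb{E}[X] \;=\; \mathbb{E}\bigl[X \cdot \1[X > 0]\bigr],
\end{equation}
and apply Cauchy--Schwarz to the two factors $X$ and $\1[X > 0]$ to obtain
\begin{equation}
    \mathbb{E}[X]^2 \;\leq\; \mathbb{E}[X^2] \cdot \mathbb{E}\bigl[\1[X > 0]^2\bigr] \;=\; \mathbb{E}[X^2] \cdot \mathbb{P}[X > 0],
\end{equation}
using that $\1[X>0]^2 = \1[X>0]$ and that the expectation of an indicator is its probability. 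Dividing through by $\mathbb{E}[X^2]$ (which is positive by assumption) yields the claimed bound.

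This argument is essentially one line, so I do not anticipate any real obstacle. The only subtle point worth flagging in the write-up is the pointwise identity $X = X \cdot \1[X > 0]$, which silently uses the non-negativity hypothesis $X \geq 0$; without this hypothesis the factorisation would fail (for instance, $X = -1$ would give $X \neq X \cdot \1[X > 0] = 0$) and the Cauchy--Schwarz step would not produce the desired conclusion. Since the statement is invoked later in the soundness analysis only on manifestly non-negative quantities (e.g.\ the violation number $X_{x, C^{(t)}}$), this hypothesis is always met in the relevant applications.
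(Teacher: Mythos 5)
Your proof is correct and is the standard Cauchy--Schwarz argument for the second moment method (also known as the Paley--Zygmund inequality at threshold zero). The paper states this as a bare Fact in the preliminaries with no proof of its own, so there is nothing to compare against; your write-up, including the handling of the degenerate case $\mathbb{E}[X^2]=0$ and the remark on why non-negativity is needed, is exactly what one would expect to see.
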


\begin{fact}
    [\cite{tomamichel2017largely}, Lemma 6]\label{fact:tomamichel} Let $Z_1, \cdots Z_{n+k}$ be a collection of binary random variables. Let $S\subset [n+k], |S|=k$ be a uniformly random subset. Then, 

    \begin{equation}
        \prs{}{\sum_{i\in S} Z_i \leq k\cdot \delta \text{ and } \sum_{i\in \Bar{S}}Z_i\geq n(\delta+\nu)} \leq \exp\bigg[-\frac{2\nu^2 nk^2}{(n+k)(k+1)}\bigg]
    \end{equation}
\end{fact}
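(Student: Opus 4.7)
My plan is to reduce this to a Hoeffding-type concentration inequality for sampling without replacement (specifically, Serfling's improvement). The first step is to condition on the values of $Z_1,\dots,Z_{n+k}$, so that the $Z_i$'s become fixed and the only remaining randomness is the uniform choice of $S$. Because the claimed bound depends only on $n,k,\nu$ and not on the $Z_i$'s, it suffices to establish the bound for an arbitrary fixed binary sequence; the general case then follows by averaging.

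With the $Z_i$'s fixed, let $T = \sum_{i=1}^{n+k} Z_i$, $p = T/(n+k)$, $X = \sum_{i\in S}Z_i$, and $Y = \sum_{i\in \bar S}Z_i = T-X$. The next step is a simple rearrangement: the joint event $\{X \le k\delta\}\cap\{Y \ge n(\delta+\nu)\}$ implies $Y/n - X/k \ge \nu$, which multiplying out gives $(n+k)Y \ge nT + nk\nu$, i.e.
\[
Y - np \;\ge\; \frac{nk\nu}{n+k}.
\]
So the event of interest is contained in a one-sided large-deviation event for $Y$, which is exactly a sum of a uniformly random size-$n$ subset drawn without replacement from the population $\{Z_1,\dots,Z_{n+k}\}$ of mean $p$.

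The final step is to invoke Serfling's sampling-without-replacement Hoeffding inequality, which for a size-$n$ sample from a population of size $N=n+k$ with entries in $[0,1]$ yields
\[
\Pr\!\left[\tfrac{Y}{n} - p \ge t\right] \;\le\; \exp\!\left(-\frac{2n t^2}{1 - (n-1)/(n+k)}\right) \;=\; \exp\!\left(-\frac{2n(n+k)\, t^2}{k+1}\right).
\]
Plugging in $t = k\nu/(n+k)$ gives exactly the target bound $\exp\!\big(-\tfrac{2\nu^2 n k^2}{(n+k)(k+1)}\big)$.

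I do not anticipate a significant obstacle: the only nontrivial ingredient is the Serfling refinement of Hoeffding's inequality (as opposed to the textbook Hoeffding bound, which would give a worse factor of $1/(n+k)^2$ in the exponent rather than $1/((n+k)(k+1))$). It is important to phrase the argument on $Y$ (the $n$-subset) rather than $X$ (the $k$-subset), since this is what yields the asymmetric $k^2/(k+1)$ factor in the exponent; doing the analogous calculation for $X$ would produce the symmetric-looking but weaker bound with $n^2/(n+1)$.
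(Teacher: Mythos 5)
The paper does not prove this statement; it cites it verbatim as Lemma 6 of \cite{tomamichel2017largely}, so there is no ``paper's own proof'' to compare against. Your proof is correct and is, in substance, the standard derivation used in the cited source: condition on the $Z_i$'s, observe that the joint event forces a one-sided deviation $Y - np \geq \frac{nk\nu}{n+k}$ for the sum $Y$ of a uniform $n$-subset, and apply Serfling's without-replacement Hoeffding inequality with the $\bigl(1 - \frac{n-1}{N}\bigr)^{-1}$ refinement, which with $N = n+k$ yields the $(n+k)/(k+1)$ factor and hence exactly the claimed exponent. All the algebraic steps check out.

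One small inaccuracy in your closing remark: the $X$-based calculation does not give a ``weaker'' bound in general. It yields $\exp\!\bigl(-\tfrac{2\nu^2 k n^2}{(n+k)(n+1)}\bigr)$, and comparing $\tfrac{nk^2}{k+1}$ with $\tfrac{kn^2}{n+1}$ shows the $X$-based exponent is actually \emph{larger} (tighter bound) whenever $k < n$, and the two coincide when $n=k$, which is the case in \cref{lemma:de-finetti}. What is true is that only the $Y$-based version reproduces the \emph{stated} bound with the $k^2/(k+1)$ factor, which is the point that matters for matching the fact as quoted.
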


\subsection{Expander graphs}
\label{sec:expanders}

We rely on a series of basic facts of expander graphs. 

\begin{definition}
    [Spectral Expansion]\label{definition:spectral}
    Let $G = (V, E)$ be an undirected $d$-regular graph, and let $A$ be its adjacency matrix. Let $d=\lambda_1\geq \lambda_2\geq \cdots \geq \lambda_n$ be its eigenvalues. Then $G$ is said to be a $\lambda$-spectral-expander if 
    \begin{equation}
        \max_{i\neq 1} |\lambda_i| \leq  \lambda.
    \end{equation}
\end{definition}

\noindent We often-times will refer to the transition matrix $P = A/d$ of the random walk on $G$, and let $\mu = \lambda/d$. For our reduction we require an explicit deterministic construction. 

\begin{lemma} [\cite{Reingold2000EntropyWT}]\label{lemma:expander_construction}
    There exists explicit constants $m_0, d\in\mathbb{N}$ and a parameter $d>\lambda > 0$, such that there exists a deterministic polynomial-time constructable family $\{G_m\}_{m\geq m_0}$ of $d$-regular $\lambda$-spectral-expander graphs on $m$ vertices for every $m\geq m_0$.
\end{lemma}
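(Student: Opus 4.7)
The plan is to invoke the explicit expander construction of Reingold, Vadhan, and Wigderson based on the zig-zag graph product. First, I would fix a small constant-size ``seed'' expander $H$, which can be identified by exhaustive search as a $(D^2, D, \lambda_H)$-graph, i.e.\ a $D$-regular graph on $D^2$ vertices with spectral expansion $\lambda_H < D$, for absolute constants $D$ and $\lambda_H$. Since $H$ has constant size, brute-force enumeration over all candidate adjacency matrices together with direct eigenvalue computation suffices to locate it in $O(1)$ time.

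Next, recall the zig-zag product, which combines a $D^2$-regular graph $G$ with the seed $H$ to produce a new graph of $|V(G)| \cdot D^2$ vertices and degree $D^2$, and whose spectral expansion is bounded by an explicit function of $\lambda_G$ and $\lambda_H$ that stays strictly below $D^2$. Iterating this operation, one sets $G_1 = H^2$ and defines $G_{k+1}$ to be the zig-zag product of $G_k^2$ with $H$, up to parameter adjustments so that the degrees match. This yields an infinite family of $D^2$-regular graphs whose sizes grow geometrically in $k$ and whose spectral expansion is uniformly bounded by a single constant $\lambda^\star < D^2$ independent of $k$. The main technical work in this step is the analysis of the spectral expansion of the zig-zag product itself, which is the core contribution of the RVW paper and which I would invoke as a black box rather than re-derive.

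To produce an expander on \emph{every} integer size $m \geq m_0$ (rather than only the geometric sizes $\{N_k\}$ produced by the iteration), I would interpolate between consecutive sizes. Given $m \in [N_k, N_{k+1})$, and using the fact that $N_{k+1}/N_k = O(1)$, take the expander on $N_k$ vertices and replace $m - N_k$ canonically-chosen vertices (say, the lexicographically first ones) by two-vertex clusters connected by a single edge, adding self-loops elsewhere so that the resulting multigraph remains regular. This modification inflates the degree by at most a constant and shifts the spectral expansion by at most an additive constant, yielding a $d$-regular $\lambda$-spectral-expander on exactly $m$ vertices for absolute constants $d$ and $\lambda < d$. The whole procedure is deterministic and runs in $\poly(m)$ time, which is exactly the conclusion of the lemma. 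I expect the hardest part of any self-contained proof to be the zig-zag analysis; by contrast, the size-interpolation step is routine, so the bulk of the intellectual content of the lemma is packaged inside the cited RVW theorem.
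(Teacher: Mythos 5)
Your overall strategy --- build a strongly explicit expander family at geometrically spaced sizes via the RVW zig-zag iteration, then interpolate to all $m \geq m_0$ --- is exactly the route the paper outsources: it simply cites \cite{Reingold2000EntropyWT} for the geometric family and points to Dinur's Lemma~2.1 for the interpolation. The problem is in your interpolation step. Given $m \in [N_k, N_{k+1})$, you propose to grow the $N_k$-vertex expander to $m$ vertices by replacing $m - N_k$ vertices with two-vertex clusters. That requires $m - N_k \leq N_k$, i.e.\ $m \leq 2N_k$, but the zig-zag iteration you describe produces $N_{k+1} = D^2 \cdot N_k$, where $D$ is the (fixed but not small) degree of the seed graph; since $D^2 \gg 2$, the entire range $m \in (2N_k,\, N_{k+1})$ is unreachable by your construction. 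Writing ``$N_{k+1}/N_k = O(1)$'' is true but does not license the bound of $2$ that your duplication trick silently needs.

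The standard fix, and what Dinur's cited Lemma~2.1 actually does, is to interpolate \emph{downward}: take the expander on $N_{k+1} \geq m$ vertices and contract vertices into groups of size at most $\lceil N_{k+1}/m \rceil = O(1)$. Contraction of bounded-size groups keeps the degree within a constant factor and preserves expansion, and it covers the full range of $m$. Alternatively, you could salvage your split idea by blowing up each chosen vertex into a cluster of size up to $D^2$ rather than size $2$. As a secondary point, even in the regime where your duplication trick applies, the claim that replacing vertices by singly linked pairs (plus self-loops to repair regularity) ``shifts the spectral expansion by at most an additive constant'' is asserted without proof and is not obvious: a pair joined by one edge is a near-bottleneck, and self-loops shift the spectrum. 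You would need to argue this directly about the new walk operator, or better, adopt the contraction route, where the spectral estimate is routine.
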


\noindent See e.g. \cite[Lemma 2.1]{dinur2007pcp} on how to achieve the condition $\forall m\geq m_0$, from the infinite family of \cite{Reingold2000EntropyWT}.

The constraints on $\lambda$ entail $\mu < 1$. We will require a short lemma on quadratic forms of the transition matrix $P$ of an expander graph. 

\begin{lemma}\label{lemma:expander-quadratic}
    Let $P=A/d$ be the transition matrix of a $d$-regular $(d\cdot \mu)$-spectral-expander graph on $m$ vertices. Then, for all vectors $a, b\in [0, 1]^{\times m}$ and integer $t$:
    \begin{equation}
        a^T P^t b \leq \frac{1}{m} \|a\|_1\cdot  \|b\|_1 +\mu^{t} \big(\|a\|_1+  \|b\|_1\big)
    \end{equation}
\end{lemma}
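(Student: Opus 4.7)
The plan is to prove this via the standard orthogonal decomposition with respect to the top eigenvector of $P$, together with a norm conversion that exploits the assumption $a,b \in [0,1]^m$.

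First, I would decompose each of $a$ and $b$ into components parallel and perpendicular to the all-ones vector $\mathbf{1}$. Concretely, write $a = a_\parallel + a_\perp$, where $a_\parallel = (\|a\|_1/m)\,\mathbf{1}$ (using that $a$ has non-negative entries, so $\langle a,\mathbf{1}\rangle = \|a\|_1$) and $a_\perp \in \mathbf{1}^\perp$; analogously for $b$. Since the graph is $d$-regular and undirected, $P = A/d$ is symmetric with $P\mathbf{1} = \mathbf{1}$, so the spaces $\mathrm{span}(\mathbf{1})$ and $\mathbf{1}^\perp$ are both $P^t$-invariant and the cross terms in $a^T P^t b$ vanish. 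This leaves
\begin{equation}
a^T P^t b \;=\; a_\parallel^T P^t b_\parallel \;+\; a_\perp^T P^t b_\perp.
\end{equation}

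Next, I would handle the two contributions separately. The parallel term is $a_\parallel^T P^t b_\parallel = (\|a\|_1\|b\|_1/m^2)\cdot \mathbf{1}^T P^t \mathbf{1} = \|a\|_1\|b\|_1/m$, which matches the first term in the claim. For the perpendicular term, $\lambda$-spectral expansion gives $\|P^t b_\perp\|_2 \leq \mu^t \|b_\perp\|_2$, and Cauchy--Schwarz together with $\|a_\perp\|_2 \leq \|a\|_2$, $\|b_\perp\|_2 \leq \|b\|_2$ yield $|a_\perp^T P^t b_\perp| \leq \mu^t \|a\|_2 \|b\|_2$.

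Finally, I would convert the $2$-norms to $1$-norms using the input restriction $a,b \in [0,1]^m$: for any such $v$, $\|v\|_2^2 = \sum_i v_i^2 \leq \sum_i v_i = \|v\|_1$, so $\|v\|_2 \leq \sqrt{\|v\|_1}$. Combined with AM--GM, this gives
\begin{equation}
\mu^t \|a\|_2 \|b\|_2 \;\leq\; \mu^t \sqrt{\|a\|_1 \|b\|_1} \;\leq\; \tfrac{\mu^t}{2}\bigl(\|a\|_1 + \|b\|_1\bigr) \;\leq\; \mu^t\bigl(\|a\|_1 + \|b\|_1\bigr),
\end{equation}
which matches the second term in the claim and completes the argument. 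There is no real obstacle here — the whole proof is a routine application of the expander mixing lemma strategy; the only thing to be careful about is the $\ell_2$-to-$\ell_1$ conversion, which is where the $[0,1]$ entrywise bound enters, and which is the reason the final form involves $\|a\|_1 + \|b\|_1$ (via AM--GM) rather than $\sqrt{\|a\|_1\|b\|_1}$.
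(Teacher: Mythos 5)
Your proof is correct and takes essentially the same approach as the paper's: both decompose along the top eigenvector (you via the orthogonal split $a = a_\parallel + a_\perp$, the paper via explicit diagonalization of $P$, which is the same thing), both bound the off-$\mathbf{1}$ contribution by $\mu^t\|a\|_2\|b\|_2$ via Cauchy--Schwarz and spectral expansion, and both convert $\ell_2$ to $\ell_1$ norms using the $[0,1]$ entrywise bound together with AM--GM (you apply the entrywise bound before AM--GM, the paper after; the resulting inequalities are equivalent).
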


\begin{proof}
    Let us diagonalize $P = \sum_k \mu_k \gamma_k\gamma_k^T$, where $\mu_1=1$, $|\mu_{k>1}|\leq \mu$, and the eigenvectors $\gamma_k$ form an orthonormal basis.
\begin{align}
    a^T P^{t}b &= \frac{1}{m} \|a\|_1\cdot \|b\|_1 + \sum_k \mu_k^{t}(a\cdot \gamma_k)(b\cdot \gamma_k)  \leq \frac{1}{m} \|a\|_1\cdot \|b\|_1 + \mu^{t} \sum_k |a\cdot \gamma_k||b\cdot \gamma_k|\\
    &\leq \frac{1}{m} \|a\|_1\cdot \|b\|_1 + \mu^{t} \bigg(\sum_k |a\cdot \gamma_k|^2\bigg)^{1/2}\bigg(\sum_k |b\cdot \gamma_k|^2\bigg)^{1/2} \leq \frac{1}{m} \|a\|_1\cdot  \|b\|_1 +\mu^{t}\|a\|_2\cdot  \|b\|_2 
\end{align}

\noindent Where, in sequence, we used $|\mu_k|\leq \mu$, the Cauchy-Schwartz inequality, and the ortho-normality of the $\gamma_k$. Now, 
\begin{equation}
    \|a\|_2\cdot  \|b\|_2 \leq \|a\|_2^2+  \|b\|_2^2\leq \|a\|_1+ \|b\|_1 
\end{equation}
by the AM-GM inequality and since each entry of $a, b$ is $\leq 1$.
\end{proof}

\section{The Main Quantity of Interest: The Violation Number Measurement}
\label{section:viol_number_measurement}

We recollect that we consider families of layered Hamiltonians as in \cref{definition:layered}. Following \cref{def:amplified_ham}, to define the amplified Hamiltonian, we amplify the layers separately:

\begin{equation}
    H = \sum_{\chi\in [g]}w_\chi H_\chi \overset{\mathrm{amplify}}{\longrightarrow} H^{(2t)} \coloneqq \sum_{\chi\in [g]}w_\chi H_\chi^{(2t)}
\end{equation}

\noindent Let the number of projective terms in $H_\chi$ be $m_\chi$. 

For any given color $\chi$ and function $f \in \mathcal{F}_\chi$ (the paths of length $2t$, \cref{definition:paths}), it is useful to associate a measurement $N_f^{\chi}$. $N_f^{\chi}$ counts the number of projections of color $\chi$ that are violated when the projectors specified by $f\in \mathcal{F}_\chi$ are measured, i.e. the projector $\Pi_{f(j)}$ is measured on the $j$th register for all $j \in [2t]$.

\begin{definition}
    [Violation Number Measurement] Fix a color $\chi\in [g]$. Given a state $\rho$ on $2t\cdot n$ qubits and a function $f:[2t]\rightarrow [m_\chi]$:
    \begin{enumerate}
      \item Divide the $2t\cdot n$ qubits into $2t$ blocks of $n$ qubits, and measure each block as follows: On the $j$th block, perform the measurement $\{\mathbb{I}-\Pi^\chi_{f(j)}, \Pi^\chi_{f(j)}\}$ with outcome $b_j\in \{0, 1\}$. 
        \item Output $\sum_j b_j$.
    \end{enumerate}
    \label{def:colored-violation-number-meas}
\end{definition}

\noindent The observable associated with this measurement is denoted
    \begin{equation}\label{equation:n_as_sum}
        N_f^\chi\equiv \sum_{j\in [2t]}(\Pi^\chi_{f(j)})_{\reg j} \otimes \mathbb{I}_{[2t]\setminus j}\;.
    \end{equation}

The notation $(\Pi_{f(j)})_{\reg j}$ means $\Pi_{f(j)}$ acting on register $j$. Intuitively, $N^{\chi}_f$ chooses $2t$ projectors from the Hamiltonian $H_\chi$ according to the given function $f$ and then measures these projectors in parallel to count how many of them are violated on the state $\rho$.

\begin{remark}
    $N^{\chi}_f$ is a Hermitian operator with eigenvalues $\{0, 1, \dots, 2t\}$ whose eigenspaces are the projectors onto the different outcomes of the measurement procedure above. However, as written \cref{equation:n_as_sum} is not the eigendecomposition of $N^{\chi}_f$.
\end{remark}

We also define another collection of measurements, this one indexed by $\chi$ (color), $f$ (function) and also $S \subseteq [2t]$ (subset). Measuring $N_{f, S}^\chi$ will correspond to measuring the projectors specified by $f$ only in the registers whose indices are inside $S$, and counting how many violations result.

\begin{definition}
    [Subset Violation Number] Fix a color $\chi\in [g]$. Given a state $\rho$ on $2t\cdot n$ qubits, a function $f:[2t]\rightarrow [m_\chi]$ and a subset $S \subseteq [2t]$:
    \begin{enumerate}
        \item Divide the $2t\cdot n$ qubits into $2t$ blocks of $n$ qubits, and for $j\in S$ measure $\{\mathbb{I}-\Pi^\chi_{f(j)}, \Pi^\chi_{f(j)}\}$ on the $j$th block with outcome $b_j\in \{0, 1\}$. 
        \item Output $\sum_{j\in S} b_j$.
    \end{enumerate}
    \label{def:colored-subset-violation-number-meas}
\end{definition}

\noindent We denote the associated observable as $N_{f, S}^\chi$ in the natural way. We will be interested in the probability that measuring $N^{\chi}_f$ (or $N^{\chi}_{f, S}$) on some state $\rho$ does \emph{not} yield outcome 0. We will write this probability as
\begin{align*}
\prs{\rho}{N^{\chi}_f > 0} \,.
\end{align*}
The projector onto the 0-eigenspace of $N^{\chi}_f$ is $\bigotimes_{j \in [2t]} (\1 - \Pi^\chi_{f(j)})$, so we have 
\begin{align}
\prs{\rho}{N^{\chi}_f > 0} = 1 - \tr{\bigotimes_{j \in [2t]} (\mathbb{I} - \Pi_{f(j)}^\chi)  \rho} \,. \label{eqn:n0_expanded}
\end{align}

\noindent The following lemmas give the relationship between the energy of $H^{(2t)}_\chi$ and the observables $N^{\chi}_f$ and $ N^{\chi}_{f,S}$.

\begin{lemma} \label{lemma:Nmin}
    Consider $H_\chi^{(2t)}$ and $N_f^\chi$ as defined above. Then, for any state $\rho$:
    \begin{equation}
        \tr{H^{(2t)}_\chi \rho} = \E_{f \in \cF_\chi} \prs{\rho}{N^{\chi}_f > 0}.
    \end{equation}
\end{lemma}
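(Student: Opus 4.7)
The plan is essentially a direct unpacking of the definitions: this lemma follows immediately by combining \cref{def:amplified_ham} with the identity in \cref{eqn:n0_expanded}, using linearity of the trace and expectation. First I would start from the definition
\begin{equation*}
H^{(2t)}_\chi = \E_{f \in \cF_\chi}\bigg(\mathbb{I} - \bigotimes_{j\in [2t]} \big(\mathbb{I}-\Pi^\chi_{f(j)}\big)\bigg),
\end{equation*}
and then take the trace against $\rho$, pulling the expectation outside by linearity of the trace.

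Next, for each fixed $f \in \cF_\chi$, I would observe that $\bigotimes_{j\in [2t]}(\mathbb{I} - \Pi^\chi_{f(j)})$ is exactly the projector onto the joint $0$-outcome subspace of the measurement described in \cref{def:colored-violation-number-meas}, i.e.\ onto the $0$-eigenspace of $N_f^\chi$. Therefore $\tr{\bigotimes_{j\in [2t]}(\mathbb{I} - \Pi^\chi_{f(j)}) \rho} = \prs{\rho}{N_f^\chi = 0}$, which is precisely the content of \cref{eqn:n0_expanded}. Substituting and using $1 - \prs{\rho}{N_f^\chi = 0} = \prs{\rho}{N_f^\chi > 0}$ yields the claim.

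There is no real obstacle here — the lemma is essentially a restatement of the definitions, and the only thing to verify is that the projector $\bigotimes_{j}(\mathbb{I} - \Pi^\chi_{f(j)})$ really does project onto the joint $0$-outcome subspace of the commuting binary measurements $\{\mathbb{I} - \Pi^\chi_{f(j)}, \Pi^\chi_{f(j)}\}$ performed on the $2t$ disjoint blocks of registers (which is immediate since they act on disjoint tensor factors and are individually projectors). The whole argument amounts to two or three lines of manipulation.
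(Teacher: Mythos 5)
Your proposal is correct and follows essentially the same route as the paper's one-line proof: compare \cref{def:amplified_ham} with \cref{eqn:n0_expanded}, noting that $\bigotimes_{j}(\mathbb{I}-\Pi^\chi_{f(j)})$ is the projector onto the $0$-eigenspace of $N^\chi_f$, and use linearity of trace and expectation. The paper simply states this without spelling out the steps, so your version is a correct expansion of the same argument.
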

\begin{proof}
    This follows by comparing the definition of $H^{(2t)}_\chi$ and \cref{eqn:n0_expanded}.
\end{proof}

\begin{lemma}\label{lemma:tilde_greater}
    Consider $H_\chi^{(2t)}$, $N_f^\chi$ and $N_{f, S}^\chi$ as defined above. Then, for any function $f$, state $\rho$, and subset $S$,
    \begin{equation}
        \tr{N^{\chi}_{f}\rho} \geq \tr{N^{\chi}_{f,S}\rho},\quad  \prs{\rho}{N^{\chi}_{f} > 0} \geq \prs{\rho}{N^{\chi}_{f,S} > 0}
    \end{equation}
\end{lemma}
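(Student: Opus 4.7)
The plan is to prove both inequalities by writing the relevant operators in a form that makes the operator-inequality structure transparent, and then tracing against the positive semidefinite state $\rho$.

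For the first inequality, I would simply write
\begin{equation}
N^{\chi}_{f} - N^{\chi}_{f,S} = \sum_{j \in [2t] \setminus S} (\Pi^\chi_{f(j)})_{\reg j} \otimes \mathbb{I}_{[2t]\setminus j},
\end{equation}
which is a sum of projections (each $\Pi^\chi_{f(j)}$ is a projection by assumption, and tensoring with identity preserves positivity). Hence $N^{\chi}_{f} \succeq N^{\chi}_{f,S}$ as operators, and taking the trace against $\rho \succeq 0$ yields $\tr{N^{\chi}_{f}\rho} \geq \tr{N^{\chi}_{f,S}\rho}$.

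For the second inequality, I would use the explicit form of the $0$-eigenspace projectors already recorded in \cref{eqn:n0_expanded}. The projector onto the $0$-eigenspace of $N^{\chi}_f$ is $\bigotimes_{j \in [2t]} (\mathbb{I} - \Pi^\chi_{f(j)})$, while the projector onto the $0$-eigenspace of $N^{\chi}_{f,S}$ is $\bigotimes_{j \in S} (\mathbb{I} - \Pi^\chi_{f(j)}) \otimes \mathbb{I}_{[2t] \setminus S}$. Since each $\mathbb{I} - \Pi^\chi_{f(j)}$ is a projection, it satisfies $0 \preceq \mathbb{I} - \Pi^\chi_{f(j)} \preceq \mathbb{I}$. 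Tensoring these bounds over the indices $j \in [2t] \setminus S$ (and using that the tensor product of PSD operators is PSD, so tensoring with such a factor preserves the Loewner order) gives
\begin{equation}
\bigotimes_{j \in [2t]} (\mathbb{I} - \Pi^\chi_{f(j)}) \;\preceq\; \bigotimes_{j \in S} (\mathbb{I} - \Pi^\chi_{f(j)}) \otimes \mathbb{I}_{[2t] \setminus S}.
\end{equation}
Tracing against $\rho$ yields $\prs{\rho}{N^{\chi}_{f}=0} \leq \prs{\rho}{N^{\chi}_{f,S}=0}$, and taking complements gives the desired inequality on $\prs{\rho}{\cdot > 0}$.

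I do not expect any real obstacle: both inequalities are essentially monotonicity statements reflecting the fact that $N^{\chi}_{f,S}$ counts violations over a sub-collection of the registers measured by $N^{\chi}_f$. The only small point requiring care is the tensor-product monotonicity step in the second inequality, which is standard (it follows, e.g., by inserting factors one at a time and using that $A \succeq 0$ and $0 \preceq B \preceq \mathbb{I}$ imply $A \otimes B \preceq A \otimes \mathbb{I}$).
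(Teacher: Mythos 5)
Your proof is correct and reaches the same conclusion by essentially the same underlying observation as the paper: the subset-restricted count is dominated by the full count. The paper phrases this operationally in one line (``the bits $b_j$ that are summed in the measurement procedure for $N^{\chi}_{f,S}$ are a subset of the bits summed in the measurement procedure for $N^{\chi}_f$''), which implicitly relies on the fact that the single-register projections on distinct registers commute, so the outcomes $(b_j)_{j}$ have a well-defined joint distribution and the domination holds pointwise on outcomes. You instead make the same monotonicity explicit at the operator level: $N^{\chi}_{f}-N^{\chi}_{f,S}\succeq 0$ for the trace inequality, and the Loewner comparison of the two $0$-eigenspace projectors (via $A\succeq 0$, $0\preceq B\preceq\mathbb{I}\Rightarrow A\otimes B\preceq A\otimes\mathbb{I}$) for the probability inequality. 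This is a slightly more verbose but entirely rigorous rendering of the same idea, and both steps are sound.
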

\begin{proof}
    The bits $b_j$ that are summed in the measurement procedure for $N^{\chi}_{f,S}$ are a subset of the bits summed in the measurement procedure for $N^{\chi}_f$.
\end{proof}

\section{On the Completeness of the Amplified Hamiltonian}
\label{section:completeness}
In this section, we present an upper bound on the ground state energy of the amplified Hamiltonian. The upper bound says that if the ground state energy of the original Hamiltonian was very low, then the ground state energy of the amplified Hamiltonian remains fairly low.

\begin{proposition} \label{lem:upperbound}
Fix a layered Hamiltonian $H$ as in \cref{definition:layered}, and the collection of function families $\{\mathcal{F}_\chi\}$ from \cref{definition:paths}. For any $t \in \N$, write $H^{(2t)} = \sum_\chi w_\chi H_\chi^{(2t)}$ to denote the derandomized $2t$-fold tensor product amplification of $H$ (\cref{def:amplified_ham}). The lowest eigenvalue of $H^{(2t)}$ is bounded from above by
\begin{align*}
\lambda_{\min}(H^{(2t)}) \leq 2 t \cdot \lambda_{\min}(H) \,.
\end{align*}
\end{proposition}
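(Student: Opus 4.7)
The plan is to exhibit a test state whose energy with respect to $H^{(2t)}$ is at most $2t \cdot \lambda_{\min}(H)$. The natural choice is the tensor product $\ket{\psi}^{\otimes 2t}$, where $\ket{\psi}$ is a ground state of $H$, living in the $2t$ copies of the original $n$-qubit Hilbert space. I expect this to work because the amplified Hamiltonian is constructed from products of ``surviving'' projectors $(\mathbb{I} - \Pi^\chi_{f(j)})$ across registers, and a product test state decouples across registers.

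Concretely, I would first look at a single path term. For fixed color $\chi$ and fixed $f \in \mathcal{F}_\chi$, writing $p_j \coloneqq \bra{\psi}\Pi^\chi_{f(j)}\ket{\psi} \in [0,1]$, the product structure of $\ket{\psi}^{\otimes 2t}$ gives
\begin{equation*}
\bra{\psi}^{\otimes 2t}\Big(\mathbb{I} - \bigotimes_{j\in [2t]}(\mathbb{I}-\Pi^\chi_{f(j)})\Big)\ket{\psi}^{\otimes 2t} = 1 - \prod_{j \in [2t]}(1-p_j) \leq \sum_{j\in[2t]} p_j,
\end{equation*}
where the inequality is the elementary ``union-bound'' inequality $1 - \prod_j (1-p_j) \le \sum_j p_j$ valid for any $p_j \in [0,1]$, provable by induction on $2t$.

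Next I would average over $f \in \mathcal{F}_\chi$. Since $\mathcal{F}_\chi$ corresponds to walks of length $2t$ in a $d$-regular graph and the uniform distribution on $[m_\chi]$ is stationary for the simple random walk on a regular graph, the marginal distribution of $f(j)$ is uniform on $[m_\chi]$ for every $j \in [2t]$. Therefore
\begin{equation*}
\mathbb{E}_{f \in \mathcal{F}_\chi}\sum_{j\in[2t]} \bra{\psi}\Pi^\chi_{f(j)}\ket{\psi} = \sum_{j\in [2t]} \mathbb{E}_{i \in [m_\chi]} \bra{\psi}\Pi^\chi_i\ket{\psi} = 2t\cdot \bra{\psi} H_\chi\ket{\psi}.
\end{equation*}
Combining this with the previous display yields $\bra{\psi}^{\otimes 2t} H^{(2t)}_\chi \ket{\psi}^{\otimes 2t} \leq 2t \cdot \bra{\psi}H_\chi\ket{\psi}$ for each color, and taking the convex combination $\sum_\chi w_\chi(\cdot)$ gives $\bra{\psi}^{\otimes 2t} H^{(2t)} \ket{\psi}^{\otimes 2t} \leq 2t\cdot \bra{\psi}H\ket{\psi} = 2t \cdot \lambda_{\min}(H)$. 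Since this is an upper bound on $\lambda_{\min}(H^{(2t)})$ via the variational principle, the proposition follows.

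There is essentially no hard step here: the whole argument is a variational bound using a product ansatz plus the union-bound inequality, and the only technical observation needed is that $d$-regularity of the expander makes the one-step marginal of a uniform walk uniform on vertices. Everything else is linearity of expectation and convexity in the color weights.
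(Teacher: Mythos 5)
Your proposal is correct and takes essentially the same approach as the paper: the paper also takes the test state $\rho^{\otimes 2t}$ for $\rho$ a ground state of $H$, applies what it calls the ``first moment method'' ($\Pr[X>0]\le \E[X]$, which on a product state is exactly your union-bound inequality $1-\prod_j(1-p_j)\le\sum_j p_j$), and then uses that the marginal of each walk step $f(j)$ is uniform to collapse the sum to $2t\cdot\Tr[H_\chi\rho]$ per color before taking the convex combination over colors. The paper phrases this in its violation-number-measurement language (Lemma 3.3 plus Fact 2.3), whereas you compute the trace directly, but the content and the key observations (product ansatz, union bound, uniform one-step marginal from $d$-regularity) are identical.
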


\begin{proof}
    Let $\rho$ be the ground state of the original Hamiltonian $H = \sum_\chi w_\chi H_\chi$. Let us consider the Hamiltonian terms of each color one at a time. Note that, since we are proving an upper bound on the lowest eigenvalue of the amplified Hamiltonian, it suffices to show that the value is upper bounded on some specific state. Consider the state $\rho^{\otimes 2t}$. From \cref{lemma:Nmin},
    \begin{equation}
        \Tr[H_\chi^{(2t)}\rho^{\otimes 2t}] = \mathbb{E}_{f\in \cF_\chi}\prs{\rho^{\otimes 2t}} {N_{f}^\chi >0} 
    \end{equation}

    \noindent From the first moment method (\cref{fact:fmm}), we have 
    \begin{equation}
        \prs{\rho^{\otimes 2t}} {N_{f}^\chi >0} \leq \tr{N_{f}^\chi \rho^{\otimes 2t}}\Rightarrow \mathbb{E}_{f\in \cF_\chi}\prs{\rho^{\otimes 2t}}{N_{f}^\chi >0} \leq \sum_{i\in [2t]}\tr{(H_\chi)_{\reg i} \rho^{\otimes 2t}} = 2t\cdot \tr{H_\chi\rho}
    \end{equation}
    By linearity, we conclude
    \begin{equation}
        \lambda_{min}(H^{(2t)}) \leq 2t\cdot \sum_\chi w_\chi \tr{H_\chi\rho} =  2t\cdot  \lambda_{min}(H) 
    \end{equation}

\end{proof}

\section{A Technical Tool: The Auxiliary Energy Measurement}
\label{section:aux_energy_meas}

To simplify notation, for $j\in [2t]$, let us denote by $H_{\reg j}$ the application of $H$ on the $j$-th register.

\begin{equation}
    H_{\reg j} = \underbrace{\mathbb{I} \otimes \cdots \otimes \mathbb{I}}_{j-1} \otimes \, H \, \otimes \underbrace{\mathbb{I} \otimes \cdots \otimes \mathbb{I}}_{2t-j}
\end{equation}

In this section, we introduce a measurement which we refer to as the `auxiliary energy measurement'. This measurement is never performed in a real measurement of the amplified Hamiltonian $H^{(2t)}$; it exists purely in the analysis. Loosely speaking, we think of this measurement as a \emph{diagnostic}: we select some random subset of registers on which to perform it, and its outcomes on those `auxiliary' registers will help us determine how to proceed in the analysis of the quantity we care about, namely $N^{\chi}_{f,S}$, which is defined on the remaining (non-auxiliary) registers. The fact that we select the subset of registers randomly will mean that the prover cannot adversarially bias the outcome of the auxiliary measurement in order to mislead us about the non-auxiliary registers.

\begin{definition}
    [The Auxiliary Energy Measurement]\label{definition:aux_meas} Given any state $\rho$ on $n\cdot 2t$ qubits, a subset $S \subseteq [2t]$, and a threshold parameter $\alpha \in \R$:
    \begin{enumerate}
\item For $j\in[2t]$ define $\Pi^{\geq \alpha}_{\reg j}$ as the projection onto the direct sum of all eigenspaces of $H_{\reg j}$ with associated eigenvalue larger than $\alpha$.
\item For all $j \in \overline S$, measure $\{\mathbb{I} - \Pi^{\geq \alpha}_{\reg j}, \Pi^{\geq \alpha}_{\reg j}\}$ and call the outcome $c_j \in \bits$.
\item Output $c = (c_j)_{j \in \overline S}$.
\end{enumerate}
\end{definition}

We will write the probability of receiving an outcome $c$ in this measurement as $\prs{\rho}{c}$, where it will be clear from context that we are referring to the auxiliary energy measurement and what the set $S$ and threshold $\alpha$ are.
We will denote by $\rho_{|S,\alpha,c}$ the post-measurement state after receiving outcome $c$ in the auxiliary measurement with set $S$ and threshold $\alpha$.
Note that in the measurement procedure, we measure all registers \emph{not} in $S$, i.e.~$\rho_{|S,\alpha,c}$ is a post-measurement state whose registers in $S$ have been left untouched.

Our analysis will hinge on a careful case division, where we classify the possible outcomes of the auxiliary energy measurement into two categories, ``high-energy" and ``low-energy", and decide what to do based on whether we got a high-- or low-energy outcome. For this purpose, let $r\in[t]$ be an integer parameter (to be picked soon). For any subset $S \subseteq [2t]$, we define

\begin{equation}
\label{eqn:us_def}
    U_{\Bar{S}}^r\equiv \bigg\{ c\in \{0, 1\}^{\Bar{S}}: |c|\geq 4r\bigg\}  \footnote{We also denote $\overline U_{\Bar{S}}^r\subset \{0, 1\}^{\Bar{S}}$ as the complement set. Whenever implicit, the superscript $r$ is omitted for conciseness. }
\end{equation}

The following lemma captures the intuition that our auxiliary measurement is a good diagnostic because we pick the subset on which to perform it randomly. Suppose we were to extend the auxiliary energy measurement to \textbf{all} the registers, i.e., suppose we were to perform it in $S$ as well as in $\bar S$. We expect that, since $S$ is picked randomly, it is likely that, when we condition on receiving a high energy outcome in $\bar S$, we will also receive a high energy outcome in $S$ with decent probability.

\begin{lemma}\label{lemma:de-finetti}
For each $j\in [2t]$, let the binary random variable $C_j$ denote the outcome of the measurement $\{\1 - \Pi^{\geq \alpha}_{\reg j}, \Pi^{\geq \alpha}_{\reg j}\}$ on any $2t\times n$ qubit state $\rho$. Then, for every $r\in [t]$,
    \begin{equation}
        \mathbb{E}_{|S|=t}\Pr_\rho \Big[\sum_{j \in S} C_j < 2r \; \text{ and }\; \sum_{i \in \overline S} C_i \geq 4r\Big]   \leq e^{-\frac{2 r^2}{t}}
    \end{equation}
\end{lemma}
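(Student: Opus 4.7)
The plan is to reduce the statement to the classical concentration inequality in \cref{fact:tomamichel}. The key observation is that the binary outcomes $C_1, \ldots, C_{2t}$ arise from projective measurements $\{\1 - \Pi^{\geq \alpha}_{\reg j}, \Pi^{\geq \alpha}_{\reg j}\}$ that act on pairwise disjoint registers, so they commute as operators on the joint Hilbert space of $\rho$. Consequently, we can realise the entire tuple $(C_1, \ldots, C_{2t})$ by a single projective measurement (the product measurement across all $2t$ registers), whose outcome distribution is a well-defined classical distribution $P$ on $\{0,1\}^{2t}$ obtained from Born's rule on $\rho$. The probability appearing in the lemma is then an honest classical probability with respect to this $P$ and the independent uniform choice of $S \subseteq [2t]$ of size $|S|=t$.

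First, I would formalise this commutativity step: for disjoint $j\neq j'$ the projections $\Pi^{\geq \alpha}_{\reg j}$ and $\Pi^{\geq \alpha}_{\reg j'}$ act on different tensor factors and hence commute, so the order of the measurements $\{\1 - \Pi^{\geq \alpha}_{\reg j}, \Pi^{\geq \alpha}_{\reg j}\}$ is immaterial and the joint distribution of $(C_1,\ldots,C_{2t})$ is well-defined. In particular, marginalising recovers the distribution of any sub-tuple of $C_j$'s used in the event, so the order of ``measure then pick $S$'' and ``pick $S$ then measure'' yields the same joint distribution of $(S, C_1,\ldots,C_{2t})$.

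Second, I would apply \cref{fact:tomamichel} to the binary variables $Z_j := C_j$, with the parameters $n = k = t$ (so that $n+k = 2t$ and $S$ is a uniformly random subset of size $k=t$), $\delta = 2r/t$, and $\nu = 2r/t$. Then $k\cdot\delta = 2r$ and $n(\delta+\nu) = 4r$, so the event in the lemma (using that $C_j$ are integers, so $<2r$ implies $\leq 2r = k\delta$) is contained in the event controlled by \cref{fact:tomamichel}. Plugging into its bound gives
\begin{equation*}
\exp\!\left[-\frac{2\nu^2 n k^2}{(n+k)(k+1)}\right] = \exp\!\left[-\frac{2\cdot (2r/t)^2 \cdot t \cdot t^2}{2t\cdot (t+1)}\right] = \exp\!\left[-\frac{4r^2}{t+1}\right],
\end{equation*}
which for $t\geq 1$ is at most $\exp(-2r^2/t)$, as required.

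There is no substantial obstacle here beyond bookkeeping: the only conceptually non-trivial point is the observation that \cref{fact:tomamichel} is truly applicable in the quantum setting, and this is made legitimate by the fact that the measurements defining the $C_j$'s act on disjoint tensor factors and so commute. Once the joint classical distribution over $\{0,1\}^{2t}$ is in hand, the proof is a direct invocation of the classical inequality with the parameters above.
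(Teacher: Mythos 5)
Your proposal is correct and follows exactly the same route as the paper: observe that the $C_j$ arise from commuting measurements on disjoint registers, so they have a well-defined joint classical distribution, then invoke \cref{fact:tomamichel}. The paper states this without spelling out the parameter choice; your instantiation $n = k = t$, $\delta = \nu = 2r/t$ is the (unique) correct one, and your verification that $\exp(-4r^2/(t+1)) \leq \exp(-2r^2/t)$ for $t \geq 1$ completes the bookkeeping.
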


\begin{proof}
    Here we remark that the $\{C_j\}$ correspond to the outcomes of a sequence of commuting measurements, so their joint distribution is well-defined. The proof then follows immediately from \cref{fact:tomamichel}.
\end{proof}

As we later discuss, this is the central ``de Finetti" statement we require in our approach.

\begin{remark}
    So long as one picks the same set $S$ in both, the subset violation number measurement and the auxiliary energy measurement commute. We have therefore the following decomposition for any state $\rho$, color $\chi$, threshold $\alpha$, subset $S$ and function $f$:
\begin{align}
\prs{\rho}{ N^{\chi}_{f,S} > 0} 
&= \sum_{c\in \bits^{\overline{S}}} \prs{\rho}{c} \ \prs{\rho_{|S,\alpha,c}}{ N^{\chi}_{f,S} > 0} \,. \label{eqn:tildeN_expansion}
\end{align}
\end{remark}

    Our final lower bound on the ground energy of the amplified Hamiltonian, introduced in \cref{sec:soundness}, is based solely on the behavior of $N^{\chi}_{f,S}$, which we manage to control using the auxiliary measurements. We are `sacrificing' the auxiliary registers, in the sense that we do not count violations on them except as a diagnostic tool. We can do this because the outcome of $N^{\chi}_{f,S}$ always lower bounds that of $N^{\chi}_{f}$ (\cref{lemma:tilde_greater}).

\section{On the Soundness of the Amplified Hamiltonian}
\label{sec:soundness}
In this section, we prove a lower bound on the ground energy of the amplified Hamiltonian: we show that one cycle of amplification boosts the ground energy of the original Hamiltonian by at least a constant multiplicative factor which depends on $t$, i.e.~the number of `copies' of the original Hamiltonian. Together with \Cref{lem:upperbound}, this shows that our amplification procedure amplifies the promise gap of any Hamiltonian with sufficiently good completeness.

The key challenge is that the ground state of the amplified Hamiltonian may not be a product state, and therefore may be entangled in a way that causes its energy to be lower than what a purely classical analysis of a similar gap amplification procedure, such as that of \cite{dinur2007pcp}, would lead us to expect. Nevertheless, by combining ingredients from \cite{dinur2007pcp}'s proof of classical gap amplification with techniques inspired by the proof of the exponential de Finetti theorem~\cite{renner2007symmetry,renner2008security,vidick2016simple}, we still are able to show amplification of the ground state energy:

\begin{theorem}[Simplified statement of \Cref{cor:lowerbound_simpler}] \label{theorem:lowerbound_simpler}
Fix a layered Hamiltonian $H$ as in \cref{definition:layered}, and the collection of function families $\{\mathcal{F}_\chi\}$ from \cref{definition:paths}. For any $t \in \N$, write $H^{(2t)} = \sum_\chi w_\chi H_\chi^{(2t)}$ to denote the derandomized $2t$-fold tensor product amplification of $H$ (\cref{def:amplified_ham}). The lowest eigenvalue of $H^{(2t)}$ is bounded from below by
\begin{align*}
\lambda_{\min}(H^{(2t)}) \geq \min\Big\{ \Theta\Big(\frac{\log t}{t}\Big) \:\:, \:\: \Theta\Big(\lambda_{\min}(H) \cdot \frac{t^{1/2}}{(\log t)^{1/2}}\Big) \Big\},
\end{align*}
where the big-O hides only constants that do not depend on $t$.
\end{theorem}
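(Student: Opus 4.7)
The plan is to lower bound $\tr{H^{(2t)}_\chi \rho}$ for each color $\chi$ (and then take the $w_\chi$-weighted sum) by applying the second-moment method (\cref{fact:smm}) to the subset violation number observables $N^\chi_{f,S}$ of \cref{def:colored-subset-violation-number-meas}, using the auxiliary energy measurement of \cref{definition:aux_meas} as a pinching device that reduces the analysis of the entangled ground state $\rho$ to something essentially classical with respect to $H$. By \cref{lemma:Nmin,lemma:tilde_greater}, $\tr{H^{(2t)}_\chi \rho} \geq \E_{|S|=t} \E_{f \in \cF_\chi} \prs{\rho}{N^\chi_{f,S} > 0}$. First, I insert the auxiliary measurement on $\overline S$ with a threshold $\alpha$ and a parameter $r = \Theta(\sqrt{t \log t})$, and split the decomposition \cref{eqn:tildeN_expansion} according to whether the outcome $c$ lies in the ``high-energy'' set $U^r_{\overline S}$ (\cref{eqn:us_def}) or in its complement, analyzing each case separately.

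In the high-energy case, \cref{lemma:de-finetti} applied in contrapositive guarantees that, on average over $|S|=t$ and conditional on $|c| \geq 4r$, the state $\rho_{|S,\alpha,c}$ has at least $\approx 2r$ registers of $S$ in the high-energy subspace $\Pi^{\geq \alpha}_{\reg j}$ of $H$, modulo an exception of weight $e^{-2r^2/t} = t^{-\Omega(1)}$. This gives a per-register lower bound of $\gtrsim \alpha \cdot r/t$ on the energy of $H$ on $S$, which by pigeonholing the decomposition $H = \sum_\chi w_\chi H_\chi$ and using the stationarity of the expander walk produces a lower bound on $\E_f \E_\rho[N^\chi_{f,S}]$ for some color $\chi$, and hence, by a first- or second-moment argument, on $\E_f \prs{\rho_{|S,\alpha,c}}{N^\chi_{f,S}>0}$. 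Tuning $\alpha$ inside this case against the pinching penalty yields the term $\Theta(\log t / t)$ in the $\min$, independently of $\lambda_{\min}(H)$.

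In the low-energy case, the pinching forces most registers $j \in \overline S$ into the $<\alpha$ spectrum of $H_{\reg j}$, so the energy of $H$ on $S$ must be close to $\lambda_{\min}(H)$. I apply the second-moment method to $N^\chi_{f,S}$ inside the post-measurement state: the first moment is $\E[N^\chi_{f,S}] = t \cdot \tr{H_\chi \bar\rho_S}$ by stationarity, which is $\Omega(t \cdot \lambda_{\min}(H))$ after accounting for pinching loss. For the second moment, I expand
\begin{equation*}
\E[(N^\chi_{f,S})^2] = \sum_{j,j' \in S}\tr{(\Pi^\chi_{f(j)})_{\reg j}\, (\Pi^\chi_{f(j')})_{\reg j'}\, \rho_{|S,\alpha,c}},
\end{equation*}
and use that the $\{\Pi^\chi_i\}_i$ commute within color $\chi$ to pinch $\rho_{|S,\alpha,c}$ further into their simultaneous eigenbasis on $S$; this second pinching commutes with both the auxiliary conditioning and with each $\Pi^\chi_i$, reducing each cross-term to a classical quantity to which \cref{lemma:expander-quadratic} applies with kernel $1/m_\chi + \mu^{|j-j'|}$. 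Summing over pairs and taking the ratio with the squared first moment gives amplification of $\lambda_{\min}(H)$ by a factor $\Theta(\sqrt{t/\log t})$, the second term of the $\min$.

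The main obstacle I expect is controlling the second moment in the low-energy case. The state $\rho_{|S,\alpha,c}$ remains entangled across the $t$ registers of $S$ even after pinching on $\overline S$; the commuting-layer pinching handles intra-color correlations so that the expander-walk bound applies, but the color count $g$, the weight imbalance $\omega_{\min}$ (\cref{eq:min-weight-def}), and the loss from both pinchings enter as constants that must be tracked through every step. I also cannot afford a naive union bound over the exponentially many outcomes $c$, which I avoid by keeping the analysis at the level of expectations weighted by $\prs{\rho}{c}$, so that only the aggregate weights $\sum_{c \in U^r_{\overline S}} \prs{\rho}{c}$ and $\sum_{c \notin U^r_{\overline S}} \prs{\rho}{c}$ appear in the final bound. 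Balancing the two cases then forces $\alpha = \Theta(\lambda_{\min}(H)\sqrt{t/\log t})$, producing the crossover at which the two terms of the $\min$ meet.
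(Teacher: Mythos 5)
Your proposal follows the same high-level architecture as the paper---auxiliary energy measurement with threshold $\alpha$ and parameter $r$, split into high- and low-energy outcome cases using the Chernoff-type lemma \cref{lemma:de-finetti}, second-moment method on the subset violation number, commuting-layer pinching plus \cref{lemma:expander-quadratic} for the random-walk step. The high-energy case, and the reduction from expander-walk sampling to pairwise-independent sampling via commuting-layer pinching (the paper's \cref{lemma:second_moment_Xs}), are both described essentially correctly.

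However, there is a genuine gap in the low-energy second-moment analysis, and it is precisely where the paper has to work hardest. After you pinch $\rho_{|S,\alpha,c}$ into the common eigenbasis of $H_\chi^{\otimes 2t}$ and apply \cref{lemma:expander-quadratic}, what you obtain is a bound of the form
\begin{equation*}
\E_f\bigl[(N^\chi_{f,S})^2\bigr] \lesssim (1 + C_\mu)\sum_{j\in S}\tr{(H_\chi)_{\reg j}\,\rho_{|S,\alpha,c}} \;+\; \sum_{i\neq j\in S}\tr{(H_\chi)_{\reg i}\otimes(H_\chi)_{\reg j}\,\rho_{|S,\alpha,c}},
\end{equation*}
and the pinched state is still \emph{classically correlated} across the $S$ registers: it is diagonal in the $H_\chi$-eigenbasis, but the joint distribution over eigenvalue strings is not a product. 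Consequently $\tr{H_{\reg i}\otimes H_{\reg j}\,\rho_{|S,\alpha,c}}$ is not controlled by $\bigl(\tr{H_{\reg j}\,\rho_{|S,\alpha,c}}\bigr)^2$; for an adversarial state it can be as large as $\Theta(1)$ while the first moment is $\Theta(t\lambda_{\min}(H))$, which destroys the second-moment ratio. Your sentence ``the pinching forces most registers $j\in\overline S$ into the $<\alpha$ spectrum, so the energy of $H$ on $S$ must be close to $\lambda_{\min}(H)$'' is doing the crucial work here, but it does not follow from the $\overline S$-measurement alone. The paper's \cref{lem:2ndmoment} addresses this by performing the auxiliary energy measurement on \emph{all $2t$} registers (not only $\overline S$), exploiting that it commutes with the $H$-measurements on $S$; it then splits the $S$-outcome $d$ into $V_S$ (many high-energy registers in $S$) and $\overline V_S$, applies \cref{lemma:de-finetti} a second time to show the $V_S$ branch has weight $e^{-\Omega(r^2/t)}$, and uses the $\overline V_S$ branch to convert one factor of $H$ into the scalar $\alpha$. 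Even with this, the $V_S$ error term must be bounded relative to $\E[X]$ rather than absolutely, which forces the technical precondition \cref{eqn:trivial_single_copy_bound} (``either $\tr{H_{\reg i}\rho}\geq\E[X]$, in which case we are already done, or we may assume otherwise''), giving the either/or structure of \cref{lem:high_energy_unlikely}. Your proposal does not engage with any of this machinery, and without it the claimed bound on the second moment is unjustified.

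A secondary point: your parameter balance $\alpha = \Theta\bigl(\lambda_{\min}(H)\sqrt{t/\log t}\bigr)$ is inconsistent with the high-energy term $\Theta(\log t/t)$ that you also claim. The paper uses $r=\sqrt{t\log t}$ and $\alpha = r/t = \sqrt{\log t/t}$ (independent of $\lambda_{\min}(H)$), which simultaneously produce both branches of the $\min$ in \cref{cor:lowerbound_simpler}. With your $\alpha$, the high-energy branch $\approx \alpha r/t$ would evaluate to $\Theta(\lambda_{\min}(H))$, not $\Theta(\log t/t)$, and the dichotomy would not close as stated.
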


\subsection{An overview of the analysis}

The following lemma, \cref{lem:lowerbound}, clarifies the structure of our proof of \cref{theorem:lowerbound_simpler}. Our proof relies on the use of the auxiliary measurement introduced in \cref{section:aux_energy_meas} as a `diagnostic': we prove two independent lower bounds, \cref{eqn:high-energy-bound} and \cref{eq:direct-bound-case}--\cref{eqn:low_energy_bound}, on the amplified energy $\lambda_{\min}(H^{(2t)})$, and then, depending on the outcome of the auxiliary measurement, we decide which one to use. In particular, if the auxiliary energy measurement gives us a high-energy outcome, then we use \cref{eqn:high-energy-bound}, whose proof is a formalisation of the intuition that, if the diagnostic gave us a high-energy outcome, then we expect a high-energy outcome on the registers we care about (the `primary registers') as well: this allows us to prove a direct lower bound on the ground energy. 

On the other hand, if the auxiliary measurement gives us a low-energy outcome, then we need to work a little harder. The proof of the lower bound in the `low-energy case' (expressed in \cref{eq:direct-bound-case}--\cref{eqn:low_energy_bound}) is more closely analogous to Dinur's analysis of classical gap amplification, and it contains most of our technical ideas. One way to intuitively interpret our strategy is that conditioning on a low-energy outcome helps us curtail (or upper bound) the \emph{variance} of $N_f^{\chi}$, and it is easier to argue that the variance is small when $N_f^{\chi}$ is small overall (which is the case in the low-energy branch of the dichotomy).

\begin{proposition} \label{lem:lowerbound}
Let $H$ be a normalised sum of projectors on $n$ qubits as in \cref{equation:H_definition}, and fix any choice of parameters $t\in \mathbb{N}$, $r\in [t]$, $\alpha>0$, and the collection of expander walks $\{\mathcal{F}_\chi\}_{\chi\in [g]}$. Let $\rho$ be any ground state of $H^{(2t)}$, and define the probability of a high energy outcome:
\begin{align}
\Delta \coloneqq \E_{S\subseteq[2t], |S|=t} \sum_{c\in U_{\overline S}} \prs{\rho}{c}
\end{align}
\noindent with $U_{\bar{S}}$ as in \cref{eqn:us_def}. Let $C_\mu$ be the constant in \Cref{lemma:second_moment_Xs}, and let $\omega_{\min}$ be as in \Cref{eq:min-weight-def}.

Then, the lowest eigenvalue of the derandomised $t$-fold tensor product amplification $H^{(2t)}$ is bounded from below by
\begin{align}
\lambda_{\min}(H^{(2t)}) \geq \max\Big\{  \frac{2 \alpha r}{t}\Big(\Delta - e^{-2 r^2/t}\Big) , \;
& \frac{t (1-\Delta)}{1 + C_\mu \: + \: \omega_{\min} \cdot (1 + 8r + \alpha t + 2t \cdot e^{-8r^2/t})} \cdot \lambda_{\min}(H) \Big\}
\label{eqn:lower-bound}
\end{align}
\end{proposition}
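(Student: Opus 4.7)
The plan is to prove the two arguments of the $\max$ in \cref{eqn:lower-bound} independently and take the maximum. The auxiliary measurement partitions the outcomes $c \in \bits^{\bar S}$ into ``high-energy'' ($c \in U_{\bar S}$, with total probability $\Delta$) and ``low-energy'' ($c \in \bar U_{\bar S}$, with total probability $1-\Delta$); the former yields the first bound and the latter yields the second.

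For the high-energy bound, I first reduce bounding $\tr{H^{(2t)}\rho}$ to bounding the one-body sum $\sum_{j\in[2t]} \tr{H_{\reg j}\rho}$. Each $N_f^\chi$ is integer-valued in $\{0,\ldots,2t\}$, so $\prs{\rho}{N_f^\chi > 0} \geq \tr{N_f^\chi \rho}/(2t)$; combined with $\E_{f\in\mathcal{F}_\chi} N_f^\chi = \sum_j (H_\chi)_{\reg j}$ (using that $f(j)$ is marginally uniform over $[m_\chi]$ by $d$-regularity), \cref{lemma:Nmin}, and a convex combination over colors, this gives $\tr{H^{(2t)}\rho} \geq \tfrac{1}{2t}\sum_j \tr{H_{\reg j}\rho}$. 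Since $H_{\reg j} \geq \alpha\,\Pi^{\geq\alpha}_{\reg j}$, we have $\tr{H_{\reg j}\rho} \geq \alpha\,\Pr[C_j=1]$. Applying \cref{lemma:de-finetti} together with the observation $\Pr[\sum_{i\in\bar S}C_i\geq 4r, \sum_{j\in S}C_j\geq 2r] \geq \Pr[\sum_{i\in\bar S}C_i\geq 4r] - e^{-2r^2/t}$, and using that each $j\in[2t]$ lies in a uniform $S$ of size $t$ with probability $1/2$, produces $\sum_j \Pr[C_j=1] \geq 4r(\Delta - e^{-2r^2/t})$, whence the high-energy bound $\tfrac{2\alpha r}{t}(\Delta - e^{-2r^2/t})$ follows.

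For the low-energy bound, I apply the second moment method (\cref{fact:smm}) to $N_{f,S}^\chi$ on the post-measurement state $\rho_{|S,\alpha,c}$ for $c\in\bar U_{\bar S}$. Combining \cref{lemma:Nmin}, \cref{lemma:tilde_greater}, and the decomposition \cref{eqn:tildeN_expansion} gives
\begin{align*}
\tr{H_\chi^{(2t)}\rho} \;\geq\; \E_{|S|=t}\, \E_{f\in\mathcal{F}_\chi}\!\! \sum_{c\in\bar U_{\bar S}} \prs{\rho}{c}\, \prs{\rho_{|S,\alpha,c}}{N_{f,S}^\chi > 0}.
\end{align*}
For each low-energy $c$, \cref{fact:smm} reduces the inner probability to $(\E N_{f,S}^\chi)^2/\E (N_{f,S}^\chi)^2$. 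The first moment expands by linearity to $\sum_{j\in S}\tr{(H_\chi)_{\reg j}\rho_{|S,\alpha,c}}$, which I expect to lower-bound by a multiple of $t\,\lambda_{\min}(H)$ up to losses from the low-energy conditioning and a second invocation of \cref{lemma:de-finetti}; this accounts for the `$8r$', `$\alpha t$', and `$2t\,e^{-8r^2/t}$' terms inside the denominator. The second moment expands as a double sum $\sum_{j_1,j_2\in S}\tr{\Pi^\chi_{f(j_1)}\Pi^\chi_{f(j_2)}\rho_{|S,\alpha,c}}$ (using that each $\Pi^\chi_i$ is a projector on the diagonal), and is controlled by combining commutativity within the color $\chi$ with the expander-walk mixing of \cref{lemma:expander-quadratic}; this is exactly the content of \cref{lemma:second_moment_Xs} and contributes the constant $C_\mu$. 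Reassembling over colors via the convex weights $w_\chi$ and introducing the worst-case weight $\omega_{\min}$ yields the claimed low-energy bound.

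The main technical obstacle will be the low-energy branch, because both moments of $N_{f,S}^\chi$ must be computed on the conditioned state $\rho_{|S,\alpha,c}$, which is defined via a measurement on registers $\bar S$ that can be arbitrarily correlated with the registers in $S$. Randomising $S$ uniformly of size $t$, combined with the de Finetti--style \cref{lemma:de-finetti}, is precisely what decouples these correlations: it ensures that a low-energy outcome on $\bar S$ cannot hide high-energy mass in $S$, so the first moment cannot fall much below $t\,\lambda_{\min}(H)$. The second-moment analysis is the other subtle step: the expansion $(N_{f,S}^\chi)^2 = \sum_{j_1,j_2\in S}\Pi^\chi_{f(j_1)}\Pi^\chi_{f(j_2)}$ crucially uses commutativity within a single color layer, so each off-diagonal pair $(j_1,j_2)$ is bounded via expander-mixing in an essentially classical fashion; the price of this layer-by-layer argument is paid through the factor $\omega_{\min}$ when the layer contributions are reassembled.
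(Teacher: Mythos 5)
Your overall decomposition into a high-energy bound and a low-energy bound, both of which yield independent lower bounds on $\lambda_{\min}(H^{(2t)})$ whose maximum is then taken, is the correct structure. Your high-energy argument differs in a harmless way from the paper's: rather than splitting $\prs{\rho}{N^\chi_{f,S}>0}$ via the auxiliary measurement and lower-bounding only the $c\in U_{\bar S}$ part, you bound $\tr{H^{(2t)}\rho}$ directly via $\prs{\rho}{N^\chi_f>0}\geq\tr{N^\chi_f\rho}/(2t)$ and then use \cref{lemma:de-finetti} to lower-bound $\sum_j\Pr[C_j=1]$. This is a valid alternative (in fact one can even get $6r$ in place of your $4r$ by using that the joint event forces $\sum_{j\in[2t]}C_j\geq 6r$, though $4r$ already suffices), and is perhaps more transparent than the paper's conditional-expectation manipulation inside \cref{lem:high_energy_likely}.

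Your low-energy argument, however, has two genuine gaps. First, you attribute the terms $8r+\alpha t+2te^{-8r^2/t}$ in the denominator to a degradation of the \emph{first} moment, and claim \cref{lemma:second_moment_Xs} alone (contributing only $C_\mu$) controls the second moment. In fact the first moment needs no loss at all: $\E[X]=\E_S\sum_{c\in\bar U_{\bar S}}\prs{\rho}{c}\sum_{j\in S}\tr{H_{\reg j}\rho_{|S,\alpha,c}}\geq t\lambda_{\min}(H)(1-\Delta)$ follows trivially from $\tr{H_{\reg j}\sigma}\geq\lambda_{\min}(H)$ for any state $\sigma$. All of $8r$, $\alpha t$, and $2te^{-8r^2/t}$ arise in the \emph{second}-moment bound: \cref{lemma:second_moment_Xs} leaves over the cross-term $\omega_{\min}\sum_{i\neq j}\tr{H_{\reg i}H_{\reg j}\sigma}$, which on an entangled $\sigma$ does not factorise, and controlling it is exactly what \cref{lem:2ndmoment} does via the additional pinching into low/high energy on $S$, the $\alpha$-cutoff, and a second invocation of \cref{lemma:de-finetti}. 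Second, you omit the hypothesis \cref{eqn:trivial_single_copy_bound} of \cref{lem:2ndmoment} and the resulting dichotomy inside \cref{lem:high_energy_unlikely}: when $\tr{H_{\reg i}\rho}$ exceeds $\E[X]$ for some $i$, the second-moment machinery is \emph{inapplicable} and one instead falls back to the direct bound $\tr{H^{(2t)}\rho}\geq\tr{H_{\reg i}\rho}\geq t\lambda_{\min}(H)(1-\Delta)$, which must then be checked to dominate the target bound (it does, since the denominator is $\geq 1$). Without that case split, the low-energy branch is incomplete.
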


In \cref{section:parameters}, we show how to carefully instantiate the parameters $\alpha, r$ and $\Delta$ to prove \cref{theorem:lowerbound_simpler}. For now, we discuss the proof of \cref{lem:lowerbound}.

\begin{proof}[Proof of \cref{lem:lowerbound}] Let $\rho$ be a ground state of $H^{(2t)}$. From \cref{lemma:Nmin} and \cref{lemma:tilde_greater}, we know that 
\begin{equation}
\label{equation:amp_gse_def}
\lambda_{\min}(H^{(2t)}) =  \sum_\chi w_\chi\cdot \tr{H^{(2t)}_\chi \rho} \geq \sum_\chi w_\chi\cdot \bigg( \E_{S \subset [2t], |S| = t}\E_{f \in \cF_\chi} \prs{\rho}{N^{\chi}_{f,S} > 0} \bigg)
\end{equation}

To proceed, let us perform the auxiliary energy measurement on $\Bar{S}$ and the expansion described in \cref{eqn:tildeN_expansion}, for some fixed energy threshold $\alpha$ to be determined:
\begin{align}
~\eqref{equation:amp_gse_def}\geq \sum_\chi w_\chi \cdot \E_{|S|=t} \E_{f \in \cF} \sum_{c\in \bits^{\overline{S}}} \prs{\rho}{c} \ \prs{\rho_{|S,\alpha,c}}{N^{\chi}_{f,S} > 0}  \label{eqn:expanded_bound}
\end{align}

Our analysis will be based on a careful case division, in which we split the sum over $c$ in two: one sum over ``high-energy outcomes'' and one sum over ``low-energy outcomes'' of the auxiliary energy measurement. For this purpose, let $r \in [t]$ be an integer parameter to be chosen later. For any subset $S \subseteq [2t]$, we define 
\begin{align}
U_{\overline S} = \{ c \in \bits^{\overline{S}} : |c|\geq 4r\}, \text{ as in \cref{eqn:us_def}} \,. 
\end{align}
Then, 
\begin{align}
\text{~\eqref{eqn:expanded_bound}} = \sum_\chi w_\chi\cdot \bigg( \E_{|S|=t} \E_{f \in \cF} \sum_{c\in U_{\overline S}} \prs{\rho}{c} \ \prs{\rho_{|S,\alpha,c}}{ N^{\chi}_{f,S} > 0} + \E_{|S|=t} \E_{f \in \cF} \sum_{c\in \overline U_{\overline S}} \prs{\rho}{c} \ \prs{\rho_{|S,\alpha,c}}{N^{\chi}_{f,S} > 0}\bigg) \,. \label{eqn:split_sum}
\end{align}
Note that both terms in the sum are always non-negative. We proceed by proving two different lower bounds, one on the first term in \Cref{eqn:split_sum} and one on the second term in \Cref{eqn:split_sum}; one of these bounds will be useful when $\Delta$ is large, i.e.~when a `high-energy' outcome is likely, and the other will be useful when $\Delta$ is small, when a `high-energy' outcome is unlikely. We defer the actual case split analysis to \Cref{section:parameters}.

\paragraph{Case 1: high-energy outcome is likely.}
The following bound will be useful when we expect $\Delta$ to be large (say, bigger than some fixed constant, such as $\frac{1}{2}$). For this case we will make use of the following general bound, which we will prove below in \cref{lem:high_energy_likely}:
\begin{align}
\sum_\chi w_\chi\cdot \bigg(\E_{S\subseteq[2t], |S|=t} \E_{f \in \cF} \sum_{c\in U_{\overline S}} \prs{\rho}{c}   \prs{\rho_{|S,\alpha,c}}{N^{\chi}_{f,S} >0}\bigg) \, &\geq\,  \frac{2 \alpha r}{t}\Big(\E_{S} \sum_{c\in U_{\overline S}} \prs{\rho}{c}-e^{-2 r^2/t}\Big) \\&=\,  \frac{2 \alpha r}{t}\Big(\Delta - e^{-2 r^2/t}\Big)\,. \label{eqn:high-energy-bound} 
\end{align}

\paragraph{Case 2: high-energy outcome is unlikely.}
The following bound will be useful when we expect $\Delta$ to be small. For this case we will make use of the following general bound, which we will prove in \cref{lem:high_energy_unlikely}: either it is the case that we get the direct bound on the amplified energy
\begin{align}
    \tr{H^{(2t)} \rho} &\geq t \lambda_{\min}(H) \cdot \E_{|S|=t} \prs{\rho}{\overline{U}_{\overline{S}}} \\
    &= \lambda_{\min}(H) \cdot t(1-\Delta),
    \label{eq:direct-bound-case}
\end{align}
or it is the case that
\begin{align*}
&\sum_\chi w_\chi\cdot \bigg(\E_{S\subseteq[2t], |S|=t} \E_{f \in \cF_\chi} \sum_{c\in \overline U_{\overline S}} \prs{\rho}{c} \ \prs{\rho_{|S,\alpha,c}}{N^{\chi}_{f,S} >0} \bigg) \\
&\geq\,\lambda_{\min}(H)\times \frac{ t \E_{|S| = t} \prs{\rho}{\overline{U}_{\overline{S}}}}{1 + C_\mu \: + \: \omega_{\min} \cdot (1 + 8r + \alpha t + 2t \cdot e^{-8r^2/t})} \\
&=\, \lambda_{\min}(H)\times \frac{t (1-\Delta)}{1 + C_\mu \: + \: \omega_{\min} \cdot (1 + 8r + \alpha t + 2t \cdot e^{-8r^2/t})} \numberthis \label{eqn:low_energy_bound}
\end{align*}

Substituting our two bounds \Cref{eqn:high-energy-bound} and \Cref{eqn:low_energy_bound} into \Cref{eqn:split_sum}, and keeping in mind the special case expressed in \Cref{eq:direct-bound-case}, we find the following lower bound: either
\begin{align}
\lambda_{\min}(H^{(2t)}) &\geq \lambda_{\min}(H) \cdot t(1-\Delta),
\label{eq:special-case}
\end{align}
or
\begin{align}
\lambda_{\min}(H^{(2t)}) \geq  \frac{2 \alpha r}{t}\Big(\Delta - e^{-2 r^2/t}\Big) + \;
& \frac{t (1-\Delta)}{1 + C_\mu \: + \: \omega_{\min} \cdot (1 + 8r + \alpha t + 2t \cdot e^{-8r^2/t})} \cdot \lambda_{\min}(H).
\label{eq:normal-case}
\end{align}

In addition, \cref{eqn:high-energy-bound} is always true and constitutes another independent lower bound on $\lambda_{\min}(H^{(2t)})$. Hence we have the following composite lower bound on $\lambda_{\min}(H^{(2t)})$:
\begin{equation}
    \lambda_{\min}(H^{(2t)}) \geq \max\Big\{\eqref{eqn:high-energy-bound}, \: \min\{ \eqref{eq:special-case}, \eqref{eq:normal-case} \} \Big\}
    \label{eq:lower-bound-composite}
\end{equation}
\Cref{eqn:lower-bound} is a lower bound on this composite bound \Cref{eq:lower-bound-composite} because the second term in the max in \Cref{eqn:lower-bound} is always smaller than the second term in the max in \Cref{eq:lower-bound-composite}.
\end{proof}

In the following two subsections, we prove the two bounds that we use in the proof of \cref{lem:lowerbound}: \cref{lem:high_energy_likely} (which lower bounds the first term in \cref{eqn:split_sum}) and \cref{lem:high_energy_unlikely} (which lower bounds the second term in \cref{eqn:split_sum}). Intuitively speaking, \cref{lem:high_energy_likely} is the lower bound that we will apply when we find that the auxiliary measurement gives us a `high-energy' outcome, and \cref{lem:high_energy_unlikely} is the lower bound we will apply when the auxiliary measurement gives us a `low-energy' outcome.

\subsection{The high-energy case}

In this subsection we prove \cref{lem:high_energy_likely}, which gives a lower bound on the first term in \cref{eqn:split_sum}, namely, the term that conditions on getting a high-energy outcome from the auxiliary measurement introduced in \cref{section:aux_energy_meas}. The key technical ingredient in this proof is \cref{lemma:de-finetti}, which implies that it is unlikely for the `primary' registers to be low-energy if the `auxiliary' qubits were high-energy.

\begin{lemma}[The High Energy Case]\label{lem:high_energy_likely}
For any choice of parameters $\alpha > 0$ and $r \in [t]$ and any state $\rho$ on $2t\cdot n$ qubits it holds that
\begin{align*}
\sum_{\chi\in [g]}w_\chi\cdot \bigg(\E_{S\subseteq[2t], |S|=t} \E_{f \in \cF_\chi} \sum_{c\in U_{\overline S}} \prs{\rho}{c} \ \prs{\rho_{|S,\alpha,c}}{N^{\chi}_{f,S} >0} \, \bigg)\geq\,  \frac{2 \alpha r}{t}\Big(\E_{S} \sum_{c\in U_{\overline S}} \prs{\rho}{c}-e^{-\frac{2 r^2}{t}}\Big)\,.
\end{align*}
Here, $U_{\overline S}$ is defined as in \cref{eqn:us_def} and depends implicitly on $\alpha$ and $r$.
\end{lemma}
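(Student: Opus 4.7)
The plan is to lower-bound each probability $\prs{\rho_{|S,\alpha,c}}{N^{\chi}_{f,S} > 0}$ using a first-moment-style operator inequality, exploit the uniform marginal of the expander walk to relate the resulting expression to the energy of $H$ on individual registers, and then apply the de Finetti-type statement of \cref{lemma:de-finetti} to conclude that a high-energy outcome on $\bar S$ forces a high energy on $S$.

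First, note that $N^{\chi}_{f,S} = \sum_{j \in S}(\Pi^\chi_{f(j)})_{\reg j}$ is a sum of $|S| = t$ commuting projectors, so its eigenvalues lie in $\{0,1,\ldots,t\}$ and hence $N^{\chi}_{f,S} \leq t \cdot P^{>0}_{f,S}$, where $P^{>0}_{f,S}$ projects onto its strictly positive eigenspaces. This gives
\begin{equation*}
    \prs{\rho_{|S,\alpha,c}}{N^{\chi}_{f,S} > 0} \;\geq\; \tfrac{1}{t}\tr{N^{\chi}_{f,S}\rho_{|S,\alpha,c}}.
\end{equation*}
Because each marginal $f(j)$ of a uniformly random length-$2t$ walk on the $d$-regular graph $G_\chi$ is uniform on $[m_\chi]$, we have $\E_{f\in\cF_\chi}\Pi^\chi_{f(j)} = H_\chi$, and hence $\sum_\chi w_\chi \E_{f\in\cF_\chi} N^{\chi}_{f,S} = \sum_{j\in S} H_{\reg j}$. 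Combining with the operator inequality $H \succeq \alpha\,\Pi^{\geq \alpha}$ (by definition of $\Pi^{\geq \alpha}$) yields, for each $j \in S$, the pointwise bound $\tr{H_{\reg j}\rho_{|S,\alpha,c}} \geq \alpha\tr{\Pi^{\geq \alpha}_{\reg j}\rho_{|S,\alpha,c}}$.

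Next, for each $j\in S$ let $C_j$ denote the outcome of the (hypothetical) binary measurement $\{\1 - \Pi^{\geq \alpha}_{\reg j},\Pi^{\geq \alpha}_{\reg j}\}$ on register $j$; since this measurement commutes with the auxiliary measurement that produced $c$ on $\bar S$, the joint distribution of $(c,(C_j)_{j\in S})$ is well-defined. Unpacking the post-measurement state, $\prs{\rho}{c}\cdot\tr{\Pi^{\geq \alpha}_{\reg j}\rho_{|S,\alpha,c}} = \prs{\rho}{C_j = 1 \text{ and } c}$. Summing over $j\in S$ and $c \in U_{\overline S}$ gives
\begin{equation*}
    \sum_{c\in U_{\overline S}} \prs{\rho}{c} \sum_{j\in S}\tr{\Pi^{\geq \alpha}_{\reg j}\rho_{|S,\alpha,c}}
    \;=\; \E\Bigl[\mathbb{1}[c\in U_{\overline S}]\sum_{j\in S} C_j\Bigr]
    \;\geq\; 2r\cdot \prs{\rho}{\textstyle\sum_{j\in S}C_j \geq 2r \text{ and } c \in U_{\overline S}}.
\end{equation*}
Finally, we take expectation over $|S|=t$ and apply \cref{lemma:de-finetti}, which says exactly that the ``bad" event (high energy on $\bar S$ but low energy on $S$) occurs with probability at most $e^{-2r^2/t}$; this forces
\begin{equation*}
    \E_{|S|=t}\prs{\rho}{\textstyle\sum_{j\in S}C_j \geq 2r \text{ and } c\in U_{\overline S}} \;\geq\; \E_{|S|=t}\sum_{c\in U_{\overline S}}\prs{\rho}{c} \;-\; e^{-2r^2/t} \;=\; \Delta - e^{-2r^2/t}.
\end{equation*}
Chaining all the above inequalities together and multiplying by the prefactor $\alpha/t$ from the first-moment step yields the claimed bound $\frac{2\alpha r}{t}(\Delta - e^{-2r^2/t})$.

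The only real ``content'' is the de Finetti step, which we invoke as a black box; the rest is bookkeeping, and the main thing to be careful about is consistently expressing $\prs{\rho}{c}\tr{\Pi^{\geq \alpha}_{\reg j}\rho_{|S,\alpha,c}}$ as the joint probability $\prs{\rho}{C_j=1 \text{ and } c}$ so that the sum over $j$ becomes an expectation of $\sum_{j\in S}C_j$ to which \cref{lemma:de-finetti} can be applied directly.
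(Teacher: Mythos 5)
Your proof is correct and follows essentially the same route as the paper: a first-moment lower bound on $\prs{\rho_{|S,\alpha,c}}{N^\chi_{f,S}>0}$, the uniform expander marginal reducing $\sum_\chi w_\chi \E_f N^\chi_{f,S}$ to $\sum_{j\in S}H_{\reg j}$, the threshold inequality $H\succeq\alpha\Pi^{\geq\alpha}$, the reformulation of $\prs{\rho}{c}\tr{\Pi^{\geq\alpha}_{\reg j}\rho_{|S,\alpha,c}}$ as a joint probability of the commuting $C_j$'s, a Markov step, and finally \cref{lemma:de-finetti}. The only cosmetic differences are that you obtain the first-moment bound via the operator inequality $N^\chi_{f,S}\leq t\,P^{>0}_{f,S}$ rather than the paper's chain $1-\min_j\cdots\geq\max_j\cdots\geq\frac1t\sum_j\cdots$, and you phrase the Markov step directly as $\E\big[\mathbb{1}[A]\sum_{j\in S}C_j\big]\geq 2r\Pr\big[\sum_{j\in S}C_j\geq 2r\wedge A\big]$ instead of passing through conditional expectations.
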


\begin{proof}

Using \cref{eqn:n0_expanded}, for every $\rho_{|S,\alpha,c}$, we have the following naive lower bound:
\begin{align}
    \prs{\rho_{|S,\alpha,c}}{N^{\chi}_{f,S} >0} 
&= 1 - \tr{\bigotimes_{j \in S} (\1 - \Pi^\chi_{f(j)}) \rho_{|S,\alpha,c}} \geq 1 - \min_{j \in S} \tr{ (\1 - \Pi^\chi_{f(j)})_{\reg j} \rho_{|S,\alpha,c}} \\
&\geq \max_{j \in S} \tr{(\Pi^\chi_{f(j)})_{\reg j} \rho_{|S,\alpha,c}} \geq \frac{1}{t} \sum_{j\in S}\tr{(\Pi^\chi_{f(j)})_{\reg j} \rho_{|S,\alpha,c}},
\end{align}
\noindent where we used $\bigotimes_{i \in S} (\1 - \Pi_{f(i)}) \leq (\1 - \Pi_{f(j)})_{\reg j}$ for any $j \in S$. In expectation over $f\in \cF_\chi$, 
\begin{equation}
    \E_{f \in \cF_\chi} \prs{\rho_{|S,\alpha,c}}{N^{\chi}_{f,S} >0}  \geq \frac{1}{t} \sum_{j\in S}\tr{(H_\chi)_{\reg j} \rho_{|S,\alpha,c}},
\end{equation}
since the random variable $f(j)$ is uniformly distributed. Now, we can group the terms of the different colors $\chi\in [g]$:
\begin{equation}
    \sum_{\chi\in [g]} w_\chi \E_{f \in \cF_\chi} \prs{\rho_{|S,\alpha,c}}{N^{\chi}_{f,S} >0} \geq \frac{1}{t}  \sum_{j\in S}\tr{(H)_{\reg j} \rho_{|S,\alpha,c}}  \geq \frac{\alpha}{t}  \sum_{j\in S}\tr{\Pi^{\geq \alpha}_{\reg j} \rho_{|S,\alpha,c}} 
\end{equation}

\noindent Since $\Pi^{\geq \alpha}_{\reg j} \leq \frac{1}{\alpha} H$. The above expression considers the probability of receiving an energy above $\alpha$ on a register in $S$ conditioned on having received an energy above $\alpha$ on at least $4r$ registers in $\overline{S}$. Following the discussion in \cref{section:aux_energy_meas}, define $C_j$ as the binary random variable corresponding to the measurement $\{\Pi^{\geq \alpha}_{\reg j}, \1 - \Pi^{\geq \alpha}_{\reg j}\}$.
\begin{align*}
\sum_{c\in U_{\overline S}} \prs{\rho}{c} \sum_{j \in S} \tr{ \Pi^{\geq \alpha}_{\reg j} \, \rho_{|S,\alpha,c} }
= \sum_{j \in S} \tr{ \Pi^{\geq \alpha}_{\reg j} \, \left( \sum_{c\in U_{\overline S}} \prs{\rho}{c} \rho_{|S,\alpha,c} \right) }
= \sum_{j \in S} \prs{\rho}{C_j = 1 \;\wedge\; \sum_{i \in \overline S} C_i \geq 4r} \,.
\end{align*}

This is because $\sum_{c\in U_{\overline S}} \prs{\rho}{c} \rho_{|S,\alpha,c}$ is the \emph{subnormalised} post-measurement state for at least $4r$ of the measurements associated with $\{C_i\}_{i \in S}$ yielding 1.
We can further bound 
\begin{align*}
\sum_{j \in S} \Pr\Big[C_j = 1 \;\wedge\; \sum_{i \in \overline S} C_i \geq 4r\Big]
&= \E \Big[\sum_{j \in S} C_j \;\Big|\; \sum_{i \in \overline S} C_i \geq 4r\Big] \cdot \Pr\Big[ \sum_{i \in \overline S} C_i \geq 4r \Big] \\
&\geq 2r \Pr \Big[\sum_{j \in S} C_j \geq 2r \;\Big|\; \sum_{i \in \overline S} C_i \geq 4r\Big] \cdot \Pr\Big[ \sum_{i \in \overline S} C_i \geq 4r \Big] \tag{Markov's, \cref{fact:markovs}} \\
&= 2r \left( 1 - \Pr \Big[\sum_{j \in S} C_j < 2r \;\Big|\; \sum_{i \in \overline S} C_i \geq 4r\Big]  \right) \cdot \Pr\Big[ \sum_{i \in \overline S} C_i \geq 4r \Big] \\
&= 2r \left( \Pr\Big[ \sum_{i \in \overline S} C_i \geq 4r \Big]  - \Pr \Big[\sum_{j \in S} C_j < 2r \;\wedge\; \sum_{i \in \overline S} C_i \geq 4r\Big] \right) \,.
\end{align*}
Combining all of these steps and inserting $\Pr\Big[ \sum_{i \in \overline S} C_i \geq 4r \Big] = \sum_{c\in U_{\overline S}} \prs{\rho}{c}$, we have that 
\begin{align*}
\E_{S\subseteq[2t], |S|=t} \sum_{c\in U_{\overline S}} \prs{\rho}{c} \ \prs{\rho_{|S,\alpha,c}}{ N^{(2t)}_{f,S} >0} \geq \frac{2 \alpha r}{t} \left( \E_{S} \sum_{c\in U_{\overline S}} \prs{\rho}{c} - \E_S \Pr \Big[\sum_{j \in S} C_j < 2r \;\wedge\; \sum_{i \in \overline S} C_i \geq 4r\Big] \right)
\end{align*}
Finally, we can apply \cref{lemma:de-finetti} to conclude the proof.
\end{proof}

\subsection{The low-energy case}
\label{sec:low-energy-analysis}

In this subsection we prove \cref{lem:high_energy_unlikely}, which controls the contribution to the energy (in \cref{eqn:split_sum}) when conditioned on receiving a low-energy outcome from the auxiliary measurement introduced in \cref{section:aux_energy_meas}. This section (together with \cref{section:deferred_claims}, which contains the proofs of the most technically difficult lemmas that we use in this section) constitutes the centerpiece of our analysis. 

Our broad strategy for proving \cref{lem:high_energy_unlikely} is similar to Dinur's strategy in \cite[Section 6]{dinur2007pcp}: the lower bound claimed in \cref{lem:high_energy_unlikely} follows from a lower bound on the mean and an upper bound on the variance of the `violation number' (which for us is a random variable associated with the outcome of a measurement) and the Second Moment Method (\cref{fact:smm}). Controlling the variance, however, is highly non-trivial and requires several modications to the classical strategy. Our approach can be broken into two main steps, which are encapsulated in two lemmas: \cref{lemma:second_moment_Xs} and \cref{lem:2ndmoment}. Both have proofs which are deferred to \cref{section:deferred_claims}; in this section we simply present their statements, explain their intuition, and show how to combine them to control the `low-energy' component of $\lambda_{\min}(H^{(2t)})$, namely, the second term of \cref{eqn:split_sum}. We continue this overview below \cref{definition:weighted_violation}, after we have defined some notation.

We firstly state the main lemma that we prove in this section.

\begin{lemma}[The Low Energy Case]\label{lem:high_energy_unlikely}
Let $\rho$ be a ground state of the amplified Hamiltonian $H^{(2t)}$. Let $C_\mu$ be the constant in \Cref{lemma:second_moment_Xs}, and let $\omega_{\min}$ be as in \Cref{eq:min-weight-def}.
For any choice of parameters $\alpha > 0$ and $r \in [t]$ it holds that \emph{either}
\begin{align}
    \tr{H^{(2t)} \rho} \geq t \lambda_{\min}(H) \cdot \E_{|S|=t} \prs{\rho}{\overline{U}_{\overline{S}}}
\label{eq:lowenergy_lowerbound_trivial}
\end{align}
\emph{or}
\begin{align}
\sum_\chi w_\chi \cdot \E_{|S|=t} \E_{f \in \cF_\chi} \sum_{c\in \overline U_{\overline S}} \prs{\rho}{c} \ \prs{\rho_{|S,\alpha,c}}{ N^{\chi}_{f,S} >0} \,\geq\, \frac{t\cdot  \E_{|S| = t} \prs{\rho}{\overline{U}_{\overline{S}}}}{1 + C_\mu \: + \: \omega_{\min} \cdot (1 + 8r + \alpha t + 2t \cdot e^{-8r^2/t})} \times \lambda_{\min}(H) \,. \label{eqn:lowenergy_lowerbound}
\end{align}
Here, $U_{\overline S}$ is defined as in \cref{eqn:us_def} and depends implicitly on $\alpha$ and $r$, and $\prs{\rho}{\overline U_{\overline S}} = \sum_{c \in \overline U_{\overline S}}\prs{\rho}{c}$.
\end{lemma}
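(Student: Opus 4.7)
The plan is to lower bound $\prs{\rho_{|S,\alpha,c}}{N^{\chi}_{f,S}>0}$ via the Second Moment Method (\cref{fact:smm}) applied to the violation number, and then average over $c\in\overline U_{\overline S}$, $f\in\mathcal{F}_\chi$, $|S|=t$, and $\chi$. Concretely, I would introduce the conditional random variable $X_{\chi,f,S,c}$ that equals the outcome of measuring $N^{\chi}_{f,S}$ on $\rho_{|S,\alpha,c}$, reweight it by $w_\chi$ across colors to form an aggregated ``violation number'', and apply
\[
\prs{}{X>0}\;\geq\;\frac{\E[X]^2}{\E[X^2]}.
\]
Summed against the low-energy weights $\prs{\rho}{c}$ and $\E_{|S|=t}$, this gives a bound of the form $\frac{(\text{first moment})^2}{\text{second moment}}$. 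A first-moment lower bound follows as in \Cref{lem:upperbound}: since for each $j\in S$ the index $f(j)$ is uniform on $[m_\chi]$, one has $\E_{f\in\mathcal{F}_\chi}[(\Pi^\chi_{f(j)})_{\reg j}]=(H_\chi)_{\reg j}$, and summing over colors with weights $w_\chi$ yields $H_{\reg j}$; hence $\E[X]\geq t \cdot \lambda_{\min}(H)\cdot \E_{|S|=t}\prs{\rho}{\overline U_{\overline S}}$ (using that $\rho_{|S,\alpha,c}$ on $S$ has energy at least $\lambda_{\min}(H)$ per register on which $H$ acts).

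Next, the bulk of the work is the second-moment upper bound, and this is where the two advertised lemmas enter. I would expand
\[
\E[X^2]\;=\;\sum_{\chi,\chi'} w_\chi w_{\chi'}\sum_{i\in S}\sum_{j\in S}\E_{f,f'}\tr{(\Pi^\chi_{f(i)})_{\reg i}(\Pi^{\chi'}_{f'(j)})_{\reg j}\,\rho_{|S,\alpha,c}},
\]
and split into \emph{diagonal} terms ($i=j$), which contribute at most a multiple of the first moment and are absorbed into the ``$1+C_\mu$'' constants, and \emph{off-diagonal} pairs. For the off-diagonal pairs with the \emph{same} color, I would invoke \Cref{lemma:second_moment_Xs}: since $H_\chi$ is commuting we may diagonalize and reduce to a classical walk calculation, and \cref{lemma:expander-quadratic} with $P^{|i-j|}$ gives a product structure $\approx(H_\chi)_{\reg i}(H_\chi)_{\reg j}$ plus a $\mu^{|i-j|}$--error (the source of $C_\mu$) and a per-register $\omega_{\min}$--factor for recombining the colors. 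For off-diagonal pairs with \emph{different} colors, which do not benefit from the expander walk, I would invoke \Cref{lem:2ndmoment}: this is the ``miser's de Finetti'' step, which is only permissible because we have conditioned on $c\in\overline U_{\overline S}$, i.e.~fewer than $4r$ of the auxiliary registers have energy $\geq\alpha$. This low-energy conditioning will let one argue (through the coarse-grained pinching $\{\Pi^{\geq\alpha}_{\reg j},\mathbb{I}-\Pi^{\geq\alpha}_{\reg j}\}$ plus a Chernoff-type estimate via \cref{fact:tomamichel}) that, up to error terms of order $r$, $\alpha t$, and $t\cdot e^{-8r^2/t}$ (which precisely match the coefficients in the denominator of \cref{eqn:lowenergy_lowerbound}), $\rho_{|S,\alpha,c}$ behaves like a tensor product on $S$ with respect to the measurement of $H_\chi$ tensored with $H_{\chi'}$. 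Concretely, this should give
\[
\sum_{i\neq j,\chi,\chi'} w_\chi w_{\chi'}\E_{f,f'}\tr{\cdots}\;\lesssim\;\bigl(\E[X]\bigr)^2\cdot\bigl(1+C_\mu+\omega_{\min}\cdot(1+8r+\alpha t+2t\,e^{-8r^2/t})\bigr),
\]
with the three ingredients coming from diagonal terms, expander-walk residuals, and the de Finetti error respectively.

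Combining these two moment bounds through \cref{fact:smm} gives
\[
\E\Big[\prs{\rho_{|S,\alpha,c}}{N^{\chi}_{f,S}>0}\Big]\;\geq\;\frac{\lambda_{\min}(H)\cdot t\cdot \E_{|S|=t}\prs{\rho}{\overline U_{\overline S}}}{1+C_\mu+\omega_{\min}(1+8r+\alpha t+2t\,e^{-8r^2/t})},
\]
which is exactly \cref{eqn:lowenergy_lowerbound}. The dichotomy in the statement handles the degenerate case in which the first-moment argument itself already dominates: if $\E[X]$ happens to be at least as large as the second moment (equivalently, when the SMM becomes vacuous because $\Pr[X>0]\leq\E[X]$ is already the stronger bound), then the unconditional inequality $\tr{H^{(2t)}\rho}\geq t\lambda_{\min}(H)\cdot\E_{|S|=t}\prs{\rho}{\overline U_{\overline S}}$ in \cref{eq:lowenergy_lowerbound_trivial} follows directly by dropping $\prs{\rho}{c}$--weights and applying the first-moment estimate registered above.

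The main obstacle is unquestionably the cross-color, cross-register off-diagonal second-moment bound, i.e.~the application of \cref{lem:2ndmoment}. The expander walk alone gives no help for projections of different colors acting on two different registers of a potentially entangled state $\rho_{|S,\alpha,c}$, so this is precisely where the miser's de Finetti technique has to do its work: we can only afford a coarse ``is the per-register energy above $\alpha$?'' pinching (to keep $r$, and therefore the randomness cost, logarithmic), and we must exploit that the auxiliary measurement commutes with $H_{\reg i}$ so that pinching does not disturb the quantity we are trying to control. Getting the error from this pinching to scale like $\omega_{\min}\cdot(1+8r+\alpha t+2t\,e^{-8r^2/t})$---rather than with any dependence on the dimension $2^n$---is the delicate point, and is what ultimately allows iteration in the parameter-tuning step of \cref{section:parameters}.
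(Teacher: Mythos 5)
Your high-level plan (Second Moment Method, first-moment lower bound, second-moment upper bound via \cref{lemma:second_moment_Xs} and \cref{lem:2ndmoment}, then combine) matches the paper, but several of your intermediate claims are wrong in ways that would derail the argument.

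First, your expansion of $\E[X^2]$ as $\sum_{\chi,\chi'} w_\chi w_{\chi'}\sum_{i,j}\E_{f,f'}\tr{(\Pi^\chi_{f(i)})_{\reg i}(\Pi^{\chi'}_{f'(j)})_{\reg j}\rho_{|S,\alpha,c}}$ is incorrect. The random variable $X_S$ (\cref{definition:weighted_violation}) is defined by sampling a \emph{single} color $\chi$ and a \emph{single} function $f$, then measuring $N^\chi_{f,S}$; hence $\E[X_S^2]=\sum_\chi w_\chi\,\E_{f\in\mathcal F_\chi}\tr{(N^\chi_{f,S})^2\sigma}$, which involves $\Pi^\chi_{f(i)}$ and $\Pi^\chi_{f(j)}$ with the \emph{same} $\chi$ and $f$. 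There is no $\chi\neq\chi'$ cross-term in $\E[X^2]$, so your decomposition of the off-diagonal sum into ``same-color'' versus ``different-color'' pairs does not correspond to anything in the quantity being bounded. Consequently your stated roles of the two lemmas are also mislabeled: \cref{lemma:second_moment_Xs} handles the \emph{entire} expander-walk second moment (same color, off-diagonal registers), trading the walk correlation for $(1+C_\mu)$ plus a ``pairwise-independent'' residual $\omega_{\min}\sum_{i\neq j}\tr{H_{\reg i}\otimes H_{\reg j}\sigma}$ (the $\omega_{\min}$ comes from recombining colors when $H_\chi$ is replaced by $H$, not from a cross-color term). \cref{lem:2ndmoment} then bounds that residual $\sum_{i\neq j}\tr{H_{\reg i}\otimes H_{\reg j}\rho_{|S,\alpha,c}}$ via the de Finetti/pinching argument; it is not a separate handling of a different-color case.

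Second, your explanation of the dichotomy is not the paper's. \cref{eq:lowenergy_lowerbound_trivial} is not ``the degenerate case in which SMM becomes vacuous,'' and your claimed unconditional inequality $\tr{H^{(2t)}\rho}\geq t\lambda_{\min}(H)\E_{|S|=t}\prs{\rho}{\overline U_{\overline S}}$ is not unconditional. The dichotomy exists because \cref{lem:2ndmoment} requires the hypothesis \cref{eqn:trivial_single_copy_bound}, namely that for every $i$, $\tr{H_{\reg i}\rho}\leq\E_{|S|=t}\sum_{j\in S}\sum_{c\in\overline U_{\overline S}}\prs{\rho}{c}\tr{H_{\reg j}\rho_{|S,\alpha,c}}$; that hypothesis is needed inside the proof of \cref{lem:2ndmoment} to control the contribution of the ``high-energy on $S$ despite low-energy on $\overline S$'' outcomes without blowing up relative to $\E[X]$. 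If the hypothesis fails for some $i$, one observes $H^{(2t)}\geq H_{\reg i}$ and $\tr{H_{\reg j}\rho_{|S,\alpha,c}}\geq\lambda_{\min}(H)$ to conclude \cref{eq:lowenergy_lowerbound_trivial} directly. You should make this case split the starting point of the proof, introduce $X$ after that, and only then apply \cref{lemma:second_moment_Xs} followed by \cref{lem:2ndmoment} (which is now applicable) to get $\E[X^2]\leq(1+C_\mu+\omega_{\min}(1+8r+\alpha t+2te^{-8r^2/t}))\E[X]$, from which SMM yields $\Pr[X>0]\geq\E[X]/(1+C_\mu+\omega_{\min}(\ldots))$ and the stated bound.
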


It will be helpful to define and analyse the following random variable $X_S$, which computes a weighted average of the ``Subset Violation Number" (see \cref{def:colored-subset-violation-number-meas}) over a random choice of the color $\chi$, with the weights determined by the weights $w_\chi$ of the layers in the Hamiltonian $H$.

\begin{definition}
    [Weighted Violation Number]\label{definition:weighted_violation} Let $\sigma$ be an arbitrary state on $2t$ $n$-qubit registers. Consider the following random variable $X_{S}$, defined by measuring $\sigma$ as follows:
\begin{enumerate}
    \item Pick a color $\chi$ by flipping a $g$-sided die with weights $(w_1, \cdots, w_g)$.
    \item Pick a function $f\in\mathcal{F}_{\chi}$.
    \item Measure the observable $N_{f, S}^{\chi}$ (\cref{def:colored-subset-violation-number-meas}).
\end{enumerate}
\end{definition}

\noindent Our analysis proceeds in three steps. 

\begin{enumerate}
    \item[1.] We begin by proving \cref{lemma:expectation_XS}, which is a simple analysis of the expectation value of $X_S$.
    \item[2.] Subsequently, \cref{lemma:second_moment_Xs} attempts to analyze the variance of $X_S$. Roughly speaking, \cref{lemma:second_moment_Xs} claims that the variance of $X_S$, when the new (amplified) clause $f$ is sampled using an expander random walk, is approximately the same as if $f$ were \textit{pairwise independent} (up to some loss dependent on the expander graph and the number of non-commuting layers).
    \item[3.] \cref{lem:2ndmoment} (essentially) bounds the variance of $X_S$ if $f$ were in fact drawn from a pairwise independent function family. The proof of \cref{lem:2ndmoment} is the biggest departure that our quantum analysis makes from the analogous classical analysis,\footnote{A claim analogous to \cref{lem:2ndmoment} is easy to prove if the ground state is a product state, which it would be if it were a classical string. The presence of entanglement allows the variance of non-local measurements to increase via constructive interference.} and is where we crucially leverage the de Finetti framework developed in the previous sections. 
\end{enumerate}

Put together, \cref{lemma:second_moment_Xs} and \cref{lem:2ndmoment} give us control over the variance of $X_S$; which together with \cref{lemma:expectation_XS} allow us to apply the second moment method in order to lower bound the probability that $X_S$ is greater than 0. 

\begin{lemma}[The first moment of $X_S$]\label{lemma:expectation_XS}
    For any state $\sigma$ on $2t$ $n$-qubit registers, the expectation of $X_S$ satisfies 
    \begin{equation}
       \mathbb{E}[X_S] =\sum_{j\in S}\Tr[H_{\reg j}\sigma]
    \end{equation} 
\end{lemma}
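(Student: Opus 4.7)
The plan is to compute the expectation directly by linearity, reducing the statement to the fact that the marginal distribution of $f(j)$ (for any fixed $j\in[2t]$) under a uniformly random length-$2t$ walk in the expander $G_\chi$ is the uniform distribution on $[m_\chi]$. Once that marginal fact is in hand, everything else is just repackaging the definitions.

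First, I would unfold the sampling procedure in \cref{definition:weighted_violation}. Since the color $\chi$ is drawn with probability $w_\chi$, the function $f$ is drawn uniformly from $\mathcal{F}_\chi$, and the observable $N_{f,S}^\chi$ has the additive form \eqref{equation:n_as_sum} restricted to $j\in S$, linearity of expectation and trace give
\begin{equation}
\mathbb{E}[X_S] \;=\; \sum_{\chi\in [g]} w_\chi \cdot \E_{f\in\mathcal{F}_\chi}\,\tr{N^\chi_{f,S}\,\sigma} \;=\; \sum_{\chi\in [g]} w_\chi \sum_{j\in S}\, \E_{f\in\mathcal{F}_\chi}\!\tr{(\Pi^\chi_{f(j)})_{\reg j}\,\sigma}.
\end{equation}
Now I would interchange the trace with the expectation over $f$ (all quantities are finite sums), so that the task reduces to identifying $\E_{f\in\mathcal{F}_\chi}[\Pi^\chi_{f(j)}]$.

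Next, I would invoke the defining property of $\mathcal{F}_\chi$ from \cref{definition:paths}: it is in bijection with the length-$2t$ walks on the $d$-regular graph $G_\chi$, and the uniform distribution over such walks pushes forward to the uniform distribution on $[m_\chi]$ at each coordinate (since the stationary distribution of the random walk on any regular graph is uniform, and every vertex is an equally likely starting point under the uniform distribution over walks). Therefore $\E_{f\in\mathcal{F}_\chi}[\Pi^\chi_{f(j)}] = \E_{i\in[m_\chi]}\Pi^\chi_i = H_\chi$, matching the layer definition in \eqref{equation:H_definition}.

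Finally, I would substitute back and recombine the layers:
\begin{equation}
\mathbb{E}[X_S] \;=\; \sum_{j\in S} \sum_{\chi\in[g]} w_\chi \tr{(H_\chi)_{\reg j}\,\sigma} \;=\; \sum_{j\in S} \tr{\Big(\sum_{\chi} w_\chi H_\chi\Big)_{\reg j}\,\sigma} \;=\; \sum_{j\in S}\tr{H_{\reg j}\,\sigma},
\end{equation}
which is the claimed identity. There is no real obstacle here; the only point to verify with some care is the uniform-marginal property of the expander walks, which is routine given the $d$-regularity of $G_\chi$ and the uniform-on-starting-vertex convention implicit in \cref{definition:paths}.
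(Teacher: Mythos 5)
Your proof is correct and follows essentially the same route as the paper: unfold the definition of $X_S$, use linearity of expectation and the additive form \eqref{equation:n_as_sum}, apply the fact that $f(j)$ is uniformly distributed on $[m_\chi]$ for a uniformly random length-$2t$ walk on the $d$-regular graph $G_\chi$, and recombine the layers. You give a bit more detail than the paper on the uniform-marginal step, but the underlying argument is identical.
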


\begin{proof}
    The definition entails
    \begin{equation}
        \mathbb{E}[X_S] = \sum_{\chi\in [g]} w_\chi \cdot \mathbb{E}_{f\in \mathcal{F}_\chi} \Tr[N_{f, S}^\chi\sigma]
    \end{equation}
    The expectation $\mathbb{E}[X_S]$ follows from linearity of expectation, and the fact that over random choice of $f\in \mathcal{F}_\chi$, $f(j)$ is uniformly random over $[m_\chi]$: 
\begin{equation}
    \mathbb{E}_f \Tr[N_{f, S}^\chi\sigma] = \sum_{j\in S}\Tr[H_{\reg j}^\chi\sigma].
\end{equation}
\end{proof}

The next two lemmas, \cref{lemma:second_moment_Xs} and \cref{lem:2ndmoment}, capture our analysis of the variance of $X_S$. An overview of how to interpret their statements is given at the start of \cref{sec:low-energy-analysis}.

\begin{lemma}
    [The second moment of $X_S$]\label{lemma:second_moment_Xs} There exists a constant $C_\mu\leq 2/(1-\mu)$ dependent just on the collection of expander graphs, such that for every state $\sigma$ on $2t$ $n$-qubit registers, 
    
    \begin{align}
        \E[X_S^2]\leq (1+C_\mu)\sum_{j\in S}\Tr[H_{\reg j}\sigma] + \omega_{\min}\cdot  \sum_{i\neq j\in S}\Tr[H_{\reg i}\otimes H_{\reg j}\sigma]
        \label{eq:second-moment-Xs}
    \end{align}
\end{lemma}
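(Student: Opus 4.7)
The plan is to expand $X_S^2$ into diagonal and cross terms, handle the diagonal piece trivially, and reduce the cross terms to a classical per-color calculation via the commuting-layer structure. Since the projectors $\{(\Pi^\chi_{f(j)})_{\reg j}\}_{j\in S}$ act on disjoint registers they commute, so conditional on $\chi$ and $f$ the outcomes $b_j\in\{0,1\}$ satisfy $X_S^2 = \sum_{j\in S} b_j + \sum_{i\neq j\in S} b_i b_j$. Averaging over $\chi \sim (w_1,\ldots,w_g)$ and over $f \in \cF_\chi$, and using that the marginal of $f(j)$ under the expander walk is uniform on $[m_\chi]$, the diagonal contribution evaluates exactly to $\sum_{j\in S}\Tr[H_{\reg j}\sigma]$, which supplies the unit coefficient in the $(1+C_\mu)$ term.

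The crux is to bound, for each color $\chi$ and pair $i\neq j$, the expectation $\E_{f\in\cF_\chi}\Tr[(\Pi^\chi_{f(i)})_{\reg i}\otimes(\Pi^\chi_{f(j)})_{\reg j}\sigma]$. Under the length-$2t$ walk on $G_\chi$, $(f(i),f(j))$ has joint distribution $\Pr[f(i)=a,f(j)=b] = \tfrac{1}{m_\chi}\, P^{|i-j|}_{ab}$ with $P$ the walk's transition matrix. The key observation is that within a single color $\chi$, all $\{\Pi^\chi_a\}_{a\in[m_\chi]}$ mutually commute, so they share a common eigenbasis $\{|\phi^\chi_k\rangle\}$ on each register; hence every operator of the form $(\Pi^\chi_a)_{\reg i}\otimes(\Pi^\chi_b)_{\reg j}$ is diagonal in the associated tensor-product eigenbasis. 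Dephasing $\sigma$ in this per-color basis on all $2t$ registers therefore preserves every such expectation exactly, while replacing $\sigma$ by a classical mixture $\sigma^\chi$ of product eigenstates. By convexity, it suffices to bound the cross term on a pure product $\sigma_i\otimes\sigma_j$: setting $u_a = \Tr[\Pi^\chi_a\sigma_i]$ and $w_a = \Tr[\Pi^\chi_a\sigma_j]$, the expression becomes $\tfrac{1}{m_\chi}\, u^\top P^{|i-j|} w$, and \cref{lemma:expander-quadratic}, combined with the identity $\|u\|_1 = m_\chi \Tr[H_\chi\sigma_i]$, bounds this by
\begin{equation*}
\Tr[H_\chi\sigma_i]\,\Tr[H_\chi\sigma_j] \,+\, \mu^{|i-j|}\bigl(\Tr[H_\chi\sigma_i] + \Tr[H_\chi\sigma_j]\bigr).
\end{equation*}

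It then remains to sum $i\neq j\in S$ and weight by $w_\chi$. The ``product'' terms aggregate to $\sum_\chi w_\chi\sum_{i\neq j}\Tr[(H_\chi)_{\reg i}\otimes(H_\chi)_{\reg j}\sigma]$; using $w_\chi \leq \omega_{\min}\, w_\chi^2$ and the positivity of the cross-color operators $(H_\chi)_{\reg i}\otimes(H_{\chi'})_{\reg j}\succeq 0$ to freely restore the $\chi\neq\chi'$ terms, this is at most $\omega_{\min}\sum_{i\neq j}\Tr[H_{\reg i}\otimes H_{\reg j}\sigma]$, matching the second summand of \eqref{eq:second-moment-Xs}. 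The ``error'' $\mu^{|i-j|}$-terms collapse via the geometric-series bound $\sum_{k\geq 1}\mu^k \leq \mu/(1-\mu)$, yielding $C_\mu\sum_\chi w_\chi\sum_i\Tr[(H_\chi)_{\reg i}\sigma] = C_\mu\sum_i\Tr[H_{\reg i}\sigma]$ for a constant $C_\mu$ depending only on $\mu$, which supplies the $C_\mu$ part of the coefficient. The one genuinely non-routine step is the per-color dephasing: it is precisely the reduction from an entangled $\sigma$ to a product-state ensemble that enables the classical expander-walk analysis to go through, and its validity rests squarely on the layered assumption. Entanglement \emph{across} colors—which cannot be dephased simultaneously in a common basis—is genuinely dangerous and must instead be handled by the de Finetti argument of \cref{lem:2ndmoment}.
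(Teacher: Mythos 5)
Your proposal is correct and follows essentially the same route as the paper: split $X_S^2$ into diagonal and cross terms, use the within-layer commutativity to dephase $\sigma$ in the common eigenbasis of $\{\Pi^\chi_a\}$ (so that cross terms reduce, by convexity, to classical quadratic forms in the expander transition matrix), apply \cref{lemma:expander-quadratic}, and recombine with the geometric-series bound giving $C_\mu$ and the $w_\chi \geq 1/\omega_{\min}$ trick to restore the full $H\otimes H$. Your algebra for the $\omega_{\min}$ step ($w_\chi \leq \omega_{\min}w_\chi^2$, then adding cross-color terms by positivity) is a minor rephrasing of the paper's ($w_\chi (H_\chi)_{\reg i} \leq H_{\reg i}$, then $1\leq \omega_{\min}w_\chi$) and gives the same bound.
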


For clarity, we defer the proof of \cref{lemma:second_moment_Xs} to the following \cref{section:deferred_claims}. 

The following lemma (essentially) quantifies the variance of $X_S$ if $f$ were drawn from a pairwise independent function family, modulo a technical condition (\cref{eqn:trivial_single_copy_bound}) on how the energy of $\rho$ balances across the registers. 

\begin{lemma} \label{lem:2ndmoment}
Suppose that $\rho$ is a state such that for all $i$, 
\begin{align}
\tr{H_{\reg i} \rho} \leq \E_{|S|=t} \sum_{j \in S} \sum_{c\in \overline U_{\overline S}} \prs{\rho}{c} \ \tr{H_{\reg j} \rho_{|S,\alpha,c}} \,. \label{eqn:trivial_single_copy_bound}
\end{align}
Then for any choice of parameters $\alpha > 0$ and $r \in [t]$ it holds that
\begin{align*}
\E_{|S|=t} \sum_{c\in \overline U_{\overline S}} \prs{\rho}{c} \sum_{\substack{i,j \in S \\ i \neq j}} \tr{H_{\reg i} \cdot H_{\reg j} \rho_{|S, \alpha, c} }
\leq (8r + \alpha t + 2t \cdot e^{-\frac{8r^2}{t}}) \cdot \E_{|S|=t} \sum_{j \in S} \sum_{c\in \overline U_{\overline S}} \prs{\rho}{c} \tr{H_{\reg j} \rho_{|S, \alpha, c} } \,.
\end{align*}
Here, $U_{\overline S}$ is defined as in \cref{eqn:us_def} and depends implicitly on $\alpha$ and $r$.
\end{lemma}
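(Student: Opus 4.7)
The plan is to upper-bound the operator $\sum_{i\neq j,\, i,j\in S} H_{\reg i} H_{\reg j}$ by combining an operator-level decomposition of each $H_{\reg i}$ at the threshold $\alpha$ with a ``de Finetti''-style control (via \cref{fact:tomamichel}) of the bad event that many registers in $S$ are high-energy despite few registers in $\overline S$ being so.

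First, set $A_i := \Pi^{\geq \alpha}_{\reg i}$ and $M_S := \sum_{i\in S} A_i$. Since $H \leq \alpha \Pi^{<\alpha} + \Pi^{\geq \alpha}$, we have the operator inequality $H_{\reg i} \leq \alpha \mathbb{I} + A_i$. Plugging this into one factor of each summand (and using that $A_i$ commutes with all $H_{\reg j}$ by either register disjointness or functional calculus on the same register), I derive
\begin{align*}
\sum_{\substack{i,j \in S \\ i\neq j}} H_{\reg i} H_{\reg j} \;\leq\; \alpha t \cdot L \;+\; M_S \cdot L, \qquad L := \sum_{j \in S} H_{\reg j}.
\end{align*}

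Next, dichotomize on $M_S$ at threshold $8r$ via the spectral projections $P^+ := P_{M_S \geq 8r}$ and $P^- := \mathbb{I} - P^+$. On the support of $P^-$ one has $M_S \leq 8r$, hence $M_S P^- L \leq 8r\,L$; and $M_S P^+ L \leq t \cdot L P^+$ using $M_S \leq t$ everywhere. Taking trace against $\rho_{|S,\alpha,c}$ and averaging over uniformly random $|S| = t$ and $c \in \overline U_{\overline S}$, the first two contributions combine into $(\alpha t + 8r)\cdot F$, where $F$ denotes the right-hand side of the target inequality. The $M_S P^+$ piece is then the ``bad event'' term: an application of \cref{fact:tomamichel} with the roles of $S$ and $\overline S$ interchanged (choosing $\delta = 4r/t$ and $\nu = 4r/t$) yields
\begin{align*}
\E_{|S| = t} \sum_{c\in \overline U_{\overline S}} \prs{\rho}{c} \, \tr{P^+ \rho_{|S,\alpha,c}} \;\leq\; e^{-8r^2/t},
\end{align*}
which is the same kind of set-avoiding statement as \cref{lemma:de-finetti} and expresses that the outcomes of the auxiliary measurement on $\overline S$ are predictive of the high-energy structure on $S$.

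The last, and expected to be the most delicate, step is to convert $t \cdot \E_{S,c} \tr{L P^+ \rho_{|S,\alpha,c}}$ into a bound of the form $2t \cdot e^{-8r^2/t}\cdot F$. The naive bound $L P^+ \leq t P^+$ produces an additive term of the form $t^2 e^{-8r^2/t}$ which is not proportional to $F$; absorbing it into $2t e^{-8r^2/t} \cdot F$ requires an additional ingredient showing that $F$ is essentially as large as the single-register averages of $H$ scaled appropriately by $t$. This is precisely the role of the hypothesis \eqref{eqn:trivial_single_copy_bound}: combined with the symmetry of the random choice of $S$ and the commutativity of the auxiliary energy measurement with every $H_{\reg i}$, it lets one replace the per-register energies $\tr{H_{\reg i} \rho}$ by their conditional counterparts weighted by $\prs{\rho}{\overline U_{\overline S}}$, yielding the claimed coefficient $(8r + \alpha t + 2t\, e^{-8r^2/t})$ on $F$.
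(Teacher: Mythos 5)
Your decomposition is essentially the paper's proof, rephrased at the operator level instead of in terms of measurement outcomes: your spectral projections $P^{\pm}$ of $M_S$ coincide with $\sum_{d \in V_S} \Pi^{(d)}_{S,\alpha}$ and $\sum_{d \in \overline V_S} \Pi^{(d)}_{S,\alpha}$, and your $(\alpha t + 8r)$ bound from $P^-$ is the paper's Case $d\in\overline V_S$ argument ($d_i=0$ gives the $\alpha t$ contribution, $d_i=1$ gives the $8r$). The de Finetti invocation with $\delta = \nu = 4r/t$ is also the paper's, and both routes reduce the lemma to controlling the single quantity $t\,\E_{S}\sum_{c\in\overline U_{\overline S}}\prs{\rho}{c}\,\tr{L\,P^+\,\rho_{|S,\alpha,c}}$.

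However, that last step — which you correctly flag as the delicate one — is where your write-up stops short, and it is genuinely the hardest part of the lemma. Saying that hypothesis \eqref{eqn:trivial_single_copy_bound} ``lets one replace the per-register energies $\tr{H_{\reg i}\rho}$ by their conditional counterparts'' puts the cart before the horse: the hypothesis bounds the \emph{unconditioned} quantities $\tr{H_{\reg i}\rho}$ by $F$, so you first need to reduce the conditioned quantity $\E_{S,c}\tr{L P^+ \rho_{|S,\alpha,c}}$ to a bound involving $\sum_i \tr{H_{\reg i}\rho}$. The move that makes this work is: expand each $H_{\reg j}$ in its spectral decomposition $H_{\reg j} = \sum_a \lambda_a \Pi^{(a)}_{\reg j}$; use the fact that $\Pi^{(a)}_{\reg j}$, $P^+$, and the $\overline S$-projectors $\Pi^{(c)}_{\overline S,\alpha}$ all mutually commute to reorganize the trace so that the state being conditioned on is $\rho_{|j,a} := \Pi^{(a)}_{\reg j}\rho\,\Pi^{(a)}_{\reg j}/\tr{\rho_{j,a}}$ (i.e.\ ``measure $H$ on register $j$ first''); extend the sum from $j\in S$ to $j\in[2t]$ (each added term is nonnegative); apply the set-avoiding bound \cref{lemma:de-finetti}/\cref{fact:tomamichel} \emph{to the state $\rho_{|j,a}$} — note that this lemma holds for an arbitrary state, which is exactly what makes the reordering legal — to extract the factor $e^{-8r^2/t}$ per $(j,a)$; and only then re-assemble $\sum_{j}\sum_a \lambda_a\tr{\rho_{j,a}} = \sum_j\tr{H_{\reg j}\rho}$ and invoke \eqref{eqn:trivial_single_copy_bound} to conclude $\sum_j\tr{H_{\reg j}\rho}\le 2t\cdot F$. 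Without this re-expansion-and-reorder argument, the $P^+$ term does not become proportional to $F$, as you observed. So the approach is right and matches the paper, but the proof as written has a genuine gap precisely at the point you identified; spelling out the spectral-expansion-plus-reorder step is what closes it.
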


Again, for clarity of structure, we defer the proof of \cref{lem:2ndmoment} to \cref{section:deferred_claims}. Instead, we show how to combine Lemmas \ref{lemma:second_moment_Xs} and \ref{lem:2ndmoment} to conclude the proof of \cref{lem:high_energy_unlikely}, which lower bounds the contributions that the `low energy terms' make to the ground energy of the amplified Hamiltonian.

\begin{proof}

[of \cref{lem:high_energy_unlikely}] Suppose that there exists an $i$ for which 
\begin{align*}
\tr{H_{\reg i} \rho} \geq \E_{|S|=t} \sum_{j \in S} \sum_{c\in \overline U_{\overline S}} \prs{\rho}{c} \ \tr{H_{\reg j} \rho_{|S,\alpha,c}} \,.
\end{align*} 
Then using that $\tr{H_{\reg j} \rho_{|S,\alpha,c}} \geq \lambda_{\min}(H)$, we get that 
\begin{align*}
 \tr{H^{(2t)} \rho} \geq \tr{H_{\reg i} \rho} \geq t \lambda_{\min}(H) \cdot \E_{|S|=t} \sum_{c\in \overline U_{\overline S}} \prs{\rho}{c} = t \lambda_{\min}(H) \cdot \E_{|S|=t} \prs{\rho}{\overline{U}_{\overline{S}}} \,,
\end{align*}

\noindent which implies \cref{eq:lowenergy_lowerbound_trivial}. Thus, for the remainder of the proof we will consider the case that for all $i$, 
\begin{align}
\tr{H_{\reg i} \rho} \leq \E_{|S|=t} \sum_{j \in S} \sum_{c\in \overline U_{\overline S}} \prs{\rho}{c} \ \tr{H_{\reg j} \rho_{|S,\alpha,c}} \,. \label{eqn:trivial_single_copy_bound2}
\end{align}

\noindent Let us now define a random variable $X$ with range $\{0, \cdots, t\}$:

\begin{definition}[definition of $X$]
\label{def:random-variable-X}
\quad
\begin{enumerate}
\item Sample $S \subset [2t]$ of size $|S| = t$.
\item Perform the auxiliary energy measurement on state $\rho$ on registers in $\overline S$, resulting in $c\in \bits^{\overline S}$.
\item If $c \in U_{\overline S}$, output 0. Else, output the random variable $X_S$, defined in \cref{definition:weighted_violation}.
\end{enumerate}
\end{definition}

This random variable simply ``wraps up'' the steps of sampling $S$ and $f$, the auxiliary energy measurement, and the measurement of $N^{\chi}_{f,S}$ into one random variable.
By construction, the desired quantity in the lemma statement can be written in terms of $X$:
\begin{align*}
\sum_\chi w_\chi \E_{|S|=t} \E_{f \in \cF_\chi} \sum_{c\in \overline U_{\overline S}} \prs{\rho}{c} \ \prs{\rho_{|S,\alpha,c}}{N^{\chi}_{f,S} >0} = \pr{X > 0} \,.
\end{align*}

Our goal is to apply the second moment method (\cref{fact:smm}). We note
\begin{align}
\E[X] = \E_{|S|=t} \sum_{c\in \overline U_{\overline S}} \prs{\rho}{c} \E_{\rho_{|S,\alpha,c}}[X_S] =  \E_{|S|=t} \sum_{j \in S} \sum_{c\in \overline U_{\overline S}} \prs{\rho}{c} \ \tr{H_{\reg j} \rho_{|S,\alpha,c}} \,. \label{eqn:numerator_expansion}
\end{align}
\noindent The first equality is by definition of $X$ and the second equality follows from \cref{lemma:expectation_XS}. In turn, one can upper-bound the second moment via \cref{lemma:second_moment_Xs} and \cref{lem:2ndmoment}, under the assumption above:\\
\begin{align*}
\E[X^2] &= \E_{|S|=t} \sum_{c\in \overline U_{\overline S}} \prs{\rho}{c} \E_{\rho_{|S,\alpha,c}}[X_S^2]= \\
&= \E_{|S|=t} \sum_{c\in \overline U_{\overline S}} \prs{\rho}{c} \ \bigg( (1+C_\mu)\sum_{j \in S} \tr{H_{\reg j} \rho_{|S,\alpha,c}} + \omega_{\min}\cdot \sum_{\substack{i,j \in S \\ i \neq j}} \tr{H_{\reg i} \cdot H_{\reg j} \rho_{|S,\alpha,c}} \bigg) \\
&\leq \big(1 + C_\mu \: + \: \omega_{\min} \cdot (1 + 8r + \alpha t + 2t \cdot e^{-8r^2/t})\big) \Big( \E_{|S|=t} \sum_{c\in \overline U_{\overline S}} \prs{\rho}{c} \sum_{j \in S} \tr{H_{\reg j} \rho_{|S,\alpha,c}}\Big) \\
&= \big(1 + C_\mu \: + \: \omega_{\min} \cdot (1 + 8r + \alpha t + 2t \cdot e^{-8r^2/t})\big) \E[X] \,.
\end{align*}
For the second line we used \cref{lemma:second_moment_Xs}, for the third line we used \cref{lem:2ndmoment} (which is applicable since we assumed \cref{eqn:trivial_single_copy_bound2}), and the last line follows by \cref{eqn:numerator_expansion}. We conclude
\begin{align*}
\pr{X > 0} \geq \frac{\E[X]^2}{\E[X^2]} &\geq \frac{\E[X] }{1 + C_\mu \: + \: \omega_{\min} \cdot (1 + 8r + \alpha t + 2t \cdot e^{-8r^2/t})} \\  &\geq \frac{\lambda_{\min}(H)\cdot  t \E_{|S| = t} \prs{\rho}{\overline{U}_{\overline{S}}}}{1 + C_\mu \: + \: \omega_{\min} \cdot (1 + 8r + \alpha t + 2t \cdot e^{-8r^2/t})} \,.
\end{align*}
\end{proof}

\subsection{Deferred claims on the variance of the violation number measurements}
\label{section:deferred_claims}

In this section we give the proofs of the lemmas \cref{lemma:second_moment_Xs} and \cref{lem:2ndmoment}.

\subsubsection{Proof of \cref{lemma:second_moment_Xs}}
\label{section:proof_second_moment_xs}

As we explained just below \cref{definition:weighted_violation}, \cref{lemma:second_moment_Xs} argues that if the new (amplified) clause $f:[t]\rightarrow [m]$ is sampled using a random walk, its variance is similar to what it would be if $f$ were pairwise independent. It might be tempting to try to prove this claim using the fact that random walks on expanders are approximately pairwise independent (for $t = \Omega(\log m)$). Unfortunately, however, the walk length $t$ is for us only a constant, and so quite far from the regime of approximate pairwise independence.

Of course, a na\"ive version of Dinur's analysis in \cite[Section 6.2]{dinur2007pcp} would also encounter this obstacle. Instead, she uses only a `set-avoiding' property of random walks on expander graphs: For any fixed sets $A, B \subseteq [m]$ of no more than $\delta \cdot m$ in size, the probability that a random walk that starts in $A$ ends up in $B$ after $s$ steps is bounded above by $\delta + \mu^{s}$, where $\mu$ is the second largest eigenvalue of the expander graph. We refer the reader to our overview in \cref{section:techniques} (classical sequential repetition, and Technique 1: Commuting Layers), for an outline of how this property is useful in establishing the classical analog of our argument.

If all the terms in our original Hamiltonian $H$ were to commute, then to some extent one can reduce our analysis to the classical case: we simply imagine measuring $\sigma$ in a complete basis that diagonalises all the terms in $H$ simultaneously. This collapses the state to a mixture over product states, which we can deal with using the classical argument and convexity. Of course, it is not true that all the terms in $H$ commute, but we are able to proceed with the analysis by partitioning the terms in $H$ into at most constantly many commuting layers, amplifying and analysing each layer separately, and recombining the layers at the end up to some loss. We now present the proof of \cref{lemma:second_moment_Xs}.

\begin{lemma}[restatement of \cref{lemma:second_moment_Xs}] There exists a constant $C_\mu\leq 2/(1-\mu)$ dependent just on the collection of expander graphs, such that for every state $\sigma$ on $2t$ $n$-qubit registers, 
    \begin{align}
        \E[X_S^2]\leq (1+C_\mu)\sum_{j\in S}\Tr[H_{\reg j}\sigma] + \omega_{\min}\cdot  \sum_{i\neq j\in S}\Tr[H_{\reg i}\otimes H_{\reg j}\sigma]
    \end{align}
\end{lemma}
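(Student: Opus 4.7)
The plan is to expand $X_S^2$ into diagonal and off-diagonal contributions in the indices of $S$, dispatch the diagonal piece using uniformity of walk marginals, and reduce the off-diagonal analysis within each color to a classical bilinear form so that \cref{lemma:expander-quadratic} applies even when $\sigma$ is entangled across registers.

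First I expand using the definition of $X_S$: $\E[X_S^2] = \sum_\chi w_\chi \E_f \Tr[(N^\chi_{f,S})^2\sigma]$. Since the projectors $(\Pi^\chi_{f(j)})_{\reg j}$ act on distinct registers for $j \in S$, they pairwise commute, and using $\Pi^2=\Pi$ gives
\begin{equation*}
\Tr[(N^\chi_{f,S})^2\sigma] = \sum_{j \in S}\Tr[(\Pi^\chi_{f(j)})_{\reg j}\sigma] + \sum_{i\neq j\in S}\Tr[(\Pi^\chi_{f(i)})_{\reg i}(\Pi^\chi_{f(j)})_{\reg j}\sigma].
\end{equation*}
The diagonal piece averages under $f$ (uniform marginals on the regular expander) to $\sum_j \Tr[(H_\chi)_{\reg j}\sigma]$, and summing over $\chi$ with weights $w_\chi$ produces the ``$1$'' in $(1+C_\mu)$.

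The off-diagonal analysis is the heart of the proof. Fix a color $\chi$; since by hypothesis $\{\Pi^\chi_a\}_{a\in[m_\chi]}$ is a commuting family, diagonalise it simultaneously in an eigenbasis $\{|v_k\rangle\}$ with $\Pi^\chi_a|v_k\rangle = p_a(k)|v_k\rangle$, $p_a(k)\in\{0,1\}$. Because every operator appearing inside the trace is diagonal in the induced tensor-product basis on registers $i,j$, only the diagonal entries $q^{(i,j)}_{k,l}$ of $\sigma$'s two-register marginal contribute; this yields $\Tr[(\Pi^\chi_a)_{\reg i}(\Pi^\chi_b)_{\reg j}\sigma] = \sum_{k,l}q^{(i,j)}_{k,l}\,p_a(k)\,p_b(l)$ for a bona fide probability distribution $q^{(i,j)}$. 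Averaging over $f$, using $\Pr[f(i)=a,f(j)=b] = \tfrac{1}{m_\chi}P^{|i-j|}_{a,b}$, gives $\E_f\Tr[\cdots] = \sum_{k,l}q^{(i,j)}_{k,l}\,\tfrac{1}{m_\chi}(u^{(k)})^\top P^{|i-j|}u^{(l)}$ with $u^{(k)}_a=p_a(k)$ and $\tfrac{1}{m_\chi}\|u^{(k)}\|_1 = \langle v_k|H_\chi|v_k\rangle$. Applying \cref{lemma:expander-quadratic} to each bilinear form and summing back against $q^{(i,j)}$ converts the bound into operator language:
\begin{equation*}
\E_f \Tr[(\Pi^\chi_{f(i)})_{\reg i}(\Pi^\chi_{f(j)})_{\reg j}\sigma] \leq \Tr[(H_\chi)_{\reg i}(H_\chi)_{\reg j}\sigma] + \mu^{|i-j|}\bigl(\Tr[(H_\chi)_{\reg i}\sigma]+\Tr[(H_\chi)_{\reg j}\sigma]\bigr).
\end{equation*}
Summing the error terms over $i\neq j\in S$ telescopes via a geometric series $\sum_{s\geq 1}\mu^s = \mu/(1-\mu)$ per register, producing $C_\mu = O(1/(1-\mu))$ times $\sum_j \Tr[(H_\chi)_{\reg j}\sigma]$.

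It remains to assemble the ``pairwise-independent'' main term across colors: I want $\sum_\chi w_\chi\Tr[(H_\chi)_{\reg i}(H_\chi)_{\reg j}\sigma] \leq \omega_{\min}\Tr[H_{\reg i}H_{\reg j}\sigma]$. Using $w_\chi \leq \omega_{\min} w_\chi^2$ (from $w_\chi \geq 1/\omega_{\min}$) the left side is bounded by $\omega_{\min}\sum_\chi w_\chi^2\Tr[(H_\chi)_{\reg i}(H_\chi)_{\reg j}\sigma]$. Since each $H_\chi \geq 0$, every cross-color term $(H_\chi)_{\reg i}(H_{\chi'})_{\reg j}$ is positive semidefinite, so adding the non-negative cross terms completes the sum into $\omega_{\min}\sum_{\chi,\chi'}w_\chi w_{\chi'}\Tr[(H_\chi)_{\reg i}(H_{\chi'})_{\reg j}\sigma] = \omega_{\min}\Tr[H_{\reg i}H_{\reg j}\sigma]$. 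The main obstacle is the off-diagonal reduction in the previous paragraph: the argument works only because every operator inside the trace is simultaneously diagonalisable (a consequence of the layered structure), so coherences of $\sigma$ between different $(k,l)$ sectors never enter and the problem reduces cleanly to a classical random-walk expectation to which \cref{lemma:expander-quadratic} applies verbatim; otherwise entanglement across the $2t$ registers would be free to inflate the second moment arbitrarily.
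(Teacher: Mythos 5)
Your proof is correct and takes essentially the same approach as the paper: expand $N^\chi_{f,S}$ into diagonal and off-diagonal pieces, use the common eigenbasis of the commuting projectors within a layer to reduce the off-diagonal expectation to a classical bilinear form in the walk's transition matrix, apply \cref{lemma:expander-quadratic}, and then recombine across colors using $w_\chi \geq 1/\omega_{\min}$ and positivity. Your formulation of the eigenbasis reduction via a two-register diagonal marginal $q^{(i,j)}_{k,l}$ (rather than the paper's explicit convex decomposition of $\sigma$ into pure states) is a cosmetic compression of the same idea, and your observation that $p_a(k)\in\{0,1\}$ in the joint eigenbasis makes explicit something the paper only implicitly relies on; your color-recombination step ($w_\chi \leq \omega_{\min} w_\chi^2$, then complete the square by adding nonnegative cross terms) is a slight variant of the paper's ($w_\chi (H_\chi)_{\reg i} \leq H_{\reg i}$ followed by $1 \leq \omega_{\min} w_\chi$), but both rest on the same two facts ($H_\chi \geq 0$ and $w_\chi \geq 1/\omega_{\min}$) and arrive at the same bound.

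One small bookkeeping remark (which applies to the paper's write-up as well): when you sum the expander-error term $\mu^{|i-j|}\bigl(\Tr[(H_\chi)_{\reg i}\sigma]+\Tr[(H_\chi)_{\reg j}\sigma]\bigr)$ over ordered pairs $i\neq j\in S$, symmetrising and then bounding by a two-sided geometric series produces a prefactor closer to $\tfrac{4\mu}{1-\mu}$ than the one-sided $\tfrac{2\mu}{1-\mu}$. This does not affect anything downstream since only the order $C_\mu = O(1/(1-\mu))$ is used, but it is worth tracking the factor of two explicitly if you want to match the stated bound $C_\mu \leq 2/(1-\mu)$ precisely.
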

\begin{proof}
We wish to find an expression for the second moment
\begin{equation}
     \E[X_S^2]=\sum_\chi w_\chi\cdot  \mathbb{E}_{f\in \mathcal{F}_\chi} \Tr[(N_{f, S}^\chi)^2\sigma] 
\end{equation}

\noindent Let us focus on the square $(N_{f, S}^\chi)^2$ of a specific color $\chi$:

\begin{equation}
    \Tr[(N_{f, S}^\chi)^2\sigma] = \sum_{j\in S}\Tr[(\Pi_{f(j)}^\chi)_{\reg j}\sigma] +\sum_{i\neq j\in S} \Tr[(\Pi_{f(i)}^\chi)_{\reg i}\otimes (\Pi_{f(j)}^\chi)_{\reg j}\sigma]
\end{equation}
In expectation over $f$, the first of two terms simply reproduces $\mathbb{E}_f \Tr[N_{f, S}^\chi\sigma]$. Let us now focus on the second:
\begin{equation}
    \E_f\Tr[(\Pi_{f(i)}^\chi)_{\reg i} (\Pi_{f(j)}^\chi)_{\reg j}\sigma] = \sum_{u, v\in [m_\chi]} \mathbb{P}_f[f(j) = u \text{ and }f(i) = v] \cdot \Tr[(\Pi_{v}^\chi)_{\reg i} (\Pi_{u}^\chi)_{\reg j}\sigma]
\end{equation}

Note however that the projections $\{\Pi^\chi_u\}$ all mutually commute, and commute with $H_\chi$. Thus, we can expand $\sigma$ in an eigenbasis $\{\ket{E_\alpha}\}_\alpha$ of $H_\chi$. In particular, $\sigma$ is a convex combination of states:
\begin{equation}
    \sigma = \sum_\phi p_\phi \phi, \quad \ket{\phi} = \sum_{\Vec{\alpha} = \alpha_1, \cdots \alpha_{2t}}c_{\Vec{\alpha}} \otimes_{j\in [2t]}\ket{E_{\alpha_j}}, \text{ where } \sum p_{\phi}=1 \text{ and }\sum_{\vec{\alpha}} |c_{\vec{\alpha}}|^2=1.
\end{equation}

\noindent Then, 

\begin{equation}
    \Tr[(\Pi_{v}^\chi)_{\reg i} (\Pi_{u}^\chi)_{\reg j}\sigma] = \sum_\phi p_{\phi} \sum_{\vec{\alpha}} |c_{\vec{\alpha}}|^2 \bra{E_{\alpha_i}}\Pi_v^\chi\ket{E_{\alpha_i}}\cdot \bra{E_{\alpha_j}}\Pi_u^\chi\ket{E_{\alpha_j}}
\end{equation}

Crucially one can now re-express the expectation as a convex combination of quadratic forms, where for convenience we define the vectors 
\begin{gather}
    y^\alpha = (y^\alpha_1, \cdots, y^\alpha_{m_\chi})^T, \quad y^\alpha_u = \bra{E_{\alpha}}\Pi_v^\chi\ket{E_{\alpha}} 
\end{gather}
Recall $P_\chi$ is the transition matrix of a random walk on $G_\chi$. Indeed, note
\begin{align}
 &\sum_{u, v\in [m_\chi]} \mathbb{P}_f[f(j) = u \text{ and }f(i) = v] \cdot \Tr[(\Pi_{v}^\chi)_{\reg j} (\Pi_{u}^\chi)_{\reg j}\sigma] = \\
 &\sum_\phi p_{\phi} \sum_{\vec{\alpha}} |c_{\vec{\alpha}}|^2 \sum_{u, v} \mathbb{P}_f[f(j) = u \text{ and }f(i) = v] \cdot y^{\alpha_i}_v\cdot y^{\alpha_j}_u = \frac{1}{m_\chi} \sum_\phi p_{\phi} \sum_{\vec{\alpha}} |c_{\vec{\alpha}}|^2 \bigg( y^{\alpha_i} P^{|j-i|}_\chi y^{\alpha_j}\bigg)
\end{align}

\noindent From \cref{lemma:expander-quadratic}, we have 
\begin{align}
    y^{\alpha_i} P^{|j-i|}_\chi y^{\alpha_j} \leq \frac{1}{m_\chi} \|y^{\alpha_i}\|_1\cdot  \|y^{\alpha_j}\|_1 +\mu^{|j-i|} (\|y^{\alpha_i}\|_1+  \|y^{\alpha_j}\|_1)
\end{align}

\noindent We can now regroup these terms into $\sigma$, since each entry of $y^\alpha$ is positive:
\begin{gather}
    \frac{1}{m_\chi}\sum_\phi p_{\phi} \sum_{\vec{\alpha}} |c_{\vec{\alpha}}|^2 \|y^{\alpha_i}\|_1 = \frac{1}{m} \sum_\phi p_{\phi} \sum_{\vec{\alpha}} |c_{\vec{\alpha}}|^2 \sum_u \bra{E_{\alpha_i}} \Pi_v^\chi\ket{E_{\alpha_i}} = \Tr[(H^\chi)_{\reg i} \sigma] \\
    \frac{1}{m^2_\chi}\sum_\phi p_{\phi} \sum_{\vec{\alpha}} |c_{\vec{\alpha}}|^2 \|y^{\alpha_i}\|_1\|y^{\alpha_2}\|_1 = \frac{1}{m^2} \sum_\phi p_{\phi} \sum_{\vec{\alpha}} |c_{\vec{\alpha}}|^2 \sum_{u, v} \bra{E_{\alpha_i}} \Pi_v^\chi\ket{E_{\alpha_i}}\bra{E_{\alpha_j}} \Pi_u^\chi\ket{E_{\alpha_j}} = \\
    = \Tr[(H_\chi)_{\reg i}\otimes (H_\chi)_{\reg j} \sigma]
\end{gather}

\noindent We can now conclude by re-grouping the colors, using $\omega_{\text{min}} = (\min_{\chi\in [g]} w_\chi)^{-1}$:
\begin{gather}
    \sum_{\chi\in [g]} w_\chi\cdot \Tr[(H_\chi)_{\reg i}\otimes (H_\chi)_{\reg j} \sigma] \leq \sum_{\chi\in [g]}  \Tr[H_{\reg i}\otimes (H_\chi)_{\reg j} \sigma] \\
    \leq \omega_{\min} \sum_{\chi\in [g]} w_\chi\cdot  \Tr[H_{\reg i}\otimes (H_\chi)_{\reg j} \sigma] \leq \omega_{\min} \cdot  \Tr[H_{\reg i}\otimes H_{\reg j} \sigma].
\end{gather}
With $C_\mu =\max_i \sum_{j\neq i} \mu^{|j-i|} \leq \frac{2}{1-\mu}$ we conclude the proof. 
\end{proof}

\subsubsection{Proof of \cref{lem:2ndmoment}}

As explained at the start of \cref{sec:low-energy-analysis}, in this section (`the low-energy case') we complete the analysis of the moments of the color-weighted violation number variable $X$ (cf. \cref{definition:weighted_violation} and \cref{def:random-variable-X}). $X$ captures (for a random color $\chi$) the number of violated clauses in a random set (or path) of clauses determined by a function $f:[t]\rightarrow [m_\chi].$ In \cref{section:proof_second_moment_xs}, we proved \cref{lemma:second_moment_Xs}, which reduced the analysis in the case where $f$ is defined using expander walks to the case where $f$ is sampled from a pairwise independent function family. It remains to prove \cref{lem:2ndmoment}, which deals with the pairwise independent case. 

We view \cref{lem:2ndmoment} as a \emph{decoupling} statement. It says, very loosely speaking, that pairwise independent sampling is `sufficiently decoupling' in that we can get good (i.e. similar to classical) bounds on the second moment of the `violation number' measurement we are interested in, even when the measurement is done on a potentially entangled state.

\paragraph{The analogous classical analysis.}  For pairwise independent $f$ and any state $\tau$,
\begin{align}
    \E[X^2] = \sum_{j\in S}\Tr[H_{\reg j}\tau] +   \sum_{i\neq j\in S}\Tr[H_{\reg i}\otimes H_{\reg j}\tau] \label{eq:second-moment-ideal}
\end{align}

\noindent If $\tau$ were a product state, then after writing $\E[X^2]$ in the form of \cref{eq:second-moment-ideal} using pairwise independence, we trivially get
\begin{align}
    \sum_{i\neq j\in S}\Tr[H_{\reg i}\otimes H_{\reg j}\tau] = \sum_{i\neq j\in S}\Tr[H_{\reg i}\tau] \cdot \Tr[H_{\reg j}\tau] \leq \big(\sum_{i\in S}\Tr[H_{\reg i}\tau]\big)^2 \label{eq:ideal-decoupling}
\end{align}
because the tensor product decouples if $\tau$ is product. Hence, we succeed in bounding the variance of $X$:
\begin{equation}
    \E[X] = \sum_{j\in S}\Tr[H_{\reg j}\tau] \Rightarrow \E[X^2]\leq \E[X] + \E[X]^2
\end{equation}
The reader can verify that plugging this (and the elementary $\E[X] \geq t \cdot \lambda_{\min}(H)$) into \cref{fact:smm} then yields
\begin{align}
    \Pr[X > 0] \geq \frac{\E[X]^2}{\E[X]^2 + \E[X]} \geq \min\{\frac{t}{2} \cdot \lambda_{\min}(H) \:,\: \frac{1}{2}\}.
\end{align}

\paragraph{de Finetti motivation.} If our goal is to imitate this classical argument quantumly, one natural---albeit perhaps over-optimistic---starting point would be to attempt to reduce an entangled $\tau$ to a convex combination of product states. Quantum information theory gives us a class of tools for doing this, in the form of the \emph{quantum de Finetti theorems},  which roughly speaking capture the fact that a random $k$ qudit marginal of any $t\gg k$ qudit state should be close to separable. Unfortunately, we cannot hope to rely on a quantum de Finetti theorem as a black box. In particular, the best possible de Finetti theorems require an unviably large $t = \Omega(\log d)$, where $d$ is the dimension of a single qudit  (which could be exponential in the size $n$ of the original Hamiltonian).
However, perhaps surprisingly, we are able to progress with a technique that is inspired by a particular style of proof for a de Finetti theorem \cite{brandao2013quantum,vidick2016simple}.

The insight which allows us to prove \cref{lem:2ndmoment} is that we don't actually care if the state $\tau$ is product or not: separability is a stringent constraint, and instead it is sufficient for us that $\tau$ `looks product' with respect to measurements of $H$, the original Hamiltonian. We could, for example, measure $\tau$ in an eigenbasis that diagonalises $H^{\otimes t}$, and then use a \emph{classical} de Finetti theorem on the measurement outcomes, since \cref{lem:2ndmoment} cares only about the trace of $H$ against $\tau$ (in various registers). It turns out that this approach also requires an unviably large $t$ (since the guarantees of classical de Finetti theorems also scale unfavourably with the alphabet size of the random variables being permuted)---but we can do even better by realising that we don't need to know the precise eigenvalues of the eigenstates we get in each register after the eigenbasis measurement. Indeed, a very coarse approximation (`is the eigenvalue big or small?') will suffice. This coarse-graining, which allows us to reduce the alphabet size of the measurement outcomes to which we apply de Finetti--inspired techniques, is the chief reason that we can reduce $t$ to something manageable.

This is the idea behind the auxiliary measurement we introduced in \cref{section:aux_energy_meas}. The binary projective measurements $\{\Pi^{< \alpha},\mathbb{I}-\Pi^{< \alpha}\}$ (\cref{definition:aux_meas}) act on the same Hilbert space as $H$, and $\Pi^{< \alpha}$ simply projects onto the energy eigenspaces of $H$ with lower energy than $\alpha$: therefore, $\Pi^{<\alpha}$ commutes with $H$. Measuring $\Pi^{< \alpha}$ on a random set of $t$ registers allows us to estimate the ``energy" $E$ of $\tau$ (the estimate we use is the number of $\geq \alpha$ outcomes). We then  measure the complementary $t$ registers, and use Chernoff-bound-like tools to argue that they are likely to land in eigenspaces with similar energy $E$. After that, our proof proceeds via a careful case analysis, which hinges on whether $E$ is high (\cref{eqn:dVS_final}) or low (\cref{eqn:dnotVS_final}). In effect, our strategy is to partition the Hilbert space in which $\tau$ lives into (constantly many) subspaces, each of which has predictable behaviour with respect to $H$; and we are able to proceed with the analysis after we project (‘pinch’) $\tau$ into one of these subspaces because the ‘pinching’ measurement commutes with the measurements of $H$.

\paragraph{A guide to the proof of \cref{lem:2ndmoment}.} We now give an overview of the proof of \cref{lem:2ndmoment}, and explain the role of the technical condition \cref{eqn:trivial_single_copy_bound}. Our goal is to upper bound the second moment of $X$, i.e. the second moment of the measurement operator $N_{f, S}^{\chi}$ (on average over $S,\chi$ and $f$---see \cref{def:colored-subset-violation-number-meas}) on the `primary registers' $S$, \textit{conditioned} on the `low-energy' outcome for the auxiliary measurement on the auxiliary registers $\bar{S}$. Since $f$ is assumed to be pairwise independent, we are more explicitly trying to upper bound the following trace,
\begin{align}
    \sum_{i\neq j \in S} \Tr[H_{\reg i} \otimes H_{\reg j} \rho_{\mathsf{meas}(S,\bar S=\mathsf{low})}], \label{eq:double-H-post-meas}
\end{align}
where the state $\rho_{\mathsf{meas}(S,\bar S=\mathsf{low})}$ denotes the post-measurement subnormalized\footnote{For notational convenience in this exposition, we work with subnormalised states; in the proof we spell out the probabilities explicitly.} state after the auxiliary measurement has been performed \emph{both in $\bar S$ and in $S$}, conditioned on a low-energy outcome in $\bar S$. 

\begin{remark}
It suffices to upper bound the trace of $H_{\reg i} \otimes H_{\reg j}$ on $\rho_{\mathsf{meas}(S,\bar S=\mathsf{low})}$ (rather than $\rho_{\mathsf{meas}(\bar S=\mathsf{low})}$, which is the state $N_{f, S}^{\chi}$ is supposed to be measured on according to the definition of $X$) because, by construction, the auxiliary measurement commutes with $H$ on every register.
\end{remark}
We analyse \cref{eq:double-H-post-meas} by splitting into two cases depending on whether the measurement result on the registers in $S$ is `high-energy' or `low-energy'. More specifically, we separately analyse
\begin{align}
\sum_{i\neq j \in S} \Tr[H_{\reg i} \otimes H_{\reg j} \rho_{\mathsf{meas}(S=\mathsf{high},\bar S=\mathsf{low})}] \quad \text{ and } \quad \sum_{i\neq j \in S} \Tr[H_{\reg i} \otimes H_{\reg j} \rho_{\mathsf{meas}(S=\mathsf{low},\bar S=\mathsf{low})}]
\label{eq:double-H-post-meas-high} 
\end{align}
and then we put the two bounds together using the law of total expectation. We now elaborate on the nature of the case division.\\

\noindent \paragraph{The High Energy Case: $S = \mathsf{high}, \bar S = \mathsf{low}$.}
This case happens when the measurement in $S$ has a high-energy outcome, even though we condition on a low-energy outcome from the measurement in $\bar S$. Intuition would suggest that this case is unlikely; however, a delicate problem with the `error term' in the Chernoff-style bound from \cref{lemma:de-finetti} arises, which makes the technical condition \cref{eqn:trivial_single_copy_bound} necessary. 

Indeed, a na\"ive analysis leveraging \cref{lemma:de-finetti} would easily be able to upper bound all the contributions to $\E[X^2]$ arising from the case $S = \mathsf{high}$ as follows:
\begin{gather}
    \Tr[\rho_{\mathsf{meas}(S=\mathsf{high},\bar S=\mathsf{low})}] \leq e^{-\Omega(r^2/t)} \times \Tr[\rho_{\mathsf{meas}(\bar S=\mathsf{low})}],
\end{gather}
\noindent which entails
\begin{gather}\label{eq:68_overview_high}
    \sum_{i\neq j \in S} \Tr[H_{\reg i} \otimes H_{\reg j} \rho_{\mathsf{meas}(S=\mathsf{high},\bar S=\mathsf{low})}] \leq t^2 \times e^{-\Omega(r^2/t)} \times \Tr[\rho_{\mathsf{meas}(\bar S=\mathsf{low})}]. 
\end{gather}

\noindent Unfortunately, this analysis is too loose. Recall that we are trying to upper bound $\E[X^2]$ (and, by extension, the left-hand-side of \cref{eq:68_overview_high}) by some quantity related to $\E[X]$. We expect $\E[X]$ to be similar to $t \cdot \lambda_{\mathrm{min}}(H)$ (see \cref{lemma:expectation_XS}), and so $\E[X]$ may be inverse-polynomial. However, if we are in the low-energy case for the auxiliary $\bar S$ measurement, the right-hand-side of \cref{eq:68_overview_high} is a constant (for constant $t$). As such, the \textit{relative error} in our upper bound of $\E[X^2]$ in terms of $\E[X]$ is exceedingly large. 

The actual bound we aim for is roughly
\begin{align}\label{eq:68_overview_relative}
   (\text{\cref{eq:68_overview_high}, LHS}) \leq t \times e^{-\Omega(r^2/t)} \times \sum_{i\in S} \Tr[H_{\reg i}\rho_{\mathsf{meas}(\bar S=\mathsf{low})}].
\end{align}

\noindent Note that $\Tr[H_{\reg i}\rho_{\mathsf{meas}(\bar S=\mathsf{low})}]$ is exactly $\E[X]$ (conditioned on $S = \mathsf{low}$), and so this bound is good regardless of how large or small $\E[X]$ is relative to $t$. We explain briefly how we acheve this bound, and why it makes the technical condition \cref{eqn:trivial_single_copy_bound} necessary.

The `na\"ive analysis' which we mentioned earlier achieves \cref{eq:68_overview_high} purely by using the Chernoff-like bound from \cref{lemma:de-finetti} to upper bound the normalisation of $\rho_{\mathsf{meas}(S=\mathsf{high},\bar S=\mathsf{low})}$ in terms of that of $\rho_{\mathsf{meas}(\bar S=\mathsf{low})}$. However, \cref{eq:double-H-post-meas-high} also involves the original Hamiltonian $H$, and we would like to take advantage of this fact to get a bound that looks more like \cref{eq:68_overview_relative}. The main obstacle to this intuition is that the state we are tracing $H$ against in \cref{eq:double-H-post-meas-high} is not $\rho_{\mathsf{meas}(\bar S=\mathsf{low})}$ (which would be the most convenient if we want a bound in terms of $\E[X]$, since that is the state which appears in the definition of $X$), nor even $\rho_{\mathsf{meas}(S,\bar S=\mathsf{low})}$, but $\rho_{\mathsf{meas}(S=\mathsf{high},\bar S=\mathsf{low})}$, which involves an additional layer of conditioning.

Two observations allow us to make progress:
\begin{enumerate}
\item It is difficult to control the trace of $H$ against an arbitrarily conditioned state, but \cref{lemma:de-finetti} \emph{does not care} which state it is applied on, and
\item We can switch the order of the $H$ measurement and the auxiliary measurement because they commute.
\end{enumerate} Therefore, our approach is to imagine performing $H$ first and the auxiliary measurement second. With this approach, $H$ is performed on $\rho$ itself (later chosen as a ground state of the amplified Hamiltonian $H^{(2t)}$), and \cref{lemma:de-finetti} is then applied to the post-measurement state after the $H$ measurement. We then observe that, if $\Tr[H_{\reg i} \rho]$ for any $i$ is larger than $\E[X] \approx t \cdot \lambda_{\min}(H)$, we are already done lower bounding the amplified ground state energy. We therefore assume that $\Tr[H_{\reg i} \rho]$ is upper bounded by $\E[X]$ for all $i$ (which is the technical condition in \cref{eqn:trivial_single_copy_bound}). Together with \cref{lemma:de-finetti}, we show that this gives the desired bound in \cref{eq:68_overview_relative}.\\

\noindent \paragraph{The Low Energy Case: $S = \mathsf{low}, \bar S = \mathsf{low}$.} This case is comparatively straightforward, since this case is where we finally reap the fruits of our `miser's de Finetti' technique, and all the work involved in setting up its use has already been done. We have reached the branch of our double dichotomy (over the $\bar S$ and $S$ measurements) where the energy of $H$ in $S$ is likely to be `low' (upper bounded by $\alpha$); therefore, the energy of $H \otimes H$ is likely to be upper bounded by that of $\alpha \cdot \mathbb{I} \otimes H$, which is precisely $\alpha \cdot \E[X]$ (when $H$ is measured on the state in the definition of $X$). Roughly speaking, this case is where the `decoupling' effect of our de-Finetti-inspired technique can be seen: by making a scalar out of one of the copies of $H$ in $H \otimes H$ (which was possible because, during our `pinching' measurement, we projected that register into the low-energy subspace of $H$), it allows $H \otimes H$ to be bounded in terms of a quantity that only involves a single $H$, which is easy to relate to the mean of $X$.

We now present the formal statement and proof of \cref{lem:2ndmoment}.

\begin{lemma}[restatement of \cref{lem:2ndmoment}]
Suppose that $\rho$ is a state such that for all $i$, 
\begin{align}
\tr{H_{\reg i} \rho} \leq \E_{|S|=t} \sum_{j \in S} \sum_{c\in \overline U_{\overline S}} \prs{\rho}{c} \ \tr{H_{\reg j} \rho_{|S,\alpha,c}} \,. \label{eqn:trivial_single_copy_bound-2}
\end{align}
Then for any choice of parameters $\alpha > 0$ and $r \in [t]$ it holds that
\begin{align*}
\E_{|S|=t} \sum_{c\in \overline U_{\overline S}} \prs{\rho}{c} \sum_{\substack{i,j \in S \\ i \neq j}} \tr{H_{\reg i} \cdot H_{\reg j} \rho_{|S, \alpha, c} }
\leq (8r + \alpha t + 2t \cdot e^{-\frac{8r^2}{t}}) \cdot \E_{|S|=t} \sum_{j \in S} \sum_{c\in \overline U_{\overline S}} \prs{\rho}{c} \tr{H_{\reg j} \rho_{|S, \alpha, c} } \,.
\end{align*}
Here, $U_{\overline S}$ is defined as in \cref{eqn:us_def} and depends implicitly on $\alpha$ and $r$.
\end{lemma}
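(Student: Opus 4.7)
The plan is to dichotomize each summand $\tr{H_{\reg i} H_{\reg j} \rho_{|S,\alpha,c}}$ by inserting the resolution
\[H_{\reg i} = \Pi^{<\alpha}_{\reg i} H_{\reg i}\Pi^{<\alpha}_{\reg i} + \Pi^{\geq \alpha}_{\reg i} H_{\reg i}\Pi^{\geq \alpha}_{\reg i},\]
which has no cross terms because $H_{\reg i}$ commutes with its own spectral projectors. The low-energy piece is bounded immediately by $\Pi^{<\alpha}_{\reg i} H_{\reg i}\Pi^{<\alpha}_{\reg i} \leq \alpha \id$, giving $\alpha \cdot \tr{H_{\reg j} \rho_{|S,\alpha,c}}$ per summand; summing over $i\in S\setminus\{j\}$ and $j\in S$ and then taking $\E_{|S|=t}\sum_{c\in \overline U_{\bar S}}\prs{\rho}{c}$ accounts for the $\alpha t$ contribution in the bound.

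For the high-energy piece, the crude bound $\Pi^{\geq \alpha}_{\reg i} H_{\reg i}\Pi^{\geq \alpha}_{\reg i} \leq \Pi^{\geq \alpha}_{\reg i}$ (from $H\leq \id$) regroups the sum as $\sum_{j\in S}\tr{\mathcal{H}_{S\setminus j} H_{\reg j}\rho_{|S,\alpha,c}}$, where $\mathcal{H}_{S\setminus j} \coloneqq \sum_{i\in S\setminus\{j\}}\Pi^{\geq \alpha}_{\reg i}$ is the observable counting high-energy registers in $S\setminus\{j\}$. I then split this count by a secondary threshold at $8r$ on the total count $\mathcal{H}_S$: letting $P_{\geq 8r}$ denote the spectral projector of $\mathcal{H}_S$ onto eigenvalues $\geq 8r$, the operator inequality $\mathcal{H}_{S\setminus j} \leq 8r\cdot \id + t \cdot P_{\geq 8r}$ separates a ``few high-energies in $S$'' subcase that directly produces the $8r$ term, from a ``many high-energies in $S$'' subcase that must be exponentially suppressed.

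The de-Finetti-inspired heart of the proof is the suppression of the ``many high-energies'' subcase. The operators $H_{\reg j}$, $P_{\geq 8r}$ (on $S$), and $P_{\overline U_{\bar S}}$ (on $\bar S$) all mutually commute, so I may pass to the dephased representation $\sum_{c\in \overline U_{\bar S}}\prs{\rho}{c}\rho_{|S,\alpha,c} \mapsto P_{\overline U_{\bar S}}\rho P_{\overline U_{\bar S}}$ and rewrite each residual trace via the key factorisation
\[\tr{H_{\reg j} P_{\text{bad}_S}\rho} \;=\; \tr{H_{\reg j}\rho}\cdot \prs{\rho_j}{P_{\text{bad}_S}}, \qquad \rho_j \coloneqq \frac{H_{\reg j}^{1/2}\rho H_{\reg j}^{1/2}}{\tr{H_{\reg j}\rho}},\]
where $P_{\text{bad}_S} = P_{\geq 8r}\cdot P_{\overline U_{\bar S}}$ projects onto ``at least $8r$ high-energies in $S$ \emph{and} fewer than $4r$ high-energies in $\bar S$''. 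Since $\rho_j$ is a genuine density matrix, the appropriately rescaled, $S$-$\bar S$-swapped form of \cref{lemma:de-finetti} (which reduces to a direct application of \cref{fact:tomamichel} with $\delta = 1-8r/t$ and $\nu = 4r/t$) bounds $\E_{|S|=t}\prs{\rho_j}{P_{\text{bad}_S}}\leq e^{-8r^2/t}$ uniformly in $\rho_j$; summing over $j$ and invoking the technical hypothesis \cref{eqn:trivial_single_copy_bound-2} to identify each $\tr{H_{\reg j}\rho}$ with (a constant multiple of) the right-hand side first-moment quantity $\E[X]$ then yields the $2t\cdot e^{-8r^2/t}$ summand.

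The hard part is this last step: ensuring the Chernoff factor $e^{-8r^2/t}$ multiplies a quantity of order $\E[X]$ rather than of order $t\cdot \E[X]$ or the crude trace norm. A na\"ive estimate that controls only the norm of the bad-event subnormalised state (and uses $H_{\reg i}H_{\reg j}\leq \id$) carries an extra factor of $t$, which would spoil the parameter balancing used to prove \cref{thm:tp_amplification_intro} via \cref{lem:lowerbound}. The decoupling identity above, enabled by the commutation of $H_{\reg j}$ with all the auxiliary projectors and morally corresponding to ``measuring $H_{\reg j}$ on $\rho$ \emph{before} the random sampling of $S$'', is the miser's de Finetti move: it lets us pay the Chernoff price on a single-copy $H$-expectation against a fixed state $\rho_j$, which via \cref{eqn:trivial_single_copy_bound-2} is then upgraded to the first-moment quantity $\E[X]$ with the correct overhead. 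The fact that failure of \cref{eqn:trivial_single_copy_bound-2} is already absorbed by the complementary branch \cref{eq:lowenergy_lowerbound_trivial} of \cref{lem:high_energy_unlikely} means that assuming it loses no generality, and allows the argument to close up with the stated constants.
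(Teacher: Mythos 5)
Your proof is correct and follows essentially the same route as the paper's: both proofs split on whether the auxiliary outcome on $S$ registers at least $8r$ high-energy outcomes, bound the low-count branch by $(8r+\alpha t)$ using $\Pi^{<\alpha}_i H_i \Pi^{<\alpha}_i \leq \alpha\id$ and $\Pi^{\geq\alpha}_i H_i\Pi^{\geq\alpha}_i\leq\Pi^{\geq\alpha}_i$, suppress the high-count branch via the Tomamichel tail bound after pulling the $H_{\reg j}$ measurement out to act on $\rho$ itself, and finally invoke the hypothesis~\eqref{eqn:trivial_single_copy_bound-2} to absorb $\sum_j\tr{H_{\reg j}\rho}$ into the first-moment quantity. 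The one presentational difference worth noting is your decoupling identity: you pass to the genuinely normalised state $\rho_j \coloneqq H_{\reg j}^{1/2}\rho H_{\reg j}^{1/2}/\tr{H_{\reg j}\rho}$ and apply the de Finetti lemma to it directly, whereas the paper decomposes $H=\sum_a\lambda_a\Pi^{(a)}$ and applies the lemma separately to each renormalised eigenspace slice $\rho_{|i,a}$ before reassembling; your version packages the same commutation argument more cleanly and is arguably the more transparent way to articulate the ``measure $H_{\reg j}$ first'' move. Two small bookkeeping remarks: first, when you write ``summing over $j$'', note that you must also drop the constraint $j\in S$ (extend to $j\in[2t]$) before you can pull $\tr{H_{\reg j}\rho}$ outside $\E_{|S|=t}$ and apply the de Finetti bound uniformly in $j$ --- this is the step the paper does explicitly. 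Second, your argument (like the paper's actual derivation between \eqref{eqn:intexp} and \eqref{eqn:intexp1}, where a factor of $t$ appears to be silently dropped) in fact yields $2t^2 e^{-8r^2/t}$ rather than the stated $2t\,e^{-8r^2/t}$ in that summand; this discrepancy is inconsequential downstream, since with $r=(t\log t)^{1/2}$ one has $e^{-8r^2/t}=t^{-8}$, so even $2t^2 e^{-8r^2/t}$ is $o(1)$ and dominated by the $8r$ term.
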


\begin{proof}

For any $i, j, k \in [2t]$, the commutator $[H_{\reg i} \cdot H_{\reg j}, \Pi^{\geq \alpha}_{\reg k}] = 0$.
Note that this still holds if $i=k$ or $j=k$ because $\Pi^{\geq \alpha}_{\reg k}$ projects onto a direct sum of eigenspaces of $H_{\reg k}$.
Operationally, this means that we can first perform the measurement $\{\1 - \Pi^{\geq \alpha}_{\reg k}, \Pi^{\geq \alpha}_{\reg k}\}$ on all registers $k$ without altering the measurement outcome of measuring $H_{\reg i} \cdot H_{\reg j}$.
In other words, if we are only interested in measuring $H_{\reg i} \cdot H_{\reg j}$, we can first perform the auxiliary energy measurement \emph{on all registers}.

More formally, we denote by $\rho_{|S,\alpha,c,d}$ the post-measurement state after performing the additional auxiliary energy measurement on the $S$-registers of $\rho_{|S,\alpha,c}$ and conditioning on receiving outcome $d \in \bits^S$, and denote by $\prs{\rho_{|S,\alpha,c}}{d}$ the probability of this outcome.
Then 
\begin{align}
&\E_{|S|=t} \sum_{c\in \overline U_{\overline S}} \prs{\rho}{c} \sum_{\substack{i,j \in S \\ i \neq j}} \tr{H_{\reg i} \cdot H_{\reg j} \rho_{|S, \alpha, c} } =
\\
=& \E_{|S|=t} \sum_{c\in \overline U_{\overline S}} \sum_{d \in \bits^S} \prs{\rho}{c} \prs{\rho_{|S,\alpha,c}}{d} \sum_{\substack{i,j \in S \\ i \neq j}} \tr{H_{\reg i} \cdot H_{\reg j} \rho_{|S, \alpha, c, d} } \,. \label{eqn:int0}
\end{align}

We define the set $V_S = \{d \in \bits^S \;:\; |d| \geq 8r\}$ and split the sum over $d$ into two sums, one over $d \in V_S$ and one over $d \in \overline V_S$.
We will bound each sum in turn.

\paragraph{Bounding the sum over $d \in V_S$.} Let $\Pi_{S, \alpha}^{(c, d)}$ be the projector corresponding to performing the auxiliary energy measurement on all registers and receiving outcome $c$ on registers in $\overline{S}$ and outcome $d$ on registers in $S$.
Then, 
\begin{align*}
\prs{\rho}{c} \prs{\rho_{|S,\alpha,c}}{d} \rho_{|S, \alpha, c, d} = \Pi_{S, \alpha}^{(c, d)} \rho \Pi_{S, \alpha}^{(c, d)} \,.
\end{align*}
We write the spectral decomposition of $H$ as 
\begin{align*}
H = \sum_{a} \lambda_a \Pi^{(a)}
\end{align*}
for eigenvalues $\lambda_a \geq 0$ and orthogonal projectors $\Pi^{(a)}$.
Since $\Pi^{\geq \alpha}_{\reg k}$ is a sum of the projectors $\Pi^{(a)}_{\reg k}$, $[\Pi^{(a)}_{\reg k}, \Pi_{S, \alpha}^{(c, d)}] = 0$, and trivially $[\Pi^{(a)}_{\reg i}, \Pi^{(b)}_{\reg j}] = 0$ for $i \neq j$.
Therefore, we can bound the $V_S$-part of the sum in \cref{eqn:int0} as 
\begin{align*}
&\E_{|S|=t} \sum_{c\in \overline U_{\overline S}} \sum_{d \in V_S} \prs{\rho}{c} \prs{\rho_{|S,\alpha,c}}{d} \sum_{\substack{i,j \in S \\ i \neq j}} \tr{H_{\reg i} \cdot H_{\reg j} \rho_{|S, \alpha, c, d} } \\
&\leq t \cdot \E_{|S|=t} \sum_{c\in \overline U_{\overline S}} \sum_{d \in V_S} \prs{\rho}{c} \prs{\rho_{|S,\alpha,c}}{d} \sum_{i \in S} \tr{H_{\reg i} \rho_{|S, \alpha, c, d} } \\
&= t \cdot \E_{|S|=t} \sum_{c\in \overline U_{\overline S}} \sum_{d \in V_S} \sum_{i \in S} \sum_{a} \lambda_{a} \tr{\Pi_{\reg i}^{(a)} \Pi_{S, \alpha}^{(c, d)} \rho \Pi_{S, \alpha}^{(c, d)} } \\
&= t \cdot \E_{|S|=t} \sum_{c\in \overline U_{\overline S}} \sum_{d \in V_S} \sum_{i \in S} \sum_{a} \lambda_{a} \mathrm{Tr}\bigg[ \Pi_{S, \alpha}^{(c, d)} \underbrace{\left(\Pi_{\reg i}^{(a)}  \rho \Pi_{\reg i}^{(a)} \right)}_{\coloneqq \rho_{i,a}} \bigg] \\
&= t \cdot \E_{|S|=t} \sum_{c\in \overline U_{\overline S}} \sum_{d \in V_S} \sum_{i \in S} \sum_{a} \lambda_{a} \tr{\rho_{i,a}} \sum_{c\in \overline U_{\overline S}} \sum_{d \in V_S} \tr{ \Pi_{S, \alpha}^{(c, d)} \rho_{|i,a} } \,. \numberthis \label{eqn:intexp}
\end{align*}
In the last line, we defined the renormalised conditioned state 
\begin{align*}
\rho_{|i,a} = \frac{\rho_{i,a}}{\tr{\rho_{i,a}}} \,.
\end{align*}
We now observe that if we extend the sum from $i \in S$ to all $i \in [2t]$, the expression in \cref{eqn:intexp} can only increase since each term is non-negative.
Therefore, we can bound 
\begin{align*}
\text{\cref{eqn:intexp}} 
&\leq \E_{|S|=t} \sum_{i} \sum_{a} \lambda_{a} \tr{\rho_{i,a}} \sum_{c\in \overline U_{\overline S}} \sum_{d \in V_S} \tr{ \Pi_{S, \alpha}^{(c, d)} \rho_{|i,a} } \\
&= \sum_{i} \sum_{a} \lambda_{a} \tr{\rho_{i,a}} \left( \E_{|S|=t}  \sum_{c\in \overline U_{\overline S}} \sum_{d \in V_S} \tr{ \Pi_{S, \alpha}^{(c, d)} \rho_{|i,a} } \right) \numberthis \label{eqn:intexp1}
\end{align*}

Here, we can apply the de Finetti reasoning once again. The expression in parenthesis captures the probability of recieving a low energy string on $\Bar{S}$ and a high-energy string on $S$. From \cref{lemma:de-finetti}, we have for any state $\sigma$,
\begin{align*}
\E_{|S|=t} \sum_{c\in \overline U_{\overline S}} \sum_{d \in V_S} \tr{ \Pi_{S, \alpha}^{(c, d)} \sigma } \leq e^{-\frac{8r^2}{t}}\,. 
\end{align*}

\noindent Inserting this into \cref{eqn:intexp1} and then re-inserting the definitions of $\rho_{i,a}$ and $H_{\reg i}$, we get 
\begin{align*}
\text{\cref{eqn:intexp1}} \leq e^{-\frac{8r^2}{t}} \; \sum_{i} \tr{H_{\reg i} \rho} \,.
\end{align*}
Using the assumption in \cref{eqn:trivial_single_copy_bound} and combining all the steps, we get 
\begin{align}
\E_{|S|=t} \sum_{c\in \overline U_{\overline S}} \sum_{d \in V_S} \prs{\rho}{c} \prs{\rho_{|S,\alpha,c}}{d} \sum_{\substack{i,j \in S \\ i \neq j}} \tr{H_{\reg i} \cdot H_{\reg j} \rho_{|S, \alpha, c, d} } \leq e^{-\frac{8r^2}{t}}\cdot 2t \cdot \E_{|S|=t} \sum_{j \in S} \sum_{c\in \overline U_{\overline S}} \prs{\rho}{c} \ \tr{H_{\reg j} \rho_{|S,\alpha,c}} \,. \label{eqn:dVS_final}
\end{align}

\paragraph{Bounding the sum over $d \in \overline V_S$.}
Next, we need to bound $\tr{H_{\reg i} \cdot H_{\reg j} \rho_{|S, \alpha, c, d} }$ for $c \in \overline U_{\overline S}$ and $d \in \overline V_S$.
For this, we fix $i \neq j$ and distinguish two cases.
If $d_i = 1$ (i.e.~the auxiliary energy measurement on register $i$ yielded outcome ``$\Pi^{\geq \alpha}_{\reg i}$''), then we use the trivial bound $H_{\reg i} \leq \1$ (and the fact that $H_{\reg i}$ and  $H_{\reg j}$ commute since $i \neq j$) to get 
\begin{align*}
\tr{H_{\reg i} \cdot H_{\reg j} \rho_{|S, \alpha, c, d} }
\leq \tr{H_{\reg j} \rho_{|S, \alpha, c, d} } \,.
\end{align*}
On the other hand, if $d_i = 0$ then we use the fact that $\rho_{|S, \alpha, c, d} = \Pi^{<\alpha}_{\reg i} \rho_{|S, \alpha, c, d} \Pi^{<\alpha}_{\reg i}$, $[H_{\reg j}, \Pi^{<\alpha}_{\reg i}]=0$, and $\Pi^{<\alpha}_{\reg i} H_{\reg i} \Pi^{<\alpha}_{\reg i} \leq \alpha \1$ to bound 
\begin{align*}
&\tr{H_{\reg i} \cdot H_{\reg j} \rho_{|S, \alpha, c, d} } 
= \tr{H_{\reg i} \cdot H_{\reg j} (\Pi^{<\alpha}_{\reg i} \rho_{|S, \alpha, c, d} \Pi^{<\alpha}_{\reg i}) } = \\ &=\tr{(\Pi^{<\alpha}_{\reg i} H_{\reg i} \Pi^{<\alpha}_{\reg i}) \cdot H_{\reg j} \rho_{|S, \alpha, c, d} } \leq \alpha \tr{H_{\reg j} \rho_{|S, \alpha, c, d} } \,.
\end{align*}
By definition of $\overline V_S$, there can be at most $8r$ indices $i \in S$ for which $d_i = 1$.
Therefore
\begin{align}
\sum_{\substack{i,j \in S \\ i \neq j}} \tr{H_{\reg i} \cdot H_{\reg j} \rho_{|S, \alpha, c, d} } \leq (8r + \alpha t) \sum_{j \in S} \tr{H_{\reg j} \rho_{|S, \alpha, c, d} } \,. \label{eqn:int2}
\end{align}
Consequently, 
\begin{align*}
&\E_{|S|=t} \sum_{c\in \overline U_{\overline S}} \sum_{d \in V_S} \prs{\rho}{c} \prs{\rho_{|S,\alpha,c}}{d} \sum_{\substack{i,j \in S \\ i \neq j}} \tr{H_{\reg i} \cdot H_{\reg j} \rho_{|S, \alpha, c, d} } \\
&\leq \E_{|S|=t} \sum_{c\in \overline U_{\overline S}} \sum_{d \in V_S} \prs{\rho}{c} \prs{\rho_{|S,\alpha,c}}{d} (8r + \alpha t) \sum_{j \in S} \tr{H_{\reg j} \rho_{|S, \alpha, c, d} } \\
&\leq (8r + \alpha t) \cdot \E_{|S|=t} \sum_{j \in S} \sum_{c\in \overline U_{\overline S}} \prs{\rho}{c} \tr{H_{\reg j} \rho_{|S, \alpha, c} } \,. \numberthis \label{eqn:dnotVS_final}
\end{align*}

\paragraph{Combining both bounds.}
We can now insert the bounds for the sum over $d \in V_S$ (\cref{eqn:dVS_final}) and the sum over $d \in \overline{V}_S$ (\cref{eqn:dnotVS_final}) into \cref{eqn:int0} to get 
\begin{align*}
\E_{|S|=t} \sum_{c\in \overline U_{\overline S}} \prs{\rho}{c} \sum_{\substack{i,j \in S \\ i \neq j}} \tr{H_{\reg i} \cdot H_{\reg j} \rho_{|S, \alpha, c} } \leq (8r + \alpha t + 2t \cdot e^{-\frac{8r^2}{t}}) \cdot \E_{|S|=t} \sum_{j \in S} \sum_{c\in \overline U_{\overline S}} \prs{\rho}{c} \tr{H_{\reg j} \rho_{|S, \alpha, c} } 
\end{align*}
as claimed.

\end{proof}

\subsection{A careful choice of parameters}
\label{section:parameters}

\begin{corollary} \label{cor:lowerbound_simpler}
Suppose that $10^5 \leq t$.
Then 
\begin{align*}
\lambda_{\min}(H^{(2t)}) \geq \min\Big\{ \frac{1}{3}\frac{\log t}{t} \:\:, \quad \lambda_{\min}(H) \times \underbrace{\frac{t^{1/2}}{(\log t)^{1/2}}}_{\text{amplification factor}} \times \underbrace{\frac{1}{20} \big[\max\{1 + C_\mu, \, \omega_{\min}\}\big]^{-1}}_{\text{constant not depending on $t$}} \Big\}
\end{align*}
\end{corollary}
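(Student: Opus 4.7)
The plan is to derive \cref{cor:lowerbound_simpler} by instantiating the free parameters $\alpha$ and $r$ in \cref{lem:lowerbound} in terms of $t$ so as to balance the two arms of the max in \cref{eqn:lower-bound}. The natural scaling to try is $r \asymp \sqrt{t \log t}$ and $\alpha \asymp \sqrt{\log t / t}$, since this is the regime in which $8r$, $\alpha t$, and the remaining exponentials in the denominator of the low-energy bound all end up on the same order of magnitude $\sqrt{t \log t}$, while simultaneously driving the Chernoff-type term $e^{-2r^2/t}$ in the high-energy bound down to $1/\poly(t)$. Concretely I would set $r := \lceil \sqrt{t \log t / 2}\rceil$ and $\alpha := \sqrt{\log t/t}$, so that $e^{-2r^2/t} \leq 1/t$ and $e^{-8r^2/t} \leq 1/t^4$.

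With these parameters fixed, I would proceed by a case split on the value of the auxiliary quantity $\Delta$, using the threshold $\Delta^\star = 1/2$. In the \emph{high-$\Delta$} case ($\Delta \geq 1/2$), I would use the first term in the max of \cref{eqn:lower-bound}: a direct substitution gives $\frac{2\alpha r}{t}(\Delta - e^{-2r^2/t}) \geq \frac{\sqrt{2}\log t}{t}(\tfrac{1}{2} - 1/t)$, and using $t \geq 10^5$ this comfortably exceeds $\frac{1}{3}\frac{\log t}{t}$, matching the first branch of the min in the corollary. In the \emph{low-$\Delta$} case ($\Delta < 1/2$), I would use either the ``special case'' bound $\lambda_{\min}(H^{(2t)}) \geq t(1-\Delta) \lambda_{\min}(H)$ (which is trivially stronger than the desired $\sqrt{t/\log t}$-type bound) or the second term in the max: plugging the chosen $\alpha, r$ into the denominator shows
\begin{align*}
1 + C_\mu + \omega_{\min}(1 + 8r + \alpha t + 2t e^{-8r^2/t}) \,\leq\, M \cdot (4\sqrt{2}+1)\sqrt{t\log t} \cdot (1 + o(1)),
\end{align*}
where $M := \max\{1+C_\mu, \omega_{\min}\}$, so the second term is at least $\frac{t(1-\Delta)}{M \cdot 6.67 \sqrt{t\log t}}\lambda_{\min}(H) \geq \frac{1}{20 M}\sqrt{t/\log t} \cdot \lambda_{\min}(H)$, matching the second branch.

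The only real technical obstacle I foresee is bookkeeping of constants: one must verify that the lower-order terms $2 + 10$ (coming from the ``$1$''s and the ceiling in $r$) are dominated by $\sqrt{t \log t}$ with enough slack to yield the precise constant $1/20$, which is exactly the role played by the hypothesis $t \geq 10^5$ (since $\sqrt{t \log t} \geq 10^3$ in this regime, the multiplicative correction is below $1\%$). No genuinely new ideas are required beyond \cref{lem:lowerbound}; the corollary is essentially an optimisation in two variables, and the chosen scaling is forced by the requirement that the $\alpha t$ and $8r$ contributions to the denominator be balanced against each other and against the requirement that $e^{-\Omega(r^2/t)}$ be at most $1/\poly(t)$.
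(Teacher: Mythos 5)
Your proposal is correct and follows essentially the same route as the paper: choose $r \asymp \sqrt{t\log t}$ and $\alpha \asymp \sqrt{\log t/t}$ so that $\alpha t$ and $8r$ balance at $\Theta(\sqrt{t\log t})$ while the Chernoff terms become $1/\poly(t)$, then case-split on $\Delta\gtrless\tfrac12$ and instantiate the two arms of the max in \cref{lem:lowerbound}. The paper uses $r=(t\log t)^{1/2}$ and $\alpha=r/t$ rather than your $r=\lceil\sqrt{t\log t/2}\rceil$, $\alpha=\sqrt{\log t/t}$ (and silently treats $r$ as a real number where you are more careful about the integrality constraint $r\in[t]$), but the constants all land comfortably inside $1/20$ under $t\geq 10^5$ either way, so the difference is cosmetic.
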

\begin{proof}
For simplicity, we write $\gamma = \lambda_{\min}(H)$.

We use \cref{lem:lowerbound} with the following parameter choices: 
\begin{align*}
r &= (t \log t)^{1/2} \, \\
\alpha &= \frac{r}{t} \,.
\end{align*}
Note that this choice of parameters gives us $e^{-2r^2/t} = \frac{1}{t^2}$.

Suppose that $\Delta \geq 1/2$. Then the first term in the max from \cref{lem:lowerbound} comes to
\begin{align}
\frac{2 \alpha r}{t}\Big(\Delta - e^{-2 r^2/t}\Big)
&=
\frac{2 r^2}{t^2}\Big(\frac{1}{2} - \frac{1}{t^2}\Big) \\
&= \frac{2\log t}{t} \Big(\frac{1}{2} - \frac{1}{t^2}\Big) \\
&\geq \frac{1}{3}\frac{\log t}{t}
\end{align}
assuming $\frac{1}{t^2} \leq \frac{1}{6}$.

Now suppose that $\Delta \leq 1/2$. Then the second term in the max from \cref{lem:lowerbound} is at least
\begin{align}
&\frac{t}{2} \cdot \gamma \cdot \Big[ \max\{1 + C_\mu, \, \omega_{\min}\} \cdot (1 + 8r + \alpha t + 2t \cdot e^{-8r^2/t}) \Big]^{-1} \\
&\geq
\frac{t}{2} \cdot \gamma \cdot \Big[ \max\{1 + C_\mu, \, \omega_{\min}\} \cdot (1 + 9t^{1/2} \log^{1/2}t  + 2t \cdot \frac{1}{t^8}) \Big]^{-1} \\
&\geq \frac{t}{2} \cdot \gamma \cdot \Big[ \max\{1 + C_\mu, \, \omega_{\min}\} \cdot (10t^{1/2} (\log t)^{1/2} \Big]^{-1} \quad \\
&\geq \gamma \cdot \underbrace{\frac{t^{1/2}}{(\log t)^{1/2}}}_{\text{amplification factor}} \cdot \underbrace{\frac{1}{20} \big[\max\{1 + C_\mu, \, \omega_{\min}\}\big]^{-1}}_{\text{constant not depending on $t$}}.
\end{align}
\end{proof}

\section{Iterated Amplification}
\label{sec:iterated_to_constant}
We can use \cref{thm:tp_amplification_intro} repeatedly to get to a QMA-hard family of (non-local) Hamiltonians with constant promise gap. Our starting point is a family of local Hamiltonians which is ``equitable", in the sense that it can be partitioned into a collection of $g = O(1)$ commuting terms.

\begin{assumption}[$k$-$\mathsf{LH[a, b; \omega]}$]\label{assumption:balanced_H}
    Let $\{H_n\}_{n\in \mathbb{N}}$ be a family of $k$-$\mathsf{LH}$, where each $H_n$ is an expectation over $\poly(n)$ many $k$-local projections on $n$ qubits. Further, assume 
    \begin{enumerate}
        \item \textbf{Soundness-Completeness Gap.} There exists a negligible\footnote{We use the notation $\mathsf{negl}(n)$ (negligible) to denote $O(n^{-i})$ for every $i$.}  function $\mu(n)$ and a positive polynomial $p(n)$ such that 
        \begin{equation}
            \text{Either } \lambda_{min}(H_n)\leq \mu(n) = a, \text{ or } \lambda_{min}(H_n)\geq p(n) = b
        \end{equation}

        \item \textbf{Equitable Coloring.} The projections in $H_n$ can be partitioned into $g = O(1)$ commuting layers, wherein the weights in each layer is roughly balanced. In particular, there exists an explicit constant
        \begin{equation}
          (\cref{eq:min-weight-def})\text{ }\omega\equiv \omega_{min} = O(1)
        \end{equation}
    \end{enumerate}
\end{assumption}

\begin{remark}
    Since it is not known whether $\mathsf{QMA}=\mathsf{QMA}_1$ (one-sided error), we require our amplification to start from a Hamiltonian family where the lowest eigenvalue is promised to either be negligibly close to 0, or at least inverse polynomially far from 0.
\end{remark}

In the subsequent \cref{section:qma_complete_equitable} we prove there exists a family of $k$-$\mathsf{LH[2^{-\poly(n)}, 1/\poly(n); O(1)]}$ for which it is $\mathsf{QMA}$-complete to distinguish between the high and low energy cases.

\begin{theorem}
    [Iterated Amplification] \label{theorem:iterated_amplification}
    Assume  $\{H_n\}_{n\in \mathbb{N}}$ be a family of $k$-$\mathsf{LH}[\mu(n), p(n); \omega]$ satisfying \cref{assumption:balanced_H}. Then, there exists an explicit constant $c(\omega)$ and a deterministic construction of a family of amplified Hamiltonians $\{\tilde{H}_n\}_n$ with the following parameters:
    \begin{enumerate}
        \item \textbf{Instance Size.} Each $\tilde{H}_n$ is an expectation of $\poly(n)$ projections over $\poly(n)$ many qubits. The locality of each amplified projection is $k\cdot p(n)^{O(1)}$.

        \item \textbf{Amplified Promise Gap.} If $\lambda_{min}(H_n)\geq p(n)$ then $\lambda_{min}(\tilde{H}_n)\geq c(\omega)$; if $\lambda_{min}(H_n)\leq \mu(n)$ then $\lambda_{min}(\tilde{H}_n)\leq \mu(n)\cdot p(n)^{20}$.
    \end{enumerate}
\end{theorem}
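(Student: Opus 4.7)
The plan is to apply \cref{thm:tp_amplification_intro} iteratively a total of $r = \Theta(\log(1/p(n)))$ times, each round with the parameter $t$ set to a sufficiently large constant, thereby boosting the inverse-polynomial promise gap to a constant while keeping all other resources polynomial. The construction is fully deterministic because the explicit expander family of \cref{lemma:expander_construction} is deterministic, and no randomness enters the iteration itself.

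The first and key step would be to verify that the class of Hamiltonians satisfying \cref{assumption:balanced_H} is closed under one application of the amplification map $H \mapsto H^{(2t)}$ of \cref{def:amplified_ham}. The weights $\{w_\chi\}$ are inherited unchanged, so $\omega_{\min}$ and the number of colours $g$ are preserved; and within each colour $\chi$ the new projective terms $\Pi_f^\chi$ are tensor products of mutually commuting projections drawn from the original $H_\chi$, hence remain mutually commuting. This closure is the essential feature that distinguishes our construction from \cite{aalv09} and is what makes the iteration well-defined at all.

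Next, I would track the recursion. Writing $\lambda_S^{(i)}$ and $\lambda_C^{(i)}$ for the soundness lower bound and completeness upper bound after $i$ rounds, \cref{thm:tp_amplification_intro} gives
\begin{equation}
\lambda_C^{(i+1)} \leq 2t \cdot \lambda_C^{(i)}, \qquad \lambda_S^{(i+1)} \geq \min\Big\{\, \Theta(\log t/t),\ C(\omega)\sqrt{t/\log t}\cdot \lambda_S^{(i)} \,\Big\},
\end{equation}
where $C(\omega)$ is the constant from \cref{cor:lowerbound_simpler}. Starting from $\lambda_C^{(0)} \leq \mu(n)$ and $\lambda_S^{(0)} \geq p(n)$, after $r$ rounds the soundness satisfies $\lambda_S^{(r)} \geq \min\{\Theta(\log t/t),\ (C(\omega)\sqrt{t/\log t})^r\cdot p(n)\}$. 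Choosing $r$ to be the smallest integer such that the second term exceeds the first gives $r = O(\log(1/p(n))/\log t)$ and yields $\lambda_S^{(r)} \geq c(\omega) := \Theta(\log t/t)$, a constant. For this $r$, the completeness is $\lambda_C^{(r)} \leq (2t)^r \mu(n) = \mu(n)\cdot p(n)^{-\log(2t)/\log(C(\omega)\sqrt{t/\log t})}$, and by taking $t$ large enough the exponent can be made as small as any prescribed constant, so $\lambda_C^{(r)}$ remains negligible.

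The remaining parameters grow by controlled multiplicative factors per round: locality from $k$ to $k(2t)^r = k\cdot p(n)^{-O(1)}$; qubits from $n$ to $n(2t)^r = \poly(n)$; and projective terms from $m$ to $m\cdot d^{2tr} = \poly(n)$, where $d$ is the expander degree. All three are polynomial in $n$, and the locality bound matches the statement. The main (and essentially the only non-routine) technical point is the preservation of the layered equitable structure across iterations, without which the hypotheses of \cref{thm:tp_amplification_intro} could not be reapplied; everything else is careful bookkeeping of the amplification factor $\sqrt{t/\log t}$ against the locality/completeness blow-up $2t$, together with a sufficiently large choice of the constant $t$.
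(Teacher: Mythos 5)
Your proposal follows essentially the same route as the paper: iterate the single-round amplification with a constant $t$ chosen large relative to $\omega_{\min}$ and $C_\mu$, track the soundness recursion to the constant threshold $\Theta(\log t / t)$ after $r = \Theta(\log p(n)/\log t)$ rounds, and account for the multiplicative blow-up $(2t)^r$ in locality, qubit count, and (via $d^{2tr}$) number of terms. Your explicit verification that \cref{assumption:balanced_H} is closed under the map $H \mapsto H^{(2t)}$ — that the weights, number of colours, and intra-layer commutativity of the tensor-product terms $\Pi_f^\chi$ are all preserved — is a welcome addition: the paper leans on this (it is what makes the induction legal) but only states it in a remark in the overview and never spells it out inside the proof.

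There is one small but genuine slip in your completeness bound. You claim that the exponent $\log(2t)/\log\!\big(C(\omega)\sqrt{t/\log t}\big)$ ``can be made as small as any prescribed constant'' by taking $t$ large. It cannot: as $t \to \infty$ the numerator is $\sim \log t$ and the denominator is $\sim \tfrac{1}{2}\log t$, so the exponent approaches $2$ from above. Your final conclusion (that $\lambda_C^{(r)}$ stays negligible, and in particular is bounded by $\mu(n)$ times a fixed polynomial in $n$) is still correct, because a bounded exponent suffices; but the claimed control is false as stated, and if you wanted to hit the paper's specific target of $\mu(n) \cdot p(n)^{20}$ you would need to argue boundedness rather than arbitrary smallness (the paper does this by imposing $t^{1/3} \geq \log t$ and $t > \eta^{-6}$, which forces the per-round amplification factor to be at least $t^{1/6}$ and hence $(2t)^\ell \leq 2t\cdot p(n)^{12}$). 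Everything else in your bookkeeping — the locality $k(2t)^r$, the qubit count $n(2t)^r$, the term count $m\, d^{2tr}$, and the stopping rule for $r$ — matches the paper's calculation.
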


\begin{proof}
We construct the family of Hamiltonians $\{\tilde H_n\}$ as follows. We will fix $n$ and not include it in subscripts for simplicity. We will fix an integer parameter $t$, and iteratively amplify to define a sequence of Hamiltonians $M_0, M_1\cdots M_\ell$:

\begin{equation}
    M_0 = H_n, \quad M_{i} = M_{i-1}^{(2t)} \text{ for } i\in [\ell]
\end{equation}

\noindent Note that upon each iteration, the weight parameter $\omega_{\min}$ is fixed. To simplify notation, let us fix the constant 

\begin{equation}
    \eta\equiv \frac{1}{20\cdot \max(1+C_\mu, \omega_{\min})}
\end{equation}
 Let us first consider the soundness. From the iterative application of \cref{cor:lowerbound_simpler}, we have 
\begin{equation}
    \lambda_{min}(M_\ell) \geq \text{min}\bigg[ \frac{1}{3}\frac{\log t}{t}, \quad \lambda_{min}(M_0)\times \bigg(\eta^2\cdot \frac{t}{\log t} \bigg)^{\ell/2} \bigg]
\end{equation}

\noindent To simplify the computation, we fix $t$ such that $t^{1/3}\geq \log t$ and $t > \eta^{-6}$, in addition to the conditions in \cref{cor:lowerbound_simpler}. Thereby, 

\begin{equation}
   \lambda_{min}(M_\ell) \geq \text{min}\bigg[ \frac{1}{3}\frac{\log t}{t}, \quad \lambda_{min}(M_0)\times  t^{\ell/6} \bigg]
\end{equation}

\noindent If $\lambda_{min}(H_n)\geq 1/p(n)$, then it suffices to pick the iteration $\ell$ to satisfy

\begin{equation}
  \ell = \bigg \lceil\frac{ 6\log  p(n) }{ \log t}\bigg\rceil \text{ to ensure } \lambda_{min}(M_\ell) \geq \frac{1}{3}\frac{\log t}{t}
\end{equation}

Under such a choice of $v$, we can now consider the completeness, number of qubits and clauses, and the locality of the amplified Hamiltonians. First, we observe that the number of qubits of $M_k$ is
\begin{equation}
    n\times (2t)^\ell \leq 2t\cdot n\cdot p(n)^{6\log 2t/\log t} \leq 2t n p(n)^{12}
\end{equation}

\noindent The number of clauses of $M_\ell$ is increased by a multiplicative factor of 

\begin{equation}
    d^{2t\cdot \ell} \leq d^{2t}\cdot p(n)^{12t/\log t}
\end{equation}
\noindent which is $\poly(n)$ so long as $t$ is a constant. Following the same calculation above, the locality of $M_\ell$ is bounded by 
\begin{equation}
    k\times (2t)^{\ell} \leq k\times 2t\times p(n)^{12}
\end{equation}

\noindent Finally, from \cref{lem:upperbound}, the completeness of the amplified Hamiltonian is given by
\begin{equation}
    \lambda_{min}(M_k)\leq (2t)^\ell \times \lambda_{min}(H_n) \leq 2t\mu(n)\cdot p(n)^{12}
\end{equation}

\end{proof}

We describe a simple consequence of this iterated amplification.

\begin{theorem}
    [A ``Streaming" Quantum PCP Theorem]
    There exists a family $\{H_n\}$ of Hamiltonians on $n$ qubits and an explicit constant $c$, wherein each term is an $O(n)$-fold tensor product of $O(1)$-local projections, such that it is $\mathsf{QMA}$-Complete to decide whether the ground state energy of $\{H_n\}$ is $\leq \mathsf{negl}(n)$ or $\geq c$.
\end{theorem}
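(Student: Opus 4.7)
The plan is to apply \cref{theorem:iterated_amplification} (Iterated Amplification) to a QMA-complete base family of the kind established in \cref{section:qma_complete_equitable}. That section constructs a family of $k$-local Hamiltonians satisfying \cref{assumption:balanced_H}---that is, with $k = O(1)$, a constant number $g$ of commuting layers, constant imbalance $\omega_{\min}$, completeness $\mu(n_0) = 2^{-\poly(n_0)}$, and soundness $1/p(n_0)$ for some polynomial $p$---for which distinguishing the two cases is QMA-complete. Invoking \cref{theorem:iterated_amplification} on this family, with the internal parameter $t$ fixed as a sufficiently large absolute constant (satisfying the conditions in \cref{cor:lowerbound_simpler} and in the proof of \cref{theorem:iterated_amplification}), produces an amplified family $\{\tilde H_{n_0}\}$ on $N = n_0 \cdot (2t)^\ell$ qubits with $\ell = O(\log p(n_0))$.

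It then remains to verify that $\{\tilde H_{n_0}\}$, re-indexed by its own qubit count $N$, matches the statement. The soundness bound of \cref{theorem:iterated_amplification} yields a constant lower bound $c = c(\omega_{\min}) > 0$ on $\lambda_{\min}(\tilde H_{n_0})$ in the no-case, while the completeness bound yields $\lambda_{\min}(\tilde H_{n_0}) \leq \mu(n_0) \cdot p(n_0)^{20}$ in the yes-case; the latter is negligible in $n_0$ (hence in $N$) since $\mu$ is negligible and $p$ polynomial. Both $N$ and the total number of projective terms of $\tilde H_{n_0}$ are polynomial in $n_0$, hence polynomial in $N$. QMA-hardness is inherited through the deterministic polynomial-time reduction provided by \cref{theorem:iterated_amplification}; QMA membership likewise follows from that of the base problem.

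The structural claim---that each summand is an $O(N)$-fold tensor product of $O(1)$-local projections---follows by unrolling \cref{def:amplified_ham} across the $\ell$ iterations. A single amplification sends any projection of the form $\Pi^{\mathrm{in}} = \mathbb{I} - \bigotimes_{s \in [T']}(\mathbb{I} - P_s)$ (with each $P_s$ being $k$-local) to the new projection $\mathbb{I} - \bigotimes_{j \in [2t]}(\mathbb{I} - \Pi^{\mathrm{in}}_{f(j)})$; expanding $\mathbb{I} - \Pi^{\mathrm{in}} = \bigotimes_s P_s$ under the outer tensor product shows that the new projection is again of the same ``$\mathbb{I}$ minus a tensor product of $k$-local projections'' form, now with $2t \cdot T'$ factors. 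Starting from $T'=1$ and iterating $\ell$ times therefore yields each summand in the form $\mathbb{I} - \bigotimes_{\vec j \in [2t]^\ell}(\mathbb{I} - \Pi_{\vec j})$, where the $\Pi_{\vec j}$ are $k = O(1)$-local projections from the original Hamiltonian, each acting on its own copy of the original $n_0$-qubit register among the $N = n_0 \cdot (2t)^\ell$ total qubits. Since $T = (2t)^\ell = N/n_0 \leq N$, this is the claimed $O(N)$-fold tensor product form.

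The main point I would want to verify carefully is that the soundness constant $c(\omega_{\min})$ does not degrade along the iteration: this relies on the observation that the amplification in \cref{def:amplified_ham} preserves both $g$ and the weights $\{w_\chi\}$ from step to step, so $\omega_{\min}$ is invariant, and the one-shot soundness constant in \cref{cor:lowerbound_simpler} depends only on $\omega_{\min}$ and on the fixed expander-family parameter $C_\mu$. Given this invariance, the streaming PCP statement is a matter of bookkeeping on top of \cref{theorem:iterated_amplification}.
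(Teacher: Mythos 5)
Your proof follows essentially the same route as the paper's: start from the equitable QMA-complete base family of \cref{corollary:equitable}, apply \cref{theorem:iterated_amplification} with a large constant $t$, observe that $\omega_{\min}$ and $g$ are preserved across iterations, and unroll \cref{def:amplified_ham} to exhibit each amplified term as $\mathbb{I}$ minus an $O(N)$-fold tensor product of $O(1)$-local projections. (Minor slip: $\mathbb{I} - \Pi^{\mathrm{in}} = \bigotimes_s(\mathbb{I} - P_s)$, not $\bigotimes_s P_s$; but each $\mathbb{I} - P_s$ is again a $k$-local projection, so the conclusion you draw is unaffected.)

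The one genuine gap is the claim that ``QMA membership likewise follows from that of the base problem.'' It does not: the base problem is a constant-locality Hamiltonian whose terms a verifier can measure directly, whereas the amplified terms are $O(N)$-local, so a priori there is no efficient verifier and QMA containment is not inherited. One needs a fresh argument that the special tensor-product structure of the amplified terms makes the energy of a witness efficiently estimable. The paper supplies this via \cref{lemma:ham_sim} (an $O(\log N)$-depth coherent-AND circuit implementing $e^{i\Pi T}$ for a tensor-product projection $\Pi$) together with phase estimation and Trotterization. Your write-up has already established the tensor-product structure, so the fix is short---for instance, you could argue directly that the ``sample a uniformly random term, coherently compute the AND of the $O(1)$-local projection outcomes, reject if it fires'' verifier is polynomial-time and has the required completeness and soundness---but it is a step that needs to be made explicit rather than inherited from the base problem.
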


\begin{proof}
    As the starting point to the amplification, we consider the family of ``equitable" local Hamiltonians ensured by \cref{corollary:equitable}, which has $\omega_{min} \geq 1/2\cdot 1/35$. \cref{theorem:iterated_amplification} then ensures the amplification up to constant soundness, while maintaining neglible completeness. It only remains to argue the containment in $\mathsf{QMA}$. To do so, we note that one can measure the energy of any candidate witness via phase estimation, for which it suffices (via trotterization) to argue that one can implement the Hamiltonian simulation of a tensor product of $O(n)$ $O(1)$-local projections, as guaranteed by \cref{lemma:ham_sim} below. 
\end{proof}

\begin{remark}
    Using a standard padding argument, the locality of the family of Hamiltonians can be reduced to $n^\epsilon$ for any constant $\epsilon$. 
\end{remark}

\begin{lemma}
    [Hamiltonian Simulation of Tensor-Product Hamiltonians]\label{lemma:ham_sim} Fix $T>0$, and let $\{\Pi_i\}_{i\in [a]}$ be a collection of $k$-local projections. Then, there exists a quantum circuit of $(k+1)$-local gates of depth $O(\log m)$ and size $O(m)$ which performs the Hamiltonian simulation $e^{i\Pi T}$ of the projection 
    \begin{equation}
        \Pi = \mathbb{I} - \bigotimes_i^m (\mathbb{I}-\Pi_i).
    \end{equation}
\end{lemma}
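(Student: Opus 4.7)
The plan is to implement $e^{iT\Pi}$ by coherently flagging whether $\Pi$ fires into a single ancilla, applying a diagonal phase controlled on that flag, and then uncomputing. I would start by allocating $m$ ancillas $a_1,\ldots,a_m$ initialized to $\ket{0}$. Since each $\Pi_i$ is a $k$-local projection, there is a $(k+1)$-local unitary $V_i$ (acting on the qubits in the support of $\Pi_i$ together with $a_i$) realizing the coherent measurement $\ket{\psi}\ket{0}_{a_i}\mapsto \Pi_i\ket{\psi}\ket{1}_{a_i}+(\mathbb{I}-\Pi_i)\ket{\psi}\ket{0}_{a_i}$. Because the factors in $\bigotimes_i(\mathbb{I}-\Pi_i)$ act on disjoint registers, all $m$ gates $V_1,\ldots,V_m$ can be applied in parallel in depth $1$.

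Next I would compute the logical OR of $a_1,\ldots,a_m$ into a fresh ancilla $b$ via a balanced binary tree of reversible OR gates. A single two-bit OR into a $\ket{0}$-initialized ancilla can be implemented with two CNOTs followed by one Toffoli (using the identity $a\vee b=a\oplus b\oplus ab$ over $\mathbb{F}_2$), each of which is at most $3$-local. Pairing up the ancillas and iterating gives a tree of depth $O(\log m)$ and size $O(m)$ whose root ancilla $b$ satisfies $b=1$ iff at least one $a_i=1$. I then apply the single-qubit diagonal gate $\mathrm{diag}(1,e^{iT})$ on $b$, and finally reverse the entire circuit (OR tree followed by the $V_i$'s) to disentangle and zero out all ancillas.

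Correctness follows by direct expansion: writing $\Pi_i^{1}=\Pi_i$ and $\Pi_i^{0}=\mathbb{I}-\Pi_i$, after the coherent measurements the joint state on principal register and ancillas is $\sum_{s\in\{0,1\}^m}\bigl(\bigotimes_i \Pi_i^{s_i}\bigr)\ket{\Psi}\ket{s}$, and the controlled phase multiplies every branch with $s\neq 0^m$ by $e^{iT}$. Uncomputation then returns all ancillas to $\ket{0}$ and leaves the principal register in $\bigl((\mathbb{I}-\Pi)+e^{iT}\Pi\bigr)\ket{\Psi}=e^{iT\Pi}\ket{\Psi}$, using that $\Pi$ is a projection and hence $e^{iT\Pi}=(\mathbb{I}-\Pi)+e^{iT}\Pi$. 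I do not anticipate a substantive obstacle here: the depth adds to $O(\log m)$ and the size to $O(m)$, while every gate is either a $(k+1)$-local $V_i$ or an at most $3$-local Toffoli/CNOT from the OR tree, so the overall locality is $\max(k+1,3)$, which equals $k+1$ in the typical case $k\geq 2$.
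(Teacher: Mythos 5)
Your proposal is correct and follows essentially the same route as the paper's proof: coherently flag each $\Pi_i$ into an ancilla, aggregate the flags into a single bit via a logarithmic-depth tree, apply the $\mathrm{diag}(1,e^{iT})$ phase on the aggregate bit, and uncompute. The one substantive point where you improve on the paper's wording is that the aggregate bit must be the \emph{OR} of the ancillas (since $\Pi$ fires iff at least one $\Pi_i$ fires), which you state and implement correctly, whereas the paper loosely writes ``AND''; you also spell out the branch-wise correctness and the identity $e^{iT\Pi}=(\mathbb{I}-\Pi)+e^{iT}\Pi$ explicitly, which the paper leaves implicit.
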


\begin{remark}
    The $k$-local gates can be simulated using $2$-local gates to arbitrary precision using the Solovay-Kitaev Theorem \cite{Nielsen_Chuang_2010}, up to cost exponential in $k$. 
\end{remark}

\begin{proof}
    To begin, note that one can coherently measure each clause $\Pi_{i}$ into an ancilla register initialized to $\ket{0}$, using a unitary $U_i$ on $k+1$ qubits which implements
    \begin{equation}
        U_{i} = \Pi_i\otimes  X  + (\mathbb{I} - \Pi_i)\otimes\mathbb{I}
    \end{equation}
    Subsequently we compute (coherently) the AND of all ancillas, using a tree of $\leq 2m$ ancillas and depth $\leq 1+\log m$. The root bit contains the outcome of a measurment of $\Pi$,  the tensor product of projections. One can now apply the single-qubit phase gate $\text{diag}(1, e^{iT})$, and subsequently uncompute all the ancillas, to concludes the circuit.
\end{proof}

\section{$\mathsf{QMA}$-Completeness of Equitable Local Hamiltonians}
\label{section:qma_complete_equitable}

\begin{claim}
    [Degree Reduction for FK Hamiltonians, \cite{Anshu2023CircuittoHamiltonianFT}]\label{claim:degree-reduction} Any $\mathsf{QMA}$ protocol involving an $n$-qubit verifier circuit $V$ with $T=\poly(n)$ two-qubit gates can be mapped into a $5-\mathsf{LH}[a, b]$ $H$ on $\poly(n)$ qubits with $a = 2^{-\poly(n)}$ and $b=a+1/\poly(n)$. Furthermore, 
    \begin{enumerate}
        \item each qubit is involved in at most 7 terms in the Hamiltonian.
        \item each Hamiltonian term is an unweighted projection.
    \end{enumerate}
\end{claim}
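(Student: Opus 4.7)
The plan is to adapt the circuit-to-Hamiltonian construction of \cite{Anshu2023CircuittoHamiltonianFT}, which itself builds on the standard Feynman--Kitaev (FK) construction. First, I would take the verifier circuit $V$ and normalise it into a canonical form: using Solovay--Kitaev, decompose $V$ into a circuit of two-qubit gates drawn from a fixed finite universal gate set, padding if necessary so that all gates are drawn from a constant-size library and every ancilla qubit is initialised to $|0\rangle$. Second, I would apply the standard FK history-state construction, yielding a $5$-local Hamiltonian $H_{\mathrm{FK}} = H_{\mathrm{in}} + H_{\mathrm{out}} + H_{\mathrm{prop}} + H_{\mathrm{clock}}$ on the computation and clock registers, whose ground energy encodes the acceptance probability of $V$. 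Each of the four families of terms can be written as a sum of $5$-local projectors (no negative coefficients) using the unary clock encoding, so the ``unweighted projection'' requirement is already essentially present at this step; any residual non-uniformity in the coefficients can be absorbed into the expectation $H = \mathbb{E}_i \Pi_i$ by duplicating terms an appropriate number of times.

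The main step is the degree reduction of \cite{Anshu2023CircuittoHamiltonianFT}. The obstacle here is that in the vanilla FK construction the clock qubits, as well as the data qubits that are acted upon by many gates, appear in $\Omega(T)$ terms. The fix is to replace each high-degree qubit by a constant-degree \emph{gadget}: a tree (or cycle) of freshly introduced auxiliary qubits connected by $5$-local equality / propagation check projections, such that in any low-energy state all leaves of the tree carry a consistent copy of the original qubit. Each term of the original FK Hamiltonian is then re-routed so that it touches a distinct leaf of the corresponding gadget, which ensures that every physical qubit (original data qubit, clock qubit, or auxiliary qubit) is involved in at most a bounded number of terms; \cite{Anshu2023CircuittoHamiltonianFT} carries out the counting and obtains the explicit bound $7$.

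Finally, I would check that the promise gap survives the transformation. Completeness is immediate: evaluating the resulting Hamiltonian on the history state, with auxiliary qubits set to the correct copies of their parent data qubit, yields energy $\leq 2^{-\poly(n)}$. Soundness is where the main technical work lies: one must argue that no low-energy state can simultaneously cheat on the gadget-consistency checks \emph{and} on the original FK checks. This is handled in \cite{Anshu2023CircuittoHamiltonianFT} by a Kitaev-style projection lemma / perturbative argument showing that the low-energy subspace of the gadget Hamiltonian coincides, up to inverse-polynomial error, with the span of ``consistent'' states, on which the rerouted FK terms behave identically to the original ones; this preserves Kitaev's $1/\poly(n)$ spectral gap. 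The principal obstacle throughout is this last point: controlling the multiplicative loss in the gap as each of the polynomially many high-degree qubits is expanded into its gadget, which is precisely what the fault-tolerance-inspired analysis of \cite{Anshu2023CircuittoHamiltonianFT} is tailored to handle.
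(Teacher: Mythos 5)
The paper does not prove this claim itself; it is cited as a black-box result from \cite{Anshu2023CircuittoHamiltonianFT}, and your reconstruction follows a genuinely different route from that paper while also containing a step that cannot work quantum-mechanically. The cited paper does not take the standard Feynman--Kitaev Hamiltonian $H_{\mathrm{in}}+H_{\mathrm{out}}+H_{\mathrm{prop}}+H_{\mathrm{clock}}$ and perform a post-hoc gadget-based degree reduction on it. Its central contribution is to replace the FK clock register altogether: the history of the computation is encoded in an injective tensor network (built from a fault-tolerant implementation of the verifier circuit), and the Hamiltonian is taken to be the parent Hamiltonian of that network. Bounded locality and bounded degree are then structural features of the construction from the outset, not the result of a subsequent reduction, and the gap analysis is a statement about the spectral gap of the parent Hamiltonian rather than a Kitaev-style projection lemma applied to a gadgetised FK Hamiltonian.

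Independently of the mismatch with the reference, the gadget you sketch fails quantum-mechanically. A tree of ``equality-check'' projections whose zero-energy subspace forces all leaves to ``carry a consistent copy'' of a qubit $q$ has ground space spanned by the GHZ-like states $\alpha\ket{0\cdots0}+\beta\ket{1\cdots1}$; the reduced state on any single leaf is $\mathrm{diag}(|\alpha|^2,|\beta|^2)$, not $\proj{\psi}$, so re-routing a term with off-diagonal content (e.g.\ one containing $\proj{+}$) to a leaf changes its expectation value and the resulting Hamiltonian does not simulate the original one even on the gadget ground space. More fundamentally, the set ``all leaves are in the same pure state'' is not a linear subspace (essentially by no-cloning), so it cannot be the exact ground space of any Hamiltonian. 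The FK-style fix that does work is to SWAP a heavily-used data qubit onto a fresh physical qubit after $O(1)$ uses, i.e.\ change the circuit layout so that no physical qubit touches more than a constant number of gates before being applied to the Feynman--Kitaev construction; but that too is not what the cited paper does.
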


Consider the graph $G = ([m], E)$ defined on the $m$ clauses/terms of the Hamiltonian, where two clauses are connected if they overlap on a qubit. Note that the degree of $G$ is $\leq 7\cdot 5$.

\begin{definition}
    An \textit{equitable coloring} of a graph $G=(V, E)$ on $g$ colors is a partition of the vertex set $V = V_1\cup V_2\cdots \cup V_g$ into $g$ disjoint subsets such that no two adjacent vertices have the same color, and furthermore the number of vertices per subset is balanced; $\forall i, j\in [n]$: $||V_i|-|V_j||\leq 1$.
\end{definition}

\begin{theorem}
[\cite{KIERSTEAD_KOSTOCHKA_2008}]\label{theorem:equitable_coloring}
    For any graph $G$ on $n$ vertices of maximum degree $d$, there exists an efficient algorithm to find an equitable coloring of $G$ on $d+1$ colors in time $\poly(n)$.
\end{theorem}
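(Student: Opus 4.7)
The plan is to follow the algorithmic version of the Hajnal--Szemer\'edi theorem established by Kierstead and Kostochka, which proceeds in two stages: first compute an arbitrary proper $(d+1)$-coloring of $G$, then iteratively rebalance the colour classes by chains of single-vertex recolorings until no two classes differ in size by more than one.

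For the initial stage, a greedy coloring in any vertex order produces a proper $(d+1)$-coloring $V = V_1\sqcup\cdots\sqcup V_{d+1}$, since at the time each vertex is coloured it has at most $d$ already-coloured neighbours and hence at least one of the $d+1$ colours is available. This takes time $O(nd)$ but may produce wildly imbalanced class sizes.

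For the rebalancing stage, set $t = \lfloor n/(d+1)\rfloor$ and repeatedly search for ``augmenting'' moves. Call a class \emph{large} if $|V_i|>t+1$ and \emph{small} if $|V_j|<t$. A \emph{move} transfers a vertex $v$ from its class $V_i$ to a class $V_j$ in which $v$ has no neighbours, preserving properness. I would build an auxiliary digraph on $[d+1]$ whose arcs $V_i \to V_j$ record the existence of at least one vertex in $V_i$ with no neighbour in $V_j$, and search by BFS for a directed walk from a large class to a small class; executing such a walk as a chain of moves decreases the total imbalance by one without breaking properness. The pathological case occurs when every vertex in some large class has a neighbour in every other class: here one resorts to paired swaps (a Kempe-like exchange of $v$ with a carefully chosen $u\in V_j$), and the Kierstead--Kostochka analysis shows that such a pair can always be found when the coloring is not yet equitable.

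The main obstacle is the combinatorial existence statement: proving that when the coloring is not yet equitable, an augmenting walk (or swap) always exists, and that iterating these augmentations terminates in polynomially many rounds. This is handled by a discharging-style argument over the degree-$d$ neighbourhoods in $G$, combined with a potential function that strictly decreases under each augmentation. With $d$ constant (as in our application, where $d\leq 35$), the resulting algorithm runs in time $\poly(n)$, as required.
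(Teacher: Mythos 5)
The paper does not give a proof of this statement: it is invoked as a black-box citation to Kierstead and Kostochka \cite{KIERSTEAD_KOSTOCHKA_2008}, so there is no internal proof to compare your plan against. Your outline does track the broad shape of the cited argument (start from a greedy proper $(d+1)$-coloring, then iteratively rebalance by moving vertices along directed walks in an auxiliary ``move'' digraph on colour classes, falling back to Kempe-style exchanges when no direct move exists), and the reduction to constant $d$ for your intended application is a sound observation.

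However, what you explicitly flag as the ``main obstacle'' --- that an augmenting walk or swap always exists whenever the coloring is not yet equitable, and that the iteration terminates in polynomially many rounds --- is precisely the entire technical content of the Hajnal--Szemer\'edi theorem and its constructive refinement, and cannot be dispatched by gesturing at ``a discharging-style argument combined with a potential function.'' Your description of the pathological case also understates it: in the Kierstead--Kostochka proof, when the overfull class is inaccessible one partitions the colour classes into accessible and inaccessible families, counts edges between them carefully against the degree bound $d$, and shows that either some vertex can be rescued by a Kempe chain (enlarging the accessible family) or a contradiction arises --- this is materially more involved than ``a Kempe-like exchange of $v$ with a carefully chosen $u \in V_j$.'' As a reconstruction of the cited algorithm the sketch is a fair summary, but as a proof it omits the one step that carries all the weight; the paper sidesteps this entirely by treating the result as an external theorem.
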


\begin{corollary}
    [$\mathsf{QMA}$-complete Layered Hamiltonians]\label{corollary:equitable}
    In the context of \cref{claim:degree-reduction}, the clauses $C = \{h_i\}_i$ of the local Hamiltonian $H = \sum_{i\in C} h_i$ can be partitioned into $O(1)$ subsets, where the clauses within each subset commute and the sizes of the subsets differ by at most 1. 
\end{corollary}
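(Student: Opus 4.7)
The plan is to apply the equitable coloring theorem (\cref{theorem:equitable_coloring}) to the \emph{conflict graph} of the clauses of the Hamiltonian $H$ produced by \cref{claim:degree-reduction}. Concretely, I would define $G=([m],E)$ on the vertex set $[m]$ indexing the clauses $\{h_i\}_{i\in[m]}$ of $H$, and place an edge $\{i,j\}\in E$ whenever the $5$-local supports of $h_i$ and $h_j$ overlap on at least one qubit.

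First, I would bound the maximum degree of $G$. By \cref{claim:degree-reduction}, every clause $h_i$ acts on at most $5$ qubits, and each qubit is touched by at most $7$ clauses. Hence for any fixed $i$, the clauses adjacent to $i$ in $G$ all share a qubit with one of the (at most) $5$ qubits in the support of $h_i$, and each of those qubits lies in at most $7$ clauses (one of which is $h_i$ itself). So the degree of $G$ is at most $5\cdot 7 - 1 = 34$, in particular a constant independent of $n$.

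Next, I would invoke \cref{theorem:equitable_coloring} with $d=34$ to obtain, in time $\poly(n)$, an equitable coloring of $G$ using $g\le 35$ colors, giving a partition $[m]=V_1\cup\cdots\cup V_g$ with $\bigl||V_\chi|-|V_{\chi'}|\bigr|\le 1$ for all $\chi,\chi'\in[g]$. By the defining property of a proper coloring, any two clauses $h_i,h_j$ with $i,j\in V_\chi$ for the same $\chi$ have disjoint qubit supports; as projections on disjoint registers, they act as tensor products and therefore commute. Setting $H_\chi=\mathbb{E}_{i\in V_\chi} h_i$ and $w_\chi=|V_\chi|/m$, this yields exactly the layered decomposition of \cref{definition:layered} with $g=O(1)$ layers and, by equitability, $\omega_{\min}\le g\cdot \max_\chi|V_\chi|/\min_\chi|V_\chi|\le 2g\le 70$, i.e.\ a constant.

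The main (and only) conceptual content is the application of the equitable coloring theorem; I do not anticipate any obstacle beyond carefully verifying the degree bound and noting that the equitability of the coloring translates directly into the size balance required in the statement. The only minor subtlety is ensuring the same conclusion holds even if \cref{claim:degree-reduction} is instantiated so that $h_i$ act on fewer than $5$ qubits (the degree bound of $35$ is an upper bound that accommodates this), and that the degree bound is worst-case so the argument is independent of $n$.
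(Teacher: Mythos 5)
Your proposal matches the paper's proof exactly: both define the conflict graph on the clauses (edge iff overlapping qubit support), bound its degree by a constant using the $5$-locality and bounded qubit-multiplicity of \cref{claim:degree-reduction}, apply \cref{theorem:equitable_coloring} to get an equitable proper coloring with $O(1)$ colors, and observe that same-color clauses have disjoint supports and hence commute. The only difference is cosmetic: the paper states the degree bound as $\leq 7\cdot 5=35$ while you sharpen slightly to $34$ (one could sharpen further to $30$ by discounting $h_i$ once per qubit), but this does not change anything.
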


\begin{proof}
    Consider the clauses graph $G$ of the family of Hamiltonians in \cref{claim:degree-reduction}. The equitable coloring of $G$ produced by \cref{theorem:equitable_coloring} gives the desired partition of the Hamiltonian terms. Since no two terms in the same subset overlap, they also commute. 
\end{proof}

\bibliography{main}

\begin{appendix}

\section{Amplification from the Detectability Lemma}
\label{section:dl}

In this section we present an alternative ``one-shot'' construction of a non-local Hamiltonian with constant gap, based on the detectability lemma and its converse.\footnote{We thank Anurag Anshu for pointing us to this construction.} Roughly speaking, for any integer $t$, we will define an amplification scheme which maps local Hamiltonians on $n$ qubits and $m$ clauses of (sufficiently small) inverse-polynomial promise gap $\gamma$, to a new Hamiltonian on $t\cdot n$ qubits and $O(1)$ clauses, with promise gap $\Theta(\gamma\cdot m \cdot t)$. While these results are conceptually similar to our derandomization scheme in \cref{theorem:lowerbound_simpler}, the new Hamiltonian will be $O(n\cdot t)$ local (i.e. fully global, see \cref{theorem:amp-from-dl} below). We refer the reader back to \cref{section:discussion} for a discussion.

We begin with a brief background on the detectability lemma and its converse.

\begin{lemma}
    [Detectability Lemma, e.g. {\cite[Lemma 2]{AAV16}}]\label{lemma:DL} Let $H = \mathbb{E}_{i\in[m]}\Pi_i$, where each $\Pi_i$ is a projection that commutes with all except at most $d$ of the other projections $\Pi_j$. Then, if $\lambda_{\mathsf{min}}(H)\geq \gamma$, we have for every state $\ket{\psi}$:
    \begin{equation}
        \big\|\prod_{i\in [m]} \big(\mathbb{I}-\Pi_{i}\big)\ket{\psi}\big\|^2 \leq (1+m\gamma/d^2)^{-1}\;.
    \end{equation}
\end{lemma}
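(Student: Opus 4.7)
Operationally, $\prod_{i \in [m]}(\1-\Pi_i)\ket\psi$ is the (unnormalized) post-selected state obtained by sequentially applying each ``unviolated'' projector $\1-\Pi_i$ to $\ket\psi$, and its squared norm is the probability that $\ket\psi$ evades detection on every round. The lemma is the contrapositive: if $\ket\psi$ has expected energy at least $\gamma$ under $H$, some test must fire with total probability at least $\Omega(m\gamma/d^2)$. My plan is to reduce the length-$m$ sequential product to a length-$O(d)$ product of commuting ``layer'' projectors via graph coloring, and then extract the claimed $m\gamma/d^2$ detection probability using a Jordan-style two-projection inequality.

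Consider the conflict graph on $[m]$ whose edges mark noncommuting pairs $[\Pi_i,\Pi_j]\neq 0$. By hypothesis its maximum degree is at most $d$, so a greedy proper $(d+1)$-coloring produces a partition $[m] = L_1\sqcup\cdots\sqcup L_g$ with $g \leq d+1$ and balanced parts $|L_j| = O(m/d)$. Within each class the projectors pairwise commute, so
\begin{equation*}
Q_j \,\coloneqq\, \prod_{i\in L_j}(\1-\Pi_i)
\end{equation*}
is itself a projector (onto the joint kernel of the $\Pi_i$'s in layer $j$), and reordering consistently with the coloring yields $\prod_{i\in[m]}(\1-\Pi_i) = Q_1 Q_2 \cdots Q_g$. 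A quick eigenvalue computation in the common eigenbasis of a layer shows $\1 - Q_j \geq \tfrac{1}{|L_j|}\sum_{i\in L_j}\Pi_i$, so the telescope
\begin{equation*}
\|\psi\|^2 - \|Q_g\cdots Q_1\psi\|^2 \;=\; \sum_{j=1}^{g}\bigl\|(\1-Q_j)\, Q_{j-1}\cdots Q_1 \psi\bigr\|^2
\end{equation*}
bounds the total detection probability in terms of the \emph{intermediate-state} expectations $\langle Q_{j-1}\cdots Q_1\psi \mid \Pi_i \mid Q_{j-1}\cdots Q_1\psi\rangle$ for $i \in L_j$.

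The difficulty is that these intermediate states are explicitly being rotated toward the ground space of $H$, so they need not inherit the $\gamma$ energy bound on $\ket\psi$. This is where the bounded-degree hypothesis enters crucially: I would invoke a quantitative Jordan/XY-type perturbation inequality of the form
\begin{equation*}
\|\Pi_i\, Q_{j-1}\cdots Q_1 \psi\|^2 \;\geq\; \|\Pi_i \psi\|^2 \,-\, c\,d\,\bigl(\|\psi\|^2 - \|Q_{j-1}\cdots Q_1 \psi\|^2\bigr),
\end{equation*}
which exploits the fact that $\Pi_i$ commutes with all but at most $d$ of the previously applied factors, so its expectation is perturbed at rate $O(d)$ per unit detection. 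Summing this over $i$ using $\sum_i \langle\psi|\Pi_i|\psi\rangle \geq m\gamma$, dividing by $|L_j| = O(m/d)$ as in the layer bound, and rearranging the resulting self-referential inequality yields $\|\psi\|^2 - \|Q_g\cdots Q_1\psi\|^2 \geq \Omega(m\gamma/d^2)\cdot \|Q_g\cdots Q_1\psi\|^2$, which is equivalent to the claimed $(1+m\gamma/d^2)^{-1}$ bound.

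The main obstacle is precisely the displayed noncommuting perturbation inequality in the third paragraph: it is where the $1/d^2$ scaling (as opposed to $1/d$) arises, with one $1/d$ coming from the coloring cost of the first step and the second $1/d$ coming from controlling how the prior layer projections disturb the energy density $\Pi_i$. I would model this step on the analysis of \cite{AAV16}, and in particular on how they handle the cross-layer correction. If this step could not be closed at rate $O(d)$, the fallback argument via the pure commuting-layer telescope would still give a bound of $(1+\gamma/d)^{-1}$, losing precisely the factor of $m$ that makes the detectability lemma useful for amplification applications.
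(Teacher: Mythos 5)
The paper cites this lemma from \cite{AAV16} without reproving it, so there is no in-paper proof to compare against; I evaluate your proposal against the standard Anshu--Arad--Vidick argument. Your skeleton --- proper coloring of the conflict graph into $g\le d+1$ commuting layers, the layer projectors $Q_j=\prod_{i\in L_j}(\1-\Pi_i)$, and the telescoping identity $\|\psi\|^2 - \|Q_g\cdots Q_1\psi\|^2 = \sum_j \|(\1-Q_j)Q_{j-1}\cdots Q_1\psi\|^2$ --- is correct and standard (though a greedy coloring does not give balanced parts; one would need Hajnal--Szemer\'edi, as elsewhere in this paper, and in any case balance turns out not to be needed).

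The genuine gap is the displayed perturbation inequality, and it is structural rather than merely unproved. First, the inequality is false at the claimed linear rate: applying a single noncommuting factor $(\1-\Pi_l)$ to a state $\ket{\sigma}$ can shift the $\Pi_i$-amplitude by up to $\|\Pi_i\Pi_l\sigma\|\le\|\Pi_l\sigma\| = \sqrt{\text{step loss}}$, so iterating over $\le d$ noncommuting factors costs $O(\sqrt{d\cdot\text{loss}})$ in the \emph{norm}, not $O(d\cdot\text{loss})$ in the squared norm, as you assert. Second, and more fundamentally, your strategy propagates the energy bound on the \emph{initial} state forward and combines it with the pointwise bound $\1-Q_j \ge \frac{1}{|L_j|}\sum_{i\in L_j}\Pi_i$; even granting the perturbation inequality, summing $\frac{1}{|L_j|}\sum_{i\in L_j}\|\Pi_i\psi\|^2$ over layers gives $\approx \frac{g}{m}\cdot m\gamma = g\gamma$, which after absorbing the correction terms yields a total detection of order $\gamma/d$, not $m\gamma/d^2$. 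The factor of $m$ you identify as the cost of your ``fallback'' is in fact lost throughout this route: averaged, initial-state energy cannot see the multiplicative amplification across $m$ noncommuting terms.

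The standard proof inverts the direction. One sets $\ket{\phi} = Q_g\cdots Q_1\ket{\psi}$ and uses $\lambda_{\min}(H)\ge\gamma$ on the normalized \emph{final} state: $\sum_i\|\Pi_i\phi\|^2 \ge m\gamma\|\phi\|^2$, with no perturbation lemma needed on that side. The work goes into the reverse estimate: for $i\in L_j$ one has $\Pi_iQ_j\cdots Q_1\psi=0$, so commuting $\Pi_i$ past every factor with which it commutes expresses $\Pi_i\phi$ as a signed sum of at most $d$ terms of the form $(\cdots)\,\Pi_i\Pi_l\,(\cdots)$ with $l$ ranging over the noncommuting neighbors of $i$; Cauchy--Schwarz then gives $\|\Pi_i\phi\|^2 \le d\sum_{l}(\text{local loss at }l)$, and summing over $i$ charges each local loss at most $d$ times, yielding $\sum_i\|\Pi_i\phi\|^2 \le d^2(\|\psi\|^2-\|\phi\|^2)$. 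Combining with the energy lower bound on $\ket\phi$ and rearranging gives $\|\phi\|^2\le (1+m\gamma/d^2)^{-1}$. To repair your proof you should switch to this final-state-energy viewpoint; the forward-propagation approach does not close.
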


\begin{lemma}
    [The Converse to the DL, e.g. {\cite[Lemma 4]{AAV16}}]\label{lemma:DL-conv} Let $A_1, \cdots A_g$ be projections. Then, if $\ket{\psi}$ is a state satisfying for some $\epsilon\in [0, 1]$,
    \begin{equation}
        \big\|\prod_{\chi\in [g]}  A_i \ket{\psi}\big\|^2    \leq 1-\epsilon \;,
    \end{equation}
    then there exists a $\chi \in [g]$ such that $ \|A_\chi\ket{\psi}\|^2\leq 1-\frac{\epsilon}{4g}$.
\end{lemma}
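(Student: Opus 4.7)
My plan is to combine a telescoping operator identity with pigeonhole, then convert the resulting bound on an intermediate vector to a bound on $\ket{\psi}$ itself. Setting $B_j := \mathbb{I}-A_j$, one verifies, by iterating the identity $\mathbb{I} - A_1 X A_1 = (\mathbb{I}-A_1) + A_1(\mathbb{I}-X)A_1$ and peeling off one projector at a time, that
\begin{equation*}
\mathbb{I} \,-\, (A_g\cdots A_1)^\dagger (A_g\cdots A_1) \;=\; \sum_{j=1}^g A_1\cdots A_{j-1}\,B_j\,A_{j-1}\cdots A_1 \,.
\end{equation*}
Evaluating both sides on $\ket{\psi}$ and using $B_j^2=B_j$ yields the telescoping relation
\begin{equation*}
1 - \|A_g\cdots A_1\ket{\psi}\|^2 \;=\; \sum_{j=1}^g \|B_j\,A_{j-1}\cdots A_1\ket{\psi}\|^2 \;\geq\; \epsilon \,,
\end{equation*}
so by pigeonhole there is an index $j^\star \in [g]$ for which $\|B_{j^\star}A_{j^\star-1}\cdots A_1\ket{\psi}\|^2 \geq \epsilon/g$.

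Next I would set up a contradiction to upgrade this to a bound involving $\ket{\psi}$: assume $\|B_\chi\ket{\psi}\|^2 < \epsilon/(4g)$ for every $\chi$. Then every intermediate vector $\ket{\phi_j} := A_{j-1}\cdots A_1\ket{\psi}$ is close to $\ket{\psi}$, via the dual telescoping expansion
\begin{equation*}
\ket{\psi} - A_{j-1}\cdots A_1\ket{\psi} \;=\; \sum_{k<j} A_{j-1}\cdots A_{k+1}\,B_k\ket{\psi} \,,
\end{equation*}
which together with $\|A_{j-1}\cdots A_{k+1}\|\leq 1$ bounds $\|\ket{\phi_j}-\ket{\psi}\|$ by $\sum_{k<j}\|B_k\ket{\psi}\|$. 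Comparing $\|B_j\ket{\phi_j}\|$ to $\|B_j\ket{\psi}\|$ via the triangle inequality, re-summing over $j$, and contrasting with the lower bound $\sum_j\|B_j\ket{\phi_j}\|^2 \geq \epsilon$ should yield a contradiction, producing some $\chi$ with $\|A_\chi\ket{\psi}\|^2 \leq 1 - \epsilon/(4g)$.

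The main obstacle is precisely this final conversion step. A naive triangle/Cauchy--Schwarz combination only gives $\sum_j\|B_j\ket{\phi_j}\|^2 \lesssim g^2\sum_k\|B_k\ket{\psi}\|^2$, which would yield only $\max_\chi\|B_\chi\ket{\psi}\|^2 \geq \Omega(\epsilon/g^3)$ rather than the claimed $\epsilon/(4g)$. To recover the correct linear-in-$g$ scaling I would instead proceed by induction on $g$: if $\|A_1\ket{\psi}\|^2 \leq 1-\epsilon/(4g)$ we are done, otherwise $\ket{\psi}$ lies within distance $\sqrt{\epsilon/(4g)}$ of $\tilde\psi := A_1\ket{\psi}/\|A_1\ket{\psi}\|$, and the hypothesis transfers to the $(g-1)$-fold product $A_g\cdots A_2$ applied to $\tilde\psi$ with a controlled loss in the promise gap. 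Carefully tracking the accumulated loss through the $g$ recursive applications -- so that the amplification constant does not blow up geometrically -- is the delicate part, and the factor of $4$ in the denominator arises from balancing this recursion.
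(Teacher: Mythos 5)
The paper does not prove this lemma itself, but cites \cite{AAV16}; still, your proposal as written has a genuine gap in the conversion step, and the inductive fix you sketch does not close. Your first half is fine: the operator identity $\mathbb{I} - (A_g\cdots A_1)^\dagger(A_g\cdots A_1) = \sum_j A_1\cdots A_{j-1}B_j A_{j-1}\cdots A_1$ and the resulting $\sum_j \|B_j A_{j-1}\cdots A_1\ket{\psi}\|^2 = 1-\|A_g\cdots A_1\ket{\psi}\|^2 \geq \epsilon$ are both correct, as is the ``dual'' telescoping. But the proposed recursion fails quantitatively. Take the adversarial case $\|B_1\ket{\psi}\|^2 = \delta$ just barely below $\epsilon/(4g)$. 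Then $\|\ket{\psi}-\tilde\psi\|$ is already of order $\sqrt{\epsilon/(4g)}$, i.e.\ of the same order as the target. The inductive step on $\tilde\psi$ with degraded promise $\epsilon' \geq \epsilon(1-\tfrac{1}{4g})$ yields some $\chi$ with $\|B_\chi\tilde\psi\| \geq \sqrt{\epsilon'/(4(g-1))}$, whose ratio to the target $\sqrt{\epsilon/(4g)}$ is about $\sqrt{g/(g-1)} = 1 + O(1/g)$; this is nowhere near the factor $2$ you would need to absorb the triangle-inequality loss $\|\ket{\psi}-\tilde\psi\| \approx \sqrt{\epsilon/(4g)}$. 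The accumulated loss really does blow up, and there is no choice of constants that rescues this recursion.

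The missing idea is the Cauchy--Schwarz ``bootstrap'' that is the heart of the AAV16 proof. Set $P = A_g\cdots A_1$ and $\ket{\phi} = P\ket{\psi}$. First, since $\mathrm{Re}\braket{\psi}{\phi} \leq \|\phi\|$ and $\|\phi\|\leq 1$,
\begin{equation*}
1-\|\ket{\phi}\|^2 = (1-\|\ket{\phi}\|)(1+\|\ket{\phi}\|) \leq 2\bigl(1-\mathrm{Re}\braket{\psi}{\phi}\bigr) = 2\,\mathrm{Re}\,\bra{\psi}(\mathbb{I}-P)\ket{\psi} \,.
\end{equation*}
Next, using the telescoping $\mathbb{I}-P = \sum_j B_j A_{j-1}\cdots A_1$ (the version without the trailing $A_{j+1}\cdots A_g$), inserting $B_j=B_j^2$, and applying Cauchy--Schwarz twice (once per term, once over the sum over $j$),
\begin{equation*}
\mathrm{Re}\,\bra{\psi}(\mathbb{I}-P)\ket{\psi} = \sum_j \mathrm{Re}\,\braket{B_j\psi}{B_j A_{j-1}\cdots A_1\psi} \leq \Bigl(\sum_j\|B_j\ket{\psi}\|^2\Bigr)^{1/2}\Bigl(\sum_j\|B_j A_{j-1}\cdots A_1\ket{\psi}\|^2\Bigr)^{1/2} \,.
\end{equation*}
The second factor is exactly $\sqrt{1-\|\ket{\phi}\|^2}$ by your first telescoping. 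Combining the two displays and dividing by $\sqrt{1-\|\ket{\phi}\|^2}$ yields $1-\|\ket{\phi}\|^2 \leq 4\sum_j\|B_j\ket{\psi}\|^2$, hence $\sum_j\|B_j\ket{\psi}\|^2 \geq \epsilon/4$ with \emph{no} dependence on $g$; pigeonhole then gives $\|B_\chi\ket{\psi}\|^2 \geq \epsilon/(4g)$ for some $\chi$, which is the claim. The point you should internalise is that you must not estimate each $\|B_j A_{j-1}\cdots A_1\ket{\psi}\|$ in terms of $\|B_j\ket{\psi}\|$ individually --- that always costs a factor of $g$ --- but rather keep the quantity $\sum_j\|B_j A_{j-1}\cdots A_1\ket{\psi}\|^2$ intact and recognise it as $1-\|\ket{\phi}\|^2$, the very thing you are trying to bound, so that the inequality closes on itself.
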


For simplicity of presentation, in this section we fix our attention to local Hamiltonians which are expectations over projections $H = \frac{1}{m}\sum_i \Pi_i$; where further we assume each projection does not commute with at most $(g-1)$ other projections. By \cref{theorem:equitable_coloring} this implies that the local terms of $H$ can be efficiently (and equitably) partitioned into $g$ commuting layers/colors (\cref{definition:layered}). We express the terms in each layer $\chi\in [g]$ as:
\begin{equation}
    H^\chi = \mathbb{E}_{i\in [m_\chi]}\Pi_i^{\chi}
\end{equation}
and associate a ``weight'' $w_\chi = m_\chi/m =\Omega(g^{-1})$ to be the number of clauses labeled the color $\chi.$ For any integer $t$, we consider the amplified Hamiltonian defined by tensor products of the DL operators applied to each layer: 
\begin{equation}
    H^{(t)} = \mathbb{I} - \mathbb{E}_\chi \bigotimes_j^t\Big(\prod_{i\in [m_\chi]} (\mathbb{I}-\Pi_i^\chi)\Big) \;,\label{eq:amp-H-DL}
\end{equation}
where as before, $ \mathbb{E}_\chi$ indicates a sample $\chi$ from the distribution $(w_1, w_2, \cdots w_g)$ over $[g]$.
The amplified Hamiltonian is Hermitian because the projections appearing in the product $\prod_{i\in [m_\chi]} (\mathbb{I}-\Pi_i^\chi)$ all commute, since they belong to the same layer.

\begin{theorem}
    [Amplification from the DL] \label{theorem:amp-from-dl} Assume that the minimum eigenvalue of $H$ is $\lambda_{\mathsf{min}}(H)\geq \gamma$. Then, the minimum eigenvalue of the amplified Hamiltonian $H^{(t)}$ from \autoref{eq:amp-H-DL} satisfies
    \begin{equation}
        \lambda_{\mathsf{min}}(H^{(t)}) \geq \min\bigg( \Theta\big(g^{-2}\big), \quad \Omega\bigg( \frac{t\cdot m\cdot \gamma}{g^4}\bigg)\bigg)\;.
    \end{equation}
\end{theorem}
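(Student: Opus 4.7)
The plan is to view the amplified Hamiltonian as $H^{(t)} = \mathbb{I} - \E_\chi L_\chi$, where $L_\chi \coloneqq P_\chi^{\otimes t}$ and $P_\chi \coloneqq \prod_{i \in [m_\chi]}(\mathbb{I} - \Pi_i^\chi)$ is a single-copy projection (since the layer-$\chi$ projections mutually commute). I will upper bound the operator norm of $\E_\chi L_\chi$ by sandwiching it between \cref{lemma:DL-conv} (converse DL) applied on the $t$-copy space and \cref{lemma:DL} (DL) applied to a suitably chosen factorisation of the original Hamiltonian.

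First, suppose $\ket{\Psi}$ is a unit vector witnessing $\bra{\Psi}(\E_\chi L_\chi)\ket{\Psi} \geq 1 - \epsilon$. Because each $L_\chi$ is a projection bounded by $\mathbb{I}$, the rearrangement $\sum_\chi w_\chi(1 - \bra{\Psi}L_\chi\ket{\Psi}) \leq \epsilon$ together with $w_\chi = \Omega(1/g)$ forces $\|L_\chi \ket{\Psi}\|^2 = \bra{\Psi} L_\chi\ket{\Psi} \geq 1 - O(\epsilon g)$ for \emph{every} $\chi \in [g]$. Applying the contrapositive of \cref{lemma:DL-conv} to the $g$ projections $\{L_\chi\}_{\chi\in[g]}$ acting on the $t$-copy Hilbert space then yields
\[
\Big\|\prod_{\chi \in [g]} L_\chi \ket{\Psi}\Big\|^2 \geq 1 - O(\epsilon g^2).
\]

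Second, since operators on distinct copies commute and $(P_\chi)^{\otimes t}(P_{\chi'})^{\otimes t} = (P_\chi P_{\chi'})^{\otimes t}$, I can rewrite $\prod_\chi L_\chi = A^{\otimes t}$ where $A \coloneqq \prod_\chi P_\chi = \prod_\chi \prod_{i \in [m_\chi]}(\mathbb{I}-\Pi_i^\chi)$. Read layer by layer, $A$ is exactly the DL operator for $H$: within each layer the factors commute, so the order of that inner product is immaterial. Invoking \cref{lemma:DL} (where the non-commutation degree satisfies $d \leq g-1$ by the standing hypothesis of the section) gives $A^\dagger A \leq (1 + m\gamma/d^2)^{-1}\mathbb{I}$ as operators, whence $\|A^{\otimes t}\ket{\Psi}\|^2 = \bra{\Psi}(A^\dagger A)^{\otimes t}\ket{\Psi} \leq (1 + m\gamma/d^2)^{-t}$.

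Chaining the two inequalities forces $1 - O(\epsilon g^2) \leq (1 + m\gamma/d^2)^{-t}$, which via the elementary estimate $1 - (1+x)^{-t} \geq \min(tx/2,\,1/2)$ (from Bernoulli's inequality) and the bound $d^2 \leq g^2$ unwinds to the claimed $\lambda_{\min}(H^{(t)}) = \epsilon \geq \Omega(\min(g^{-2},\, tm\gamma/g^4))$. The substantive step is the identification $\prod_\chi L_\chi = A^{\otimes t}$, which upgrades the single-copy DL bound into a statement whose strength compounds as a $t$-th power and thereby delivers the $t$-fold amplification; the only mild subtlety is checking that the order of factors in $A$ is admissible for \cref{lemma:DL}, which is automatic from the layered structure of $H$.
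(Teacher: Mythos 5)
Your proof is correct and follows essentially the same route as the paper: it uses the same two ingredients (the DL to bound $\|A^{\otimes t}\ket{\Psi}\|^2$ from above and the converse DL relating the norm of $\prod_\chi L_\chi$ to the norms of the individual $L_\chi$), and hinges on the same key identity $\prod_\chi P_\chi^{\otimes t} = (\prod_\chi P_\chi)^{\otimes t}$. The only difference is presentational---you apply the converse DL in contrapositive form, assuming the energy is small and deriving that $\|\prod_\chi L_\chi\ket{\Psi}\|^2$ must be close to 1, whereas the paper runs the argument forward by using the converse DL to exhibit a single layer $\chi^*$ whose projector $A_{\chi^*}$ already contributes enough energy; these are logically equivalent and give the same bound.
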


\begin{proof}
    From \cref{lemma:DL}, we have that for any state $\ket{\psi}\in \mathbb{C}^{2^n}$ on a single copy of the system,
    \begin{align*}
         \bigg\|\prod_{\chi\in [g]}\prod_{i\in [m_\chi]} \bigg(\mathbb{I}-\Pi_{i}^\chi\bigg)\ket{\psi}\bigg\|^2 \leq (1+m\gamma/g^2)^{-1}\;.
    \end{align*}
    This can be amplified by taking tensor products since the operator norm is sub-multiplicative. That is, for any $\ket{\psi}\in (\mathbb{C}^{2^n})^{\ot t}$ on $t$ copies of the original system:
    \begin{align*}
         \bigg\|\bigotimes_{j=1}^t\prod_{\chi\in [g]}\prod_{i\in [m_\chi]} \bigg(\mathbb{I}-\Pi_{i}^\chi\bigg)_{\mathsf{reg}[j]}\ket{\psi}\bigg\|^2 \leq (1+m\gamma/g^2)^{-t}\;.
    \end{align*}
    Next, we observe that the order of the tensor product can be exchanged with that of the product over colors. As a consequence, we can introduce the operators $A_\chi = \otimes_j^t\prod_{i\in [m_\chi]} \big(\mathbb{I}-\Pi_{i}^\chi\big)_{\mathsf{reg}[j]}$, and re-express the operator above as
    \begin{equation*}
        \bigotimes_{j=1}^t\prod_{\chi\in [g]}\prod_{i\in [m_\chi]} \bigg(\mathbb{I}-\Pi_{i}^\chi\bigg)_{\mathsf{reg}[j]} = \prod_{\chi\in [g]} A_\chi \;.
    \end{equation*}
    Each $A_\chi$ is a product of \emph{commuting} projections; hence, $A_\chi$ itself is also a projection.
    We can therefore apply \cref{lemma:DL-conv}, which implies that there must exist one layer labeled by index $\chi$ with
    \begin{equation*}
        \bigg\|\bigotimes_j^t\prod_{i\in [m_\chi]} \bigg(\mathbb{I}-\Pi_{i}^\chi\bigg)_{\mathsf{reg}[j]}\ket{\psi}\bigg\|^2 \leq 1-\frac{1}{4g}\bigg(1-\frac{1}{(1+m\gamma/g^2)^{t}}\bigg)\;.
    \end{equation*}
    To conclude, we obtain that the energy of $\ket{\psi}$ under the amplified Hamiltonian is at least
    \begin{align*}
        \bra{\psi}H^{(t)}\ket{\psi} &= 1-  \mathbb{E}_\chi\bra{\psi}\bigotimes_j^t\prod_{i\in [m_\chi]} (\mathbb{I}-\Pi_i^\chi)\ket{\psi}\\
        &\geq \frac{\min_\chi w_\chi}{ 4g}\cdot \bigg(1-\frac{1}{(1+m\gamma/g^2)^{t}}\bigg)\;.
    \end{align*}
    The claimed result then follows by considering the regimes of $m\gamma t/g^2 \leq 0.1$ and $\geq 0.1$ separately.
    
\end{proof}

\end{appendix}

\end{document}